\documentclass[reqno,11pt]{amsart}
\usepackage{amsmath, latexsym, amsfonts, amssymb, amsthm, amscd}
\usepackage{enumitem}
\usepackage[foot]{amsaddr}

\usepackage{verbatim}

\usepackage{xcolor}
\usepackage{enumitem}
\usepackage{graphics,epsf,psfrag}
\usepackage{graphicx}
\usepackage{subcaption}
\usepackage{hyperref}

\setlength{\oddsidemargin}{5mm}
\setlength{\evensidemargin}{5mm}
\setlength{\textwidth}{150mm}
\setlength{\headheight}{0mm}
\setlength{\headsep}{12mm}
\setlength{\topmargin}{0mm}
\setlength{\textheight}{220mm}
\setlength{\marginparwidth}{25mm}
\setcounter{secnumdepth}{2}

\numberwithin{equation}{section}

\newtheorem{theorem}{Theorem}[section]
\newtheorem{lemma}[theorem]{Lemma}
\newtheorem{proposition}[theorem]{Proposition}
\newtheorem{cor}[theorem]{Corollary}
\newtheorem{rem}[theorem]{Remark}

\DeclareMathOperator{\Log}{\mathrm{Log}}
\DeclareMathOperator{\dist}{\mathrm{dist}}

\newcommand{\ind}{\mathbf{1}}

\renewcommand{\tilde}{\widetilde}

\newcommand{\cP}{{\ensuremath{\mathcal P}} }

\newcommand{\cC}{{\ensuremath{\mathcal C}} }

\newcommand{\cL}{{\ensuremath{\mathcal L}} }
\newcommand{\cT}{{\ensuremath{\mathcal T}} }
\newcommand{\cD}{{\ensuremath{\mathcal D}} }

\newcommand{\bP}{{\ensuremath{\mathbf P}} }
\newcommand{\bE}{{\ensuremath{\mathbf E}} }

\newcommand{\bN}{{\ensuremath{\mathbf N}} }

%\newcommand{\qed}{\hfill $\quad \Box$ \bigskip}

%%%%%%%%%%%%%%%%%%%%%%%%%%%%%%%%%%%%%%%%%%%%%%%%%%%%%%%%%%%%%%%%%%%%%%%%%%%%%%
%%%%%%%%%% Some definitions and redefinitions which I like. We can remove them
%%%%%%%%%% if you want...
%%%%%%%%%%%%%%%%%%%%%%%%%%%%%%%%%%%%%%%%%%%%%%%%%%%%%%%%%%%%%%%%%%%%%%%%%%%%%%
\DeclareMathSymbol{\leqslant}{\mathalpha}{AMSa}{"36} % nicer `smaller or equal'
\DeclareMathSymbol{\geqslant}{\mathalpha}{AMSa}{"3E} % nicer `larger or equal'
\DeclareMathSymbol{\eset}{\mathalpha}{AMSb}{"3F}     % nicer `emptyset'
                   % redef. of < or =
                   % redef. of > or =
\newcommand{\dd}{\,\text{\rm d}}             % a straight d for differentials
       % \sum-like symbol for union
       % \sum-like symbol for inter
 % max with 2 lines
 % min with 2 lines
 % sup with 2 lines
\newcommand{\inftwo}[2]{\inf_{\substack{#1 \\ #2}}} % inf with 2 lines
\newcommand{\sumtwo}[2]{\sum_{\substack{#1 \\ #2}}} % sum with 2 lines
 % sum with 3 lines
 % union with 2 lines
 % inter with 2 lines
     % \int with 2 lines
\newcommand{\limtwo}[2]{\lim_{\substack{#1 \\ #2}}}     % \lim with 2 lines
 % liminf 2 lines
 % limsup 2 lines
     % product 2 lines
 % prod. 3 lines

\DeclareMathOperator{\Arg}{Arg}

%%%%%%%%%%%%%%%%%%%%%%%%%%%%%%%%%%%%%%%%%%%%%%%%%%%%%%%%%%%%%%%%%%%%%%%%%%%%%%
%%%%%%%%%%%% Blackboard bolds
%%%%%%%%%%%%%%%%%%%%%%%%%%%%%%%%%%%%%%%%%%%%%%%%%%%%%%%%%%%%%%%%%%%%%%%%%%%%%%

\newcommand{\bbC}{{\ensuremath{\mathbb C}} }

\newcommand{\bbE}{{\ensuremath{\mathbb E}} }

\newcommand{\bbN}{{\ensuremath{\mathbb N}} }

\newcommand{\bbP}{{\ensuremath{\mathbb P}} }

\newcommand{\bbR}{{\ensuremath{\mathbb R}} }

\newcommand{\bbZ}{{\ensuremath{\mathbb Z}} }

%%%%%%%%%%%%%%%%%%%%%%%%%%%%%%%%%%%%%%%%%%%%%%%%%%%%%%%%%%%%%%%%%%%%%%%%%%%%%%
%%%%%%%%%%%% Greek letters
%%%%%%%%%%%%%%%%%%%%%%%%%%%%%%%%%%%%%%%%%%%%%%%%%%%%%%%%%%%%%%%%%%%%%%%%%%%%%%

\newcommand{\ga}{\alpha}
\newcommand{\gb}{\beta}
            % \gg already exists...
\newcommand{\gd}{\delta}
\newcommand{\gep}{\varepsilon}       % \ge already exists...
\newcommand{\gp}{\varphi}

\newcommand{\gz}{\zeta}

\newcommand{\gD}{\Delta}

\newcommand{\go}{\omega}

\newcommand{\gl}{\lambda}

\newcommand{\gs}{\sigma}

%%%%%%%%%%%%%%%%%%%%%%%%%%%%%%%%%%%%%%%%%%%%%%%%%%%%%%%%%%%%%%%%%%%%%%%%%%%%%%
%%%%%%%%%% Figures
%%%%%%%%%%%%%%%%%%%%%%%%%%%%%%%%%%%%%%%%%%%%%%%%%%%%%%%%%%%%%%%%%%%%%%%%%%%%%%
\makeatletter
\def\captionfont@{\footnotesize}
\def\captionheadfont@{\scshape}

\long\def\@makecaption#1#2{%
  \vspace{2mm}
  \setbox\@tempboxa\vbox{\color@setgroup
    \advance\hsize-6pc\noindent
    \captionfont@\captionheadfont@#1\@xp\@ifnotempty\@xp
        {\@cdr#2\@nil}{.\captionfont@\upshape\enspace#2}%
    \unskip\kern-6pc\par
    \global\setbox\@ne\lastbox\color@endgroup}%
  \ifhbox\@ne % the normal case
    \setbox\@ne\hbox{\unhbox\@ne\unskip\unskip\unpenalty\unkern}%
  \fi
  \ifdim\wd\@tempboxa=\z@ % this means caption will fit on one line
    \setbox\@ne\hbox to\columnwidth{\hss\kern-6pc\box\@ne\hss}%
  \else % tempboxa contained more than one line
    \setbox\@ne\vbox{\unvbox\@tempboxa\parskip\z@skip
        \noindent\unhbox\@ne\advance\hsize-6pc\par}%
\fi
  \ifnum\@tempcnta<64 % if the float IS a figure...
    \addvspace\abovecaptionskip
    \moveright 3pc\box\@ne
  \else % if the float IS NOT a figure...
    \moveright 3pc\box\@ne
    \nobreak
    \vskip\belowcaptionskip
  \fi
\relax
}
\makeatother
%%%%%%%%%%%%%%%%
\def\writefig#1 #2 #3 {\rlap{\kern #1 truecm
\raise #2 truecm \hbox{#3}}}

%%%%%%%%%%%%%%%%

\newcommand{\tf}{\textsc{f}}

\newcommand{\newnorm}[1]{{\left\vert\kern-0.25ex\left\vert\kern-0.25ex\left\vert #1 
    \right\vert\kern-0.25ex\right\vert\kern-0.25ex\right\vert}}

\begin{document}

\title[The zeros of the partition function of the pinning model]{The zeros of the partition function\\ of the pinning model}

\author[G. Giacomin]{Giambattista Giacomin }
\address[G. Giacomin (corresponding author)]{Universit\'e Paris Cit\'e,  Laboratoire de Probabilit{\'e}s, Statistiques  et Mod\'elisation (UMR 8001),  8 place Aur\'elie Nemours,
            F-75205 Paris, France.  \emph{E-mail address: } {\tt giambattista.giacomin@u-paris.fr}}
           
\author[R. L. Greenblatt]{Rafael L. Greenblatt}
\address[R. L. Greenblatt]{Scuola Internazionale Superiore di Studi Avanzati, Mathematics Area, 
	via Bonomea 265,
 34136 Trieste, Italy}

\begin{abstract}
We aim at understanding for which (complex) values of the  potential 
 the pinning partition function vanishes. The pinning model is  a  Gibbs measure based on discrete renewal processes with power law inter-arrival distributions. 
We obtain some results for  rather general inter-arrival laws, but we achieve a substantially more complete understanding   for 
a specific one parameter family of inter-arrivals. We show, for such a specific family, that the zeros asymptotically lie on (and densely fill) a closed curve
that, unsurprisingly, touches the  real axis only in one point (the critical point of the model). 
We also perform a sharper analysis of the zeros close to the critical point and we exploit this analysis  to 
approach the challenging problem of Griffiths singularities for the disordered pinning model. The techniques we exploit are both probabilistic
 and analytical.  Regarding the first, a central role is played by  limit theorems for heavy tail random variables. As for the second, 
 potential theory and  singularity analysis of generating functions, along with their interplay, will be at the heart of several of our arguments.

\bigskip

\noindent  \emph{AMS  subject classification (2020 MSC)}:
82B27, % Critical phenomena
30C15,  % Zeros of polynomials
31B05, % Harmonic, subharmonic, superharmonic functions
60E10,  % Characteristic functions; other transforms
82B44, % Disordered systems
60K35, % Stat mech. type models
%60K37,  %Processes in random environments

\smallskip
\noindent
\emph{Keywords}: pinning models with complex potentials, zeros of partition function, sharp asymptotic behavior of partition function, Griffiths singularities
\end{abstract}

\maketitle
%\tableofcontents

\section{Introduction and results}

\subsection{The pinning  model: general framework}
\label{sec:intro0}
We denote by $\tau=(\tau_j)_{j=0,1,2, \ldots}$ a discrete recurrent renewal process with $\tau_0=0$. So 
$(\tau_{j+1}-\tau_j)_{j=0,1,2, \ldots}$ is a sequence of IID random variables taking values in $\bbN:=\{1,2, \ldots\}$. 
Often one considers the rather general framework   that for $n \in \bbN$
\begin{equation}
\label{eq:Kgeneral}
K(n)\, :=\, \bP(\tau_1=n) \stackrel{n \to \infty} \sim \frac{c}{ n^{1+\ga}}\ \text{ and } K(1)>0\,,  
\end{equation}
with 
$\ga\in (0,1)$ and $c>0$ and, unless  otherwise stated, we make the choice $c= 1/( -\Gamma(-\ga))$.
Of course $K(\cdot)$, called \emph{inter-arrival distribution}, determines the law of $\tau$.
The requirement 
 $K(1)>0$ is not essential, but allowing $K(1)$ to be zero  does complicate some arguments and notations (see comment after \eqref{eq:PN}).  
 %One of the points is that if $K(1)=0$ then the zeros are $N/2$ or less and we need to tune the normalization of the empirical measure.
Note that any of the two requirements in \eqref{eq:Kgeneral} implies that $\tau$ is aperiodic and that recurrence means that $\sum_n K(n)=1$.%, i.e. the smallest lattice that contains the support of $\tau_1$ is $\bbZ$.

\medskip

\begin{rem}
\label{rem:a_n}
The precise value of $c$ has only a   minor impact on the model.
The choice we make of $c$   is customary when dealing with  stable laws because it simplifies some expressions. In fact \cite[pp.~448-449]{cf:Feller2} we know that there exists a sequence of positive real numbers $(a_n)$ such that 
 $\tau_n/a_n$ converges to the (stable) limit law with support on the positive semi-axis and with Laplace transform  $s\mapsto \exp( -s^\ga)$. The normalizing sequence $(a_n)$ turns out to  be asymptotically proportional to $n^{1/ \ga}$, so we can choose $a_n$ equal to $n^{1/ \ga}$ times a positive constant: this constant is equal to $1$ if  $c= 1/( -\Gamma(-\ga))$. 
 \end{rem}
 \medskip

We consider
\begin{equation}
\label{eq:Z}
Z_{N,h}\,:=\, \bE \left[ \exp \left( h \sum_{n=1}^N \gd_n\right) \gd_N\right]\, ,
\end{equation}
where $ \gd_n:=\ind_{n \in \tau}$:  we are viewing $\tau=\{\tau_0, \tau_1, \ldots\}$ as a random subset of $ \bbN \cup \{0\}$.
We will work with $h \in \bbC$, but let us first consider the case $h \in \bbR$.
 It is straightforward to see that, in this case, $(\log Z_{N, h})$ is a super-additive sequence, so the limit
\begin{equation}
\label{eq:F}
\tf(h)\, :=\,  \lim_N \frac 1N \log Z_{N,h}\, ,
\end{equation}
exists for every $h \in \bbR$ and it is equal to the supremum of the sequence. Moreover, $\tf(h)$ can be identified via an elementary computation, e.g. \cite[pp. 7 and 8]{cf:G} and the result is that 
\begin{equation}
\label{eq:Fh-expl}
\tf(h)\, =\begin{cases}
	\text{unique solution } \tf \text{ of } \sum_n K(n) \exp(-n \tf) = \exp(-h) & \text{ if } h\ge 0\, ,
\\
0 & \text{ if } h< 0\,.
\end{cases}
\end{equation}  
It can be seen  from \eqref{eq:Fh-expl} that 
$h\mapsto \tf(h)$ is (strictly) increasing and strictly convex on the positive semi-axis, while it is non decreasing and convex over all $\bbR$: all these properties can be extracted also directly from \eqref{eq:F}. It is  also clear that $\tf(\cdot)$ is not (real) analytic at the origin.
What \eqref{eq:Fh-expl}  tells us beyond this is that the origin is the only singular point (\emph{critical point}): $\tf \mapsto   \sum_n K(n) \exp(-n \tf)$ is a real analytic invertible map   from $(0, \infty)$ to $(0, 1)$ so $h \mapsto \tf(h)$ is real analytic simply because $ \tf(h)$ is obtained by applying the  inverse of the map to $\exp(-h)$.

On the other hand, $Z_{N,h}$ is just a polynomial of degree  $N$ in $\exp(h)$ and in fact it is sometimes more practical to use the polynomial notation 
\begin{equation}
\label{eq:PN}
P_N(w)\, :=\, Z_{N, \log w}\, . 
\end{equation}
Note that the degree of $P_N(w)$ is $N$ because we are assuming $K(1)>0$. If $K(1)=0$ the degree would be smaller, for example
if $K(1)=0$ and $K(2)>0$ then $P_N(w)$ is a polynomial of degree 
$\lfloor N/2\rfloor$. Choosing $K(1)>0$ hence simplifies the normalization of 
the empirical probability of the zeros. Other (non essential, albeit welcome) simplifications due to this choice are connected  to 
$P_N(w)\sim K(1) w$  for $w$ small.

\smallskip

\begin{rem}
\label{rem:period-2pi}
We mostly work with the variable $h$ which is more natural in the statistical mechanics language. It must be however 
noted that $Z_{N, h}$ is $2\pi$-periodic in $\Im(h)$: this periodicity is  just the periodicity of the exponential function in the imaginary direction. So, strictly speaking $Z_{N, h}$ always has infinitely many zeros, but they are just periodic copies of 
the $N-1$ zeros in $\bbC$ with imaginary part (say) in $(-\pi, \pi]$:  note that  the origin is a simple zero of $P_N(w)$ (again, $K(1)>0$), but this zero is at $-\infty$  for $Z_{N,h}$.
It is therefore natural to introduce $\bbC_{2\pi}:=\bbC/(2\pi i \bbZ)$, which 
we will identify with $\{z\in \bbC: \, \Im(z) \in (-\pi, \pi]\}$, and restrict ourselves to this set when dealing with the $N-1$ zeros of $Z_{N, h}$.   
\end{rem}
\smallskip

We refer to \cite{cf:Fisher,cf:GB,cf:G,cf:dH} for a thorough discussion of the model in statistical mechanics terms: the \emph{critical point} $h=0$ captures a delocalization ($h<0$) to localization ($h>0$) transition. Taking 
the Lee-Yang viewpoint \cite{cf:LY}, we remark that $(1/N)\log  Z_{N, h}$ is real analytic on the whole of $\bbR$. But $Z_{N, h}$ is an entire function and the singularities in the complex plane of $(1/N)\log  Z_{N, h}$
 are due to the zeros of $Z_{N, h}$: in the limit $N \to \infty$ these singularities may accumulate on the real axis. In our case, they are going to accumulate on the real axis only at zero. Our purpose is to determine the location of the zeros of 
$Z_{N, h}$ for $N \to \infty$.  We stress that, unless otherwise stated, by $\log(\cdot)$ we mean the principal branch of the complex logarithm:  this is discussed more in detail after \eqref{eq:Freal} where we introduce the notation Log$(\cdot)$ for the principal branch, but Log$(\cdot)$ will be used only when strictly needed.

Even if results for pinning models with $h \in \bbR$ are typically obtained assuming only
\eqref{eq:Kgeneral} (or in even wider frameworks: for example regularly varying inter-arrival distributions \cite{cf:GB}) and the results \emph{essentially} depend only on $\ga$, it appears to be really
  challenging to extend such a universal behavior to $h\in \bbC$. So, we will give some results assuming only \eqref{eq:Kgeneral}, but we are able to obtain a  good control on the location of the zeros when working with a much more restrictive choice of  $K(\cdot)$.
   But let us start with a general result that 
holds in our most general framework: 
  
  \medskip
  
  \begin{proposition}
  \label{th:res0}
  We fix $K(\cdot)$ that satisfies 
  \eqref{eq:Kgeneral}. Then there exists $C>0$ and, for every $\gep>0$ and $\gep'>0$, 
  there exist $N_{\gep, \gep'}\in \bbN$, $C_{\gep'}>0$  and a subset $V_{\gep, \gep'}$ of the complex plane that contains
  \begin{enumerate}
  \item the half plane $\Re(h) \le -\gep$;
  \item the half plane $\Re(h) \ge C$; 
  \item the set of $h$'s with $\Re(h) > \gep' $ and $\vert \Im(h)\vert< C_{\gep'}$;
  %\item the set $\{h:\, \vert\arg(h) \vert \le (1-\gep) \pi \ga /2, \, \vert h \vert \le c_{K,\gep} \}$ for a suitable choice of $c_{K,\gep} >0$,
  \end{enumerate} 
   such that 
  $Z_{N, h}\neq 0$ for every $h\in V_{\gep, \gep'}$ and every $N\ge N_{\gep, \gep'}$.
  \end{proposition}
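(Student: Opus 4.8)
I would base the argument on the generating-function identity
\[
\sum_{N\ge1}P_N(w)\,z^N\;=\;\frac{w\,\phi(z)}{1-w\,\phi(z)}\,,\qquad
\phi(z):=\bE\bigl[z^{\tau_1}\bigr]=\sum_{n\ge1}K(n)z^n\,,
\]
which follows from $P_N(w)=\sum_{k\ge1}w^k\,\bP(\tau_k=N)$ after writing $\bP(\tau_k=N)=(K^{*k})(N)=[z^N]\phi(z)^k$. Here $\phi$ is holomorphic in $\bbD$, continuous on $\overline{\bbD}$, with $\phi(0)=0$, $\phi'(0)=K(1)>0$, strictly increasing on $[0,1)$, $\phi(1)=1$, and $|\phi(z)|<1$ on $\partial\bbD\setminus\{1\}$ by aperiodicity; in particular $\sup_{\bbD}|\phi|\le1$. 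For $w\ne0$ the map $z\mapsto w\phi(z)/(1-w\phi(z))$ is meromorphic in $\bbD$ with poles exactly at the solutions of $\phi(z)=1/w$. For the two regions in which $\Re(h)$ is bounded below by a positive constant the plan is to isolate the pole $\zeta(w)$ closest to the origin, show it is simple ($\phi'(\zeta(w))\ne0$) and \emph{strictly} dominant (every other singularity strictly farther from $0$, with a gap uniform over the region), and then push the Cauchy contour in the $z$-plane across $\zeta(w)$ onto a fixed circle free of poles, which yields
\[
P_N(w)\;=\;\frac{1}{\phi'(\zeta(w))}\,\zeta(w)^{-N-1}\,\bigl(1+o(1)\bigr)\qquad(N\to\infty)
\]
uniformly; as the prefactor does not vanish, $Z_{N,h}=P_N(e^h)\ne0$ for all large $N$.

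\emph{The region $\Re(h)\ge C$.} Here $|1/w|\le e^{-C}$, and a Rouch\'e comparison of $\phi(z)-1/w$ with $K(1)z-1/w$ gives a unique root $\zeta(w)$ of $\phi(z)=1/w$ with $|\zeta(w)|\le 2e^{-C}/K(1)$, all other roots lying outside a fixed disc $\{|z|<R_1\}$ (take $R_1\in(0,1)$ below the distance from $0$ to the nearest zero of $\phi$ in $\overline{\bbD}\setminus\{0\}$, or $R_1=\tfrac12$ if there is none), on whose boundary $|\phi|\ge m_1>0$. Taking $C>\log(2/m_1)$ makes $\zeta(w)$ strictly dominant with an enormous gap, and bounds $|w\phi(z)/(1-w\phi(z))|$ by $2/m_1$ on $|z|=R_1$ uniformly in $|w|\ge e^{C}$, so the contour shift is effective and uniform. \emph{The region $\Re(h)>\gep'$, $|\Im(h)|<C_{\gep'}$.} We may also assume $|w|\le e^{C}$ (the rest is the previous region), so $e^{\gep'}\le|w|\le e^{C}$. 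For \emph{real} $w\in[e^{\gep'},e^{C}]$, $\zeta(w)=\phi^{-1}(1/w)$ lies in a fixed compact subinterval of $(0,1)$, and it is the strictly nearest pole: by $|\phi(z)|\le\phi(|z|)$ and aperiodicity, $|\phi(z)|<\phi(\zeta(w))=|1/w|$ for every $z\ne\zeta(w)$ with $|z|\le\zeta(w)$. Since $\phi'>0$ on $(0,1)$ no pole is a multiple root, so a compactness argument gives a gap to the next pole uniform over $w\in[e^{\gep'},e^{C}]$. For $C_{\gep'}>0$ small enough this picture survives the perturbation $w\mapsto|w|\,e^{\,i\arg w}$ with $|\arg w|\le C_{\gep'}$, and the contour-shift estimate applies uniformly on that compact sector.

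\emph{The region $\Re(h)\le-\gep$.} This one is different: $|1/w|\ge e^{\gep}>1\ge\sup_{\bbD}|\phi|$, so $w\phi/(1-w\phi)$ has \emph{no} pole in $\bbD$, its only $\overline{\bbD}$-singularity being the mild, $(1-z)^{\ga}$-type one at $z=1$ inherited from $\phi$; since \eqref{eq:Kgeneral} does not grant any analytic continuation of $\phi$, I would argue probabilistically. From $P_N(w)/\bigl(w\,K(N)\bigr)=\sum_{j\ge0}\frac{(K^{*(j+1)})(N)}{K(N)}\,w^{j}$, the local heavy-tail asymptotics $(K^{*k})(N)\sim k\,K(N)$ (which hold because $K(\cdot)$ is regularly varying, hence locally subexponential) give the termwise limits, while a Kesten-type \emph{local} bound $(K^{*k})(N)\le C_\delta(1+\delta)^kK(N)$, uniform in $k$ and $N$, provides a summable majorant on $\{|w|\le e^{-\gep}\}$ once $e^{-\gep}(1+\delta)<1$; by dominated convergence and Vitali's theorem,
\[
\frac{P_N(w)}{w\,K(N)}\;\xrightarrow[\;N\to\infty\;]{}\;\sum_{j\ge0}(j+1)w^{j}\;=\;\frac{1}{(1-w)^2}
\]
uniformly on $\{|w|\le e^{-\gep}\}$. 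The limit is holomorphic and bounded away from $0$ there, hence $P_N(w)\ne0$ for $0<|w|\le e^{-\gep}$ and all large $N$; as $w=e^h\ne0$, the whole region is covered.

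It then suffices to take $V_{\gep,\gep'}$ to be the union of the three regions, $C$ the constant produced above, $C_{\gep'}$ the one produced in the second step, and $N_{\gep,\gep'}$ the largest of the three thresholds. I expect the two uniformity statements to be the real work: the uniform-in-$k$ Kesten-type local estimate that licenses the interchange of limit and sum in the region $\Re(h)\le-\gep$, and the uniform spectral gap between $\zeta(w)$ and the other poles of $w\phi(z)/(1-w\phi(z))$ for complex $w$ in the sector $\{\Re(h)>\gep',\,|\Im(h)|<C_{\gep'}\}$; this is precisely where limit theorems for heavy-tailed random variables and a careful compactness analysis come in.
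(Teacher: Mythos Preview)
Your proof follows essentially the same route as the paper: the paper splits the argument exactly into the same three regions, uses the generating-function identity \eqref{eq:z-t} (your $\phi$ is the paper's $\widehat K$), locates the dominant simple pole $z_h=\zeta(w)$ by the analytic inverse function theorem for $\Re(h)\ge C$, and by real strict-dominance plus aperiodicity and a compactness/perturbation argument for the thin strip near the positive real axis (Proposition~\ref{th:sharp-Reh>0}). For $\Re(h)\le -\gep$ the paper also lands on $Z_{N,h}\sim K(N)\,e^h/(1-e^h)^2$ (Proposition~\ref{th:Reh<0}), using Doney's uniform asymptotic $\bP(\tau_j=N)\sim jK(N)$ (your local subexponentiality).

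Two small points are worth flagging. First, your displayed asymptotic for $P_N(w)$ is missing a factor $1/w=e^{-h}$: the residue of $w\phi/(1-w\phi)$ at $\zeta(w)$ is $-\,(1/w)/\phi'(\zeta(w))$, so $P_N(w)\sim (1/w)\,\phi'(\zeta(w))^{-1}\zeta(w)^{-N-1}$, matching \eqref{eq:leading-1}; this does not affect the nonvanishing conclusion. Second, in the region $\Re(h)\le -\gep$ you invoke a local Kesten bound $(K^{*k})(N)\le C_\delta(1+\delta)^k K(N)$ to license dominated convergence, whereas the paper avoids this extra input: it truncates at $j\le N^\alpha/\log N$, uses Doney's \emph{uniform} estimate on that range, and kills $j>N^\alpha/\log N$ by the trivial bound $\bP(\tau_j=N)\le 1$ together with $e^{-j|\Re(h)|}$. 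Your route is perfectly valid (and the Vitali step is a clean way to pass to uniform-in-$w$ convergence), but it does require citing a local Kesten-type inequality for lattice subexponential laws, which is true for regularly varying $K$ but is an additional external fact compared to the paper's more self-contained estimate.
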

  
  \medskip
  
 This statement is  visualized in Figure~\ref{fig:D0}.  
 
  \begin{figure}[h]
\centering
\includegraphics[width=11 cm]{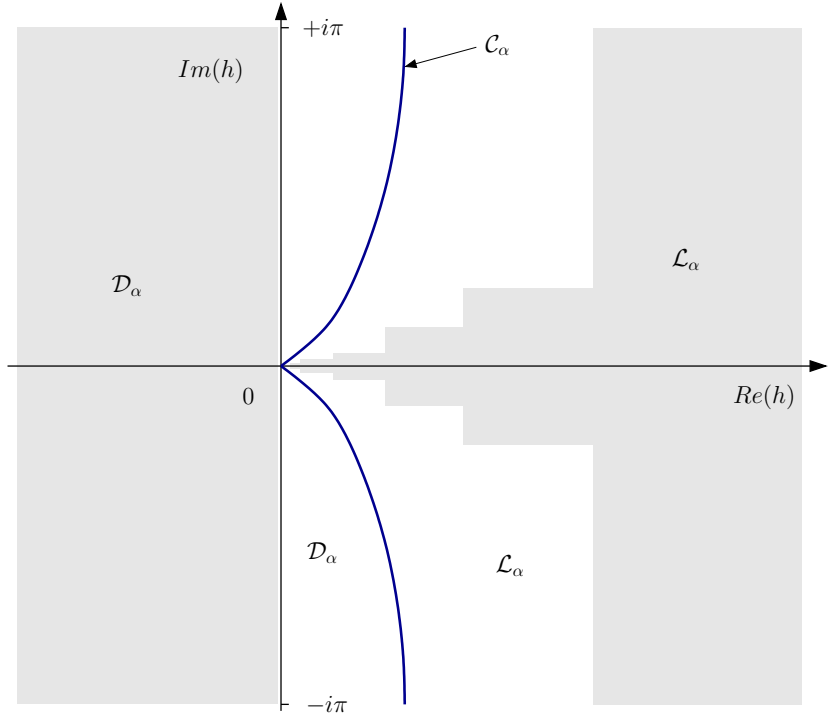}
\caption{\label{fig:D0} 
A pictorial  vision of the content of Proposition~\ref{th:res0}, and beyond: there is no zero in the shadowed region for $N$ sufficiently large.
The shadowed region is obtained by applying Proposition~\ref{th:res0} for more than one value of $\gep'$, in particular a very small value.
 In the restricted set-up with inter-arrival distribution \eqref{eq:Kspecial} we can show that 
the zeros asymptotically lie on a \emph{critical} curve $\cC_\ga$ which splits
$\bbC_{2\pi}$  into a \emph{delocalized} region $\cD_\ga$, in which $\vert Z_{N, h}\vert$ does not grow exponentially, and a \emph{localized} one $\cL_\ga$ in which $\vert Z_{N, h}\vert$  grows exponentially. 
We stress that the lower bound we obtain on $C_{\gep'}$, see Proposition~\ref{th:res0},  is $o(\gep' )$,  so, in the general framework,  we do not establish that there exists  $m>0$ such that there is  no zero in the cone $\{z\in \bbC:\, \Re(z)>0$ and $\vert \Im(z)\vert< m \Re(z)   \}$. This feature however does hold in the restricted set-up  \eqref{eq:Kspecial}.
}
\end{figure}

We will give a sketch of the proof of Proposition~\ref{th:res0} in \S~\ref{sec:sketch}: while Proposition~\ref{th:res0} is rather rough,  \S~\ref{sec:sketch} will be of help in understanding some of the tools we repeatedly use (also in the proof of much sharper results) and why in the general framework we are limited to Proposition~\ref{th:res0}.

\subsection{Special inter-arrival distributions}
\label{sec:special}

A special choice of $K(\cdot)$ for which we are able to go much farther is 
\begin{equation}
\label{eq:Kspecial}
K(n)\, =\, \frac{\Gamma(n-\ga)}{-\Gamma(-\ga)\,  n!}\,  =\, \frac{-(n-1-\ga) (n-2-\ga) \cdots (1-\ga)(-\ga)}{n!}
 \stackrel{n\to \infty} \sim \frac{n^{-(1+\ga)}}{-\Gamma(-\ga)}\,.
\end{equation}
%So, in this case, $c_K=1/\vert \Gamma(-\ga)\vert$. 
We remark also that $K(1)=\ga$. 
One of the important features of this distribution is that its $z$-transform (or characteristic function) has an explicit expression:
\begin{equation}
\label{eq:Kz}
\widehat K(z)\, :=\, 
 \sum_{j=1}^\infty  z^j K(j)\, =\, {1- \left( 1-z \right)^\ga}\, .
\end{equation}
The power series defining $\widehat K(z)$ has radius of convergence $1$, and this is of course true  also for the general framework \eqref{eq:Kgeneral}. The explicit expression in \eqref{eq:Kz} is saying that, with the special choice  \eqref{eq:Kspecial}, $\widehat K(z)$ can be extended to $\bbC \setminus \{z:\, \Re(z)\ge 1\}$. 

\smallskip

\begin{rem}
\label{rem:fabry}
This is not a generic feature: in fact, under the hypothesis \eqref{eq:Kgeneral}, one can  exhibit $K(\cdot)$ such that 
$\widehat K(z)$ has a natural boundary on the unit circle. For example, if $K_A(n)=C \ind_A(n)/n^2$ with 
$A=\{n_j:\, j\in \bbN\}$ with $n_j/j \to \infty$, $C>0$ chosen so that $\sum_n K_A(n)=1$ and if we further assume that $n_j$ does not diverge too fast so that the series defining $\widehat K_A(z)$ has 1 as radius of convergence 
(take for example $n_j=j^2$) then $\widehat K_A(\cdot )$ is singular everywhere on $\partial B_0(1)$
\cite[Ch. XI, in particular p. 373]{cf:Dienes}. 
So the inter-arrival distribution $K(n)=(K_1(n)+K_A(n))/2$, with $K_1(\cdot)$ given in \eqref{eq:Kspecial}, 
satisfies \eqref{eq:Kgeneral} (except possibly for the value of $c$, but this can easily be fixed) and  has $\partial B_0(1)$ as natural boundary.  
\end{rem}
\smallskip

Let us also observe right away that by applying the formula for $\tf(h)$ given right after \eqref{eq:F} we have that the choice  \eqref{eq:Kspecial} yields 
 for $h>0$
\begin{equation}
\label{eq:Freal}
\tf(h)\, =\, - \log \left( 1 -\left( 1- \exp(-h)\right)^{1/\ga} \right)\, .
\end{equation}
How far into $\bbC$ can this function be analytically continued? One problem comes from 
$h\mapsto (1-\exp(-h))^{1/\ga}$ that has a cut discontinuity starting at origin (unless
$1/\ga =2,3 , \ldots$  for which $h\mapsto (1-\exp(-h))^{1/\ga}$ is entire). Unless otherwise stated, by $z^c$,  $c \in \bbR$, we mean  $\exp(c \Log z)$ with
$\Log(\cdot)$ the principal branch of the logarithm (for $z\in (-\infty,0)$ we set $\Log z:= \log \vert z \vert +i \pi$) . In particular, with this choice, the cut of $h\mapsto (1-\exp(-h))^{1/\ga}$ is on $(-\infty,0)$. 
Therefore, if  $\log_R(\cdot)$ is the logarithm defined from its natural Riemann surface (infinitely many copies of $\bbC$) to $\bbC$ \cite[Ch.~8]{cf:A}, we have that
\begin{equation}
\label{eq:Fcomplex}
\tf(h)\, :=\, - \log_R \left( 1 -\exp\left( (1/ \ga)\Log \left( 1- \exp(-h)\right)^{1/\ga}\right) \right)\, ,
\end{equation}
is analytic on $\bbC \setminus (-\infty,0]$ and coincides with $\tf(h)$ for $h>0$. 

On the other hand $\tf(h)=0$ for $h<0$, 
which of course can be continued to the whole $\bbC$: so, at this stage there is no reason to believe that the continuation defined in \eqref{eq:Fcomplex} is relevant, at least not over the whole region where we have defined it. 

In terms of \emph{continuation} of the real free energy outside $\bbR$ the related \emph{harmonic continuation} is a priori more straightforward: $\Re(\tf(h))$ (cf. \eqref{eq:Fcomplex}) is harmonic on $\bbC \setminus (-\infty,0]$ when viewed as a function of two real variables (just set $h=x+iy$).  Such harmonic extension is at this stage equivalent with the analytic one, but it avoids the arbitrary choice of the branch of the logarithm. The problem of choosing the branch of the $\log$ is present also at fixed $N$, but it is avoided if we simply consider the harmonic function $\Re( \log Z_{N, h})= \log \vert Z_{N, h}\vert$. We will see (Section~\ref{sec:potential}) that $\lim (1/N)  \log \vert Z_{N, h}\vert$ is not only uniquely defined, but it also converges in the whole of $\bbC$. %\todo{Attempt of reformulation $\ldots$}
The limit coincides with the harmonic continuation $\Re(\tf(h))$ of the free energy on the positive real axis 
 up to where it vanishes, or in other words where it matches the harmonic continuation from the  negative real axis. More precisely, we are going to show that  the connected component of the set $\{h:\,  \Re(\tf(h))>0\}$ that contains the positive real axis is a subset of the half plane $\{z \in \bbC: \, \Re(z)>0\}$ and the \emph{relevant continuation} of $\tf(\cdot)$, defined on $\bbR$, is $\Re(\tf(h))(>0)$ on this connected component, and it is zero on the rest of $\bbC$.
Hence the \emph{critical region} is  identified by the values of $h$ with $\Re(h)\ge 0$ and  $\tf(h)=i \theta$ for some $\theta \in \bbR$. It is not too difficult to see that this set can be  written more explicitly as 
\begin{equation}
\label{eq:crit}
\cC_\ga \, :=\,  \left \{ - \Log \left( 1-(1- \exp(-i \theta))^\ga\right):\,  \theta \in [0, 2\pi)\right\}\, .
\end{equation}
This set appears  in Figures~\ref{fig:D0}, \ref{fig:zeros-horizon} and \ref{fig:curve4}). Here are some properties:

\medskip

\begin{lemma}
\label{th:crit}
We have that $\cC_\ga$ (see Fig.~\ref{fig:D0} and Fig.~\ref{fig:zeros-horizon}) is invariant under complex conjugation,  that $\cC_\ga$ is a subset of the strip $0 \le \Re(h)\le -\log(2^\ga-1)$ and touches the boundary of this strip only at the origin and at $-\log(2^\ga-1)+ i\pi$.
Moreover $\cC_\ga$ is a simple closed curve in the cylinder $\bbC_{2\pi}$.  This curve is 
  smooth, except at $0$, of finite length and it is  not homotopic to a point: hence  $\bbC_{2\pi}\setminus \cC_\ga$ is the union of  two disjoint connected sets that we call $\cL_\ga$ and $\cD_\ga$. $\cL_\ga$   contains the positive real axis and 
   $\cD_\ga$ contains the $\Re(h)<0$ half plane. 
\end{lemma}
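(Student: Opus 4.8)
\emph{Proof strategy.} By the closed form \eqref{eq:Kz}, the set $\cC_\ga$ is the image of
\[
\theta\ \longmapsto\ h(\theta)\ :=\ -\Log \widehat K(e^{-i\theta})\,,\qquad \widehat K(e^{-i\theta})\ =\ 1-(1-e^{-i\theta})^\ga\ =\ \sum_{n\geq 1}K(n)e^{-in\theta}\,,
\]
and throughout I would keep both descriptions of $\widehat K(e^{-i\theta})$ in play, together with the polar form $(1-e^{-i\theta})^\ga=(2\sin(\theta/2))^\ga e^{i\ga(\pi/2-\theta/2)}$, valid for $\theta\in(0,2\pi)$. The soft parts come first. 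Invariance under conjugation follows from $\overline{\widehat K(e^{-i\theta})}=\widehat K(e^{-i(2\pi-\theta)})$ together with $\overline{\Log z}\equiv\Log\bar z\pmod{2\pi i}$. For the left edge of the strip, $\Re h(\theta)=-\log|\widehat K(e^{-i\theta})|$ and $|\widehat K(e^{-i\theta})|\leq\sum_n K(n)=1$ by the triangle inequality and $K(n)\geq 0$; since $K(1),K(2)>0$, equality forces $e^{-i\theta}=e^{-2i\theta}$, i.e.\ $\theta=0$, which gives $h=0$. Hence $\Re h\geq 0$ on $\cC_\ga$ with equality only at the origin.

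The only genuine computation is the right edge: $|\widehat K(e^{-i\theta})|\geq 2^\ga-1$ for all $\theta$, with equality only at $\theta=\pi$, where $\widehat K(-1)=1-2^\ga<0$, so $h(\pi)=-\Log(1-2^\ga)$, which is $-\log(2^\ga-1)+i\pi$ in $\bbC_{2\pi}$. Squaring and using $((1-e^{-i\theta})(1-e^{i\theta}))^\ga=(2\sin(\theta/2))^{2\ga}$ one gets the clean expression
\[
\bigl|\widehat K(e^{-i\theta})\bigr|^2\ =\ 1-2\,(2\sin(\theta/2))^\ga\cos\!\bigl(\ga(\pi/2-\theta/2)\bigr)+(2\sin(\theta/2))^{2\ga}\,;
\]
the substitution $s=\pi/2-\theta/2$ (so $\sin(\theta/2)=\cos s$), the conjugation symmetry, and a one–variable derivative computation reduce the bound to the single inequality
\[
\sin\!\bigl((1+\ga)s\bigr)\ \geq\ (\sin 2s)^\ga(\sin s)^{1-\ga}\qquad(0<s<\pi/2)\,,
\]
which is precisely the concavity of $t\mapsto\log\sin t$ on $(0,\pi)$ at the convex combination $(1+\ga)s=\ga\cdot(2s)+(1-\ga)\cdot s$. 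Together with the left edge this shows that $\cC_\ga$ lies in the stated strip and touches its boundary only at $0$ and at $-\log(2^\ga-1)+i\pi$. I expect this inequality, and the bookkeeping reducing the right–edge bound to it, to be the crux of the lemma; the remaining items are routine.

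For the topological statements I would first record that $\widehat K$ is injective on the unit circle: $\widehat K(e^{-i\theta_1})=\widehat K(e^{-i\theta_2})$ forces $|1-e^{-i\theta_1}|=|1-e^{-i\theta_2}|$, i.e.\ $\sin(\theta_1/2)=\sin(\theta_2/2)$, so either $\theta_1=\theta_2$ or $\theta_2=2\pi-\theta_1$, and in the second case $(1-e^{-i\theta_1})^\ga$ would be real, which by the polar form happens only for $\theta_1=\pi=\theta_2$. Since moreover $\widehat K(e^{-i\theta})$ never vanishes and the principal–branch cut of $\Log$ that is crossed at $\theta=\pi$ disappears modulo $2\pi i$, the map $h\colon\bbR/2\pi\bbZ\to\bbC_{2\pi}$ is a continuous injection of a compact space into a Hausdorff one, hence a homeomorphism onto its image: $\cC_\ga$ is a simple closed curve in the cylinder. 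On $(0,2\pi)$ one has $h'(\theta)=\ga\,ie^{-i\theta}(1-e^{-i\theta})^{\ga-1}/\widehat K(e^{-i\theta})\neq 0$, so $\cC_\ga\setminus\{0\}$ is a smoothly embedded arc, while near the endpoints $|h'(\theta)|\sim\ga\,\theta^{\ga-1}$ is integrable — giving finite length — but unbounded; in fact $h(\theta)=\theta^\ga e^{i\ga\pi/2}(1+o(1))$ as $\theta\to 0^+$ and $h(\theta)=(2\pi-\theta)^\ga e^{-i\ga\pi/2}(1+o(1))$ as $\theta\to 2\pi^-$, so the two ends reach $0$ along distinct rays and $\cC_\ga$ has a corner there.

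It remains to see that $\cC_\ga$ is not homotopic to a point, and to identify the two complementary regions. I would pass to the auxiliary curve $\Gamma:=\widehat K(\{e^{-i\theta}:\theta\in[0,2\pi)\})$, a Jordan curve in $\bbC$ by the injectivity above. Its imaginary part equals $-(2\sin(\theta/2))^\ga\sin(\ga(\pi/2-\theta/2))$, which vanishes only at $\theta=\pi$, so $\Gamma$ lies in the lower half–plane for $\theta\in(0,\pi)$ and in the upper half–plane for $\theta\in(\pi,2\pi)$ and meets $\bbR$ only at $\widehat K(1)=1$ and $\widehat K(-1)=1-2^\ga$; a short application of the Jordan curve theorem then places the segment $(1-2^\ga,1)$, which contains $0$, in the bounded component of $\bbC\setminus\Gamma$. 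Hence $\Gamma$ winds $\pm 1$ times around $0$, so $\arg\widehat K(e^{-i\theta})$, and therefore $\Im h(\theta)$, changes by $\mp 2\pi$ over one period; thus $\cC_\ga$ is homotopic to the core circle of $\bbC_{2\pi}$. An embedded non–contractible circle separates the cylinder into exactly two components, one accumulating at each end; since $\cC_\ga$ lies in the bounded strip $\{0\leq\Re h\leq-\log(2^\ga-1)\}$, the connected half–plane $\{\Re h<0\}$ lies entirely in the component accumulating at the end $\Re\to-\infty$ (call it $\cD_\ga$), while the connected ray $(0,\infty)$, which overlaps $\{\Re h>-\log(2^\ga-1)\}$, lies entirely in the other one, $\cL_\ga$; these are distinct components precisely because the non–contractible curve $\cC_\ga$ separates the two ends.
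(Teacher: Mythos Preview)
Your proof is correct and, in several places, more complete than the paper's own argument. The paper factors through the auxiliary ``pre-curve'' $\{(1-e^{-i\theta})^\ga:\theta\in[0,2\pi)\}$ of Lemma~\ref{th:beforeLog}, records that it lies in $\overline{\mathrm{Sect}(\ga\pi/2)}\cap\overline{B_0(2^\ga)}$, and then asserts that the strip bounds for $\cC_\ga$ follow by ``elementary considerations''; you instead work directly with $\widehat K(e^{-i\theta})$ and give explicit arguments for both edges. Your left-edge bound via the triangle inequality on $\sum_n K(n)e^{-in\theta}$ and your right-edge reduction to the inequality $\sin((1+\ga)s)\ge(\sin 2s)^\ga(\sin s)^{1-\ga}$, disposed of by strict concavity of $\log\sin$ on $(0,\pi)$, are cleaner and more self-contained than what the paper sketches (the paper's actual quantitative control on $f_1$, $f_2$ is deferred to the appendix Lemma~\ref{th:appC}, whose proof of the analogous inequality is noticeably longer). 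For the topological part you argue via the Jordan curve $\Gamma=\widehat K(\partial B_0(1))$ and its winding number around $0$, while the paper relies implicitly on the fact that the pre-curve separates $\bbC$ into two components and then pushes this through $z\mapsto -\Log(1-z)$. Both routes reach the same conclusions; yours is more explicit about why the curve is non-contractible in $\bbC_{2\pi}$ and why the two designated regions lie in distinct complementary components.
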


\medskip

\begin{rem}
The pinning model transition that we observe at $h=0$, see for example \eqref{eq:Fh-expl}, is a (de)localization transition: this is discussed at length for example in \cite{cf:Fisher,cf:GB,cf:G}. Of course 
 $\cL_\ga$, $\cD_\ga$ and $\cC_\ga$ are, respectively, the continuation in the complex plane of the delocalized region $(-\infty,0)$, of the localized region $(0,\infty)$ and of the critical point $0$. We stress that we do not know how to do this continuation (at least, not in such a complete sense) in the general framework \eqref{eq:Kgeneral}. 
 Moreover we do not attach a pathwise sense to the notion of (de)localization for  $h \in \bbC \setminus \bbR$. Nevertheless it may be natural to identify $\cL_\ga$ (in the general context) as the region in which 
 $\liminf_N (1/N)\log\vert Z_{N, h}\vert>0$, see notably the caption of Figure~\ref{fig:D0}, the content of Section~\ref{sec:alternative} and  Remark~\ref{rem:L2}. 
\end{rem}

\medskip

It turns out that the zeros of $Z_{N,h}$ accumulate on $\cC_\ga$: this is what we  explain next.

  \begin{figure}[h]
\centering
\includegraphics[width=14 cm]{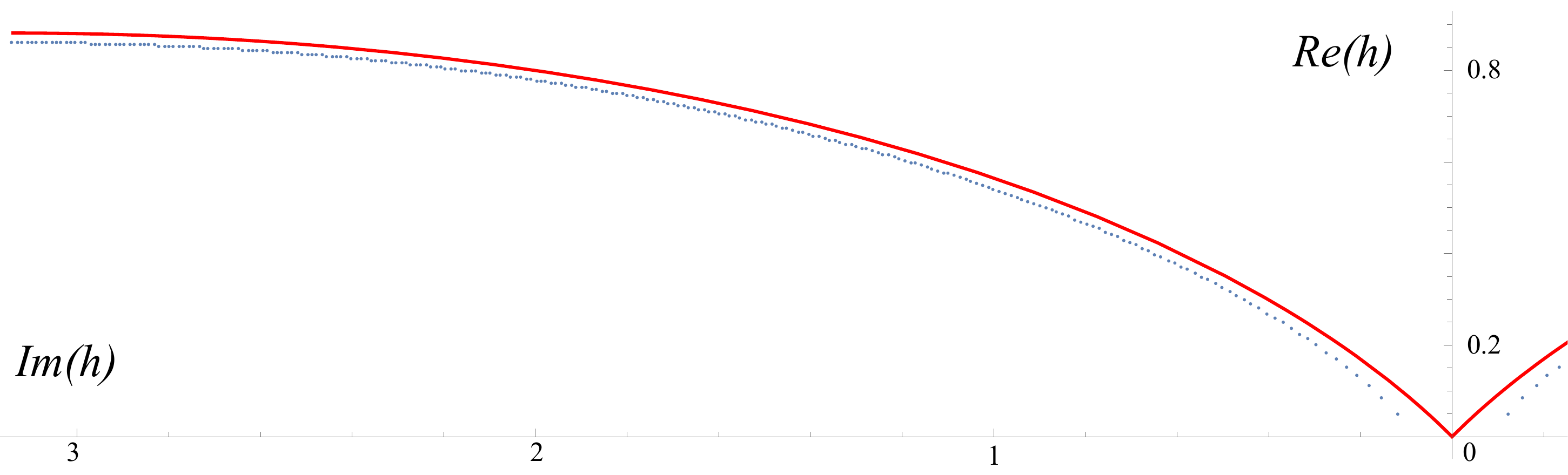}
\caption{\label{fig:zeros-horizon}
In red the plot of the critical curve $\cC_{1/2}$, hence we are working with the inter-arrival law \eqref{eq:Kspecial}: the complex axes are rotated by $90$ degrees and we cut the part with $\Im(h)<0$ which is just the specular image of 
$\Im(h)>0$.  The blue dots mark the locations 
  of the zeros of $Z_{500, h}$. 
	 Note that the black dots in Figure~\ref{fig:lac} show comparable zeroes, corresponding to a different inter-arrival law \eqref{eq:Kgeneral}, but still  with $\ga=1/2$. More precisely  the black dots in Figure~\ref{fig:lac} mark the zeros of $P_{500} (w)$, so the black dots in Figure~\ref{fig:lac} 
should be compared  
 to  the exponential of the blue dots in here (and the qualitative behavior is the same).
 } 
\end{figure}

\subsection{The zeros of the partition function: macroscopic limit}
We call $h_{N,1}, h_{N,2}, \ldots,$ $h_{N,N-1}$ the $N-1\ge 1$ zeros of $Z_{N, h}$ in $\bbC_{2\pi}$ %(hence we leave out the zero at $-\infty$) 
and we introduce the empirical probability
\begin{equation}
\mu_N\, :=\, \frac 1 {N-1} \sum_{j=1}^{N-1} \gd_{h_{N,j}}\, ,
\end{equation}
where $\gd_z$ is the  probability on $\bbC_{2\pi}$ that is concentrated on $z$. We remark that if $Z_{N, h}=0$ then  
$Z_{N, \bar h}=0$, so $\mu_N$ is symmetric with respect to conjugation.

 The notion of convergence of probability measures is the standard notion of convergence in law: $\lim_N \mu_N= \mu$ if $\lim \int_{\bbC_{2\pi}} f \dd \mu _n = \int_{\bbC_{2\pi}} f \dd \mu$ for every $f: \bbC_{2\pi} \to \bbR$ which is continuous and bounded. In particular,  $\mu$ is a probability too. 
\medskip

\begin{theorem}
\label{th:empprobab}
With the inter-arrival distribution \eqref{eq:Kspecial}  we have that $\lim_N \mu_N= \mu$.
The support of $\mu$ coincides with $\cC_\ga$. Moreover
  $\mu$ is absolutely continuous with respect to the  arc-length measure on $\cC_\ga$ and its density vanishes only at $0$.
\end{theorem}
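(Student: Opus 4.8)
The natural route is via potential theory, exploiting the standard correspondence between the empirical measure of zeros of a sequence of polynomials and logarithmic potentials. Writing $P_N(w) = K(1) w \prod_{j=1}^{N-1} (w - w_{N,j})$ in the $w$-variable (equivalently $Z_{N,h}$ in the $h$-variable), one has $\tfrac1{N}\log|Z_{N,h}| = \tfrac1N \log|P_N(e^h)|$, and the first task is to show that this converges, for every $h \in \bbC_{2\pi}$, to a limiting function $\Phi(h)$. As already announced in the excerpt (Section~\ref{sec:potential}, which I may assume), $\lim_N \tfrac1N \log|Z_{N,h}|$ exists everywhere and equals the relevant harmonic continuation: it is $\Re(\tf(h)) > 0$ on the connected component $\cL_\ga$ of $\{\Re(\tf(h))>0\}$ containing the positive real axis, and $0$ on $\cD_\ga \cup \cC_\ga$. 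Granting this, $\tfrac1{N-1}\log|P_N(w)| - \tfrac1{N-1}\log|K(1)|$ is, up to the vanishing term $\tfrac1{N-1}\log|w|$, exactly $-\int \log\frac1{|w-z|}\dd\mu_N(z)$, i.e. minus the logarithmic potential $U^{\mu_N}$ of the empirical measure pushed to the $w$-plane. So $U^{\mu_N} \to U^{\mu_\infty}$ pointwise (in fact in $L^1_{\mathrm{loc}}$, by uniform local integrability of logarithmic potentials and domination arguments), where $U^{\mu_\infty}$ is the harmonic function obtained by transporting $-\Phi$ through $w = e^h$.

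The second task is to identify $\mu$ and establish that it is a probability measure supported exactly on $\cC_\ga$. Since the family $(\mu_N)$ lives on the compact cylinder $\bbC_{2\pi}$ (all zeros have $\Re(h)$ bounded — here I would invoke Proposition~\ref{th:res0}, which confines zeros to a bounded strip $-\gep \le \Re(h) \le C$), it is tight, so it suffices to show every subsequential limit equals a single $\mu$. Any subsequential limit $\mu$ satisfies $U^\mu = U^{\mu_\infty}$ off a polar set, hence $\mu = -\tfrac1{2\pi}\Delta U^{\mu_\infty}$ as distributions (Laplacian of a logarithmic potential), which pins down $\mu$ uniquely and shows $\mu(\bbC_{2\pi}) = 1$. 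The support statement then follows because $\Phi$ (equivalently $U^{\mu_\infty}$) is harmonic precisely off $\cC_\ga$: on $\cD_\ga$ it is identically $0$, on $\cL_\ga$ it equals the harmonic function $\Re(\tf(h))$ from \eqref{eq:Fcomplex}, and across $\cC_\ga$ the two harmonic pieces match in value but their normal derivatives jump — on one side $\partial_n \Re(\tf) = 0$, on the other $\partial_n \Re(\tf) > 0$ — so $\Delta U^{\mu_\infty}$ is a nonnegative measure supported on $\cC_\ga$. This jump is computable: $\tf$ is real-analytic up to $\cC_\ga$ from the $\cL_\ga$ side away from $0$ (from \eqref{eq:Fcomplex} and Lemma~\ref{th:crit}), so one reads off that the density with respect to arc-length equals $\tfrac1{2\pi}$ times the (one-sided) normal derivative of $\Re(\tf)$ along $\cC_\ga$, which is a smooth strictly positive function except at $h=0$, where $\tf$ fails to be analytic and the curve $\cC_\ga$ has a corner: there the normal derivative degenerates to $0$ (consistent with $\tf(h) \sim h^{1/\ga}$ or the appropriate power behavior near the critical point), giving exactly the claimed vanishing of the density only at the origin.

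Carrying this out cleanly requires two technical inputs that I would treat carefully. First, the pointwise (better, $L^1_{\mathrm{loc}}$) convergence $\tfrac1N\log|Z_{N,h}| \to \Phi(h)$ must hold \emph{including on and near $\cC_\ga$}, where cancellations in $Z_{N,h}$ could a priori make $\tfrac1N\log|Z_{N,h}|$ dip far below $\Phi$ for bad $h$ and bad $N$; the standard fix is to note that $\limsup_N \tfrac1N\log|Z_{N,h}| \le \Phi(h)$ everywhere by an easy upper bound on $|Z_{N,h}| \le Z_{N,\Re(h)}$-type estimate, while the matching liminf and the $L^1$ control come from Fatou/dominated convergence for logarithmic potentials (zeros cannot escape the compact cylinder, so $-U^{\mu_N}$ is bounded above uniformly and the $L^1_{\mathrm{loc}}$ lower-semicontinuity machinery applies). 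Second, one needs the precise local structure of $\cC_\ga$ and of $\tf$ near $h=0$ to conclude the density vanishes \emph{only} there and to get the right order; this is exactly the ``sharper analysis near the critical point'' the abstract advertises, and is where I expect the real work to be — the rest is a fairly mechanical application of the potential-theoretic dictionary. The computation of $\cC_\ga$ from \eqref{eq:crit} and the smoothness/strict-positivity of the normal derivative away from $0$ should follow from implicit-function-theorem arguments applied to the defining equation $\tf(h) \in i\bbR$, using the explicit form \eqref{eq:Fcomplex}.
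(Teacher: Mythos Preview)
Your approach is essentially the paper's own: both pass through the distributional identity $\Delta \tf_N = 2\pi \mu_N$, use the tightness coming from Proposition~\ref{th:res0}, identify the limit via $\int \tf\, \Delta f = 2\pi \int f\, \dd\mu$ (the paper's Proposition~\ref{th:mu}), and read off the density on $\cC_\ga$ as the jump in the normal derivative of $\Re(\tf)$, vanishing only at the origin.

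One point worth flagging: your sketched handling of the $L^1_{\mathrm{loc}}$ step is the weakest link. The ``easy upper bound'' $|Z_{N,h}|\le Z_{N,\Re(h)}$ does give $\tfrac1N\log|Z_{N,h}|\le \tf(\Re(h))$, but this is \emph{not} $\le \Phi(h)$ on the portion of $\cD_\ga$ with $\Re(h)>0$, so it does not by itself yield $\limsup\le\Phi$ everywhere. What it does give is a locally uniform upper bound, which is all that is needed for the $L^1_{\mathrm{loc}}$ argument; but then you still need to control the contribution of the set where $\tf_N$ dips far below $\Phi$ near $\cC_\ga$. The paper does not invoke abstract lower-semicontinuity machinery here: it proves directly (Lemma~\ref{th:Ffmu}) that $\int_{A_\gep}\tf_N\,\Delta f\to 0$ as $\gep\to 0$ uniformly in $N$, by writing $\tf_N$ as a sum of $N$ logarithmic kernels and using that all zeros sit in a fixed compact (so $\int_{A_\gep}|\log|e^h-e^{h_{N,j}}||$ is $O(\gep|\log\gep|)$ uniformly). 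This is short and avoids any appeal to potential-theoretic black boxes; your ``Fatou/dominated convergence'' paragraph would benefit from being replaced by this concrete estimate.
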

\medskip

The argument of proof of Theorem~\ref{th:empprobab} -- i.e., sharp asymptotic control on $ Z_{N, h}$ -- directly yields the following result that 
makes more evident the role of the harmonic continuation of the free energy.

\medskip

\begin{theorem}
\label{th:fe}
With the inter-arrival distribution \eqref{eq:Kspecial}  we have that 
\begin{enumerate}
\item
if $h\in \cD_\ga$ then $\lim_N (1/N) \log \vert Z_{N, h} \vert =0$;
\item 
if $h\in \cL_\ga$ then
\begin{equation}
\lim_{N\to \infty} \frac 1N \log \left \vert Z_{N, h} \right\vert\, = \, \Re\left( \tf(h)\right) \, ,
\end{equation}
where  $ \tf(h)$ is given in  \eqref{eq:Fcomplex}. 
\end{enumerate}
Both results hold uniformly if $h$ is chosen bounded away from $\cC_\ga$.

Moreover $\lim_N (1/N) \log \vert Z_{N, h} \vert =0$ also if $h \in \cC_\ga$. 
\end{theorem}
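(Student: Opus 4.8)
The plan is to read all three statements off a single \emph{singularity analysis} of the generating function of $(Z_{N,h})_N$, with $h$ playing the role of a parameter; the delocalized, localized, and boundary cases will simply correspond to the three possible positions of one pole relative to the unit circle. Decomposing on the number $k$ of renewal points in $[1,N]$ gives $Z_{N,h}=\sum_{k\ge1}e^{hk}(K^{*k})(N)$, hence for $|w|$ small
\[
\widehat Z_h(w):=\sum_{N\ge1}Z_{N,h}\,w^N=\frac{e^h\widehat K(w)}{1-e^h\widehat K(w)}\,,
\]
and with \eqref{eq:Kspecial} and the explicit form \eqref{eq:Kz} this is $\widehat Z_h(w)=\frac{e^h(1-(1-w)^\ga)}{(1-e^h)+e^h(1-w)^\ga}$, which extends to a function meromorphic on $\bbC\setminus[1,\infty)$. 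On that cut plane its only possible pole is the unique solution $w_*(h)$ of $(1-w)^\ga=1-e^{-h}$, namely $w_*(h)=1-(1-e^{-h})^{1/\ga}=e^{-\tf(h)}$ with $\tf$ as in \eqref{eq:Fcomplex} (there is no pole at all when $|\arg(1-e^{-h})|\ge\ga\pi$), while $w=1$ is an algebraic branch point with $\widehat Z_h(w)=\frac{e^h}{1-e^h}-\frac{e^h}{(1-e^h)^2}(1-w)^\ga+O((1-w)^{2\ga})$ for $h\neq0$.

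Next I would isolate $Z_{N,h}=[w^N]\widehat Z_h(w)$ by the classical contour argument: push a small circle out to $\{|w|=1+\gep\}$ with a Hankel slit along $[1,1+\gep]$, picking up (i) the residue at $w_*(h)$, contributing $A(h)\,w_*(h)^{-N-1}$ with $A(h)=e^{-h}\big/\big(\ga(1-w_*(h))^{\ga-1}\big)\neq0$, whenever $w_*(h)\in\{|w|<1+\gep\}$; (ii) the slit term, which by the $(1-w)^\ga$ behaviour is $\sim-\frac{e^h}{(1-e^h)^2}\frac{N^{-1-\ga}}{\Gamma(-\ga)}$; and (iii) the $O((1+\gep)^{-N})$ circle term. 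The position of $w_*(h)$ is exactly what the description of $\cC_\ga$ and Lemma~\ref{th:crit} control: on $\cL_\ga$ one has $|w_*(h)|=e^{-\Re\tf(h)}<1$, so $w_*$ is the dominant singularity; on $\cD_\ga$ there is no pole in $\overline{B_0(1)}$; and on $\cC_\ga\setminus\{0\}$ there is a simple pole exactly on the unit circle at $w_*(h)=e^{-i\theta}$. Hence: for $h\in\cL_\ga$, $Z_{N,h}=A(h)w_*(h)^{-N-1}(1+o(1))$, so $(1/N)\log|Z_{N,h}|\to-\log|w_*(h)|=\Re\tf(h)$ (part (2)); for $h\in\cD_\ga$, $Z_{N,h}\sim c(h)N^{-1-\ga}$ with $c(h)=-e^h\big/\big((1-e^h)^2\Gamma(-\ga)\big)\neq0$, so $(1/N)\log|Z_{N,h}|\to0$ (part (1)); for $h\in\cC_\ga\setminus\{0\}$, $Z_{N,h}=A(h)w_*(h)^{-N-1}+O(N^{-1-\ga})$, of modulus $|A(h)|+o(1)$ with $|A(h)|\in(0,\infty)$, so again $(1/N)\log|Z_{N,h}|\to0$. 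The point $h=0$ is handled directly from $\widehat Z_0(w)=(1-w)^{-\ga}-1$, giving $Z_{N,0}=\binom{N+\ga-1}{N}\sim N^{\ga-1}/\Gamma(\ga)$. On a compact subset of $\cL_\ga$ (resp.\ $\cD_\ga$), $\Re\tf(h)$ is bounded below by a positive constant (resp.\ $|w_*(h)|\ge1+\delta$, or $w_*$ is absent) and $A(h),c(h)$ stay bounded away from $0$ and $\infty$, so $\gep$ and all error constants can be chosen uniformly, yielding the stated uniformity away from $\cC_\ga$.

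\textbf{Main obstacle.} The delicate ingredient is the \emph{global} control of the pole used above: that the property ``$\widehat Z_h$ has a pole inside $B_0(1)$'' is precisely the characterization of $\cL_\ga$, i.e.\ that $|w_*(h)|$ never returns to $1$ within $\cL_\ga$ or within $\cD_\ga$. This is the geometric content of Lemma~\ref{th:crit} ($\cC_\ga$ a simple closed curve in $\bbC_{2\pi}$ separating the two regions and meeting $\bbR$ only at $0$) together with the two elementary facts that $|w_*(h)|=1$ can occur only on $\cC_\ga$ and that $w_*(h)$ can be created or absorbed only on $[1,\infty)$, hence outside $\overline{B_0(1)}$; turning this — and the Hankel estimates — into uniform bounds is where most of the bookkeeping concentrates. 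I also stress that the entire scheme rests on \eqref{eq:Kz}: it is what lets $\widehat Z_h$ be continued past its circle of convergence, which is indispensable precisely on $\cC_\ga$, and in its absence (cf.\ Remark~\ref{rem:fabry}) the argument would break down — the reason the sharp result is confined to the law \eqref{eq:Kspecial}.
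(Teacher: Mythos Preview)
Your proposal is correct and follows essentially the same route as the paper. The paper organizes the argument in two layers: it first establishes the sharp asymptotics of $Z_{N,h}$ in $\cL_\ga$ and $\cD_\ga$ separately (Proposition~\ref{th:Zsharp}, proved via the same singularity analysis and with an explicit appeal to \cite[Ch.~VI]{cf:FS} for the Hankel/transfer step), and then reads off Theorem~\ref{th:fe} as a corollary, treating $h\in\cC_\ga$ by combining the unit-circle pole with the branch-point contribution exactly as you do (this is \eqref{eq:ZsharpC}). Your ``main obstacle'' --- that ``pole of $\widehat Z_h$ in $B_0(1)$'' characterizes $\cL_\ga$, and that on $\cD_\ga$ the pole is uniformly outside a disk of radius $>1$ --- is precisely what the paper isolates as Lemma~\ref{th:3=} ($\cL_\ga=\cL''_\ga$) and Lemma~\ref{th:D+C}.
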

\medskip

\subsection{The zeros of the partition function: local control}
Theorem~\ref{th:empprobab} appears to be definitely sharper than Proposition~\ref{th:res0}, but we draw the attention on the fact that
Theorem~\ref{th:empprobab} is just about the empirical probability of the zeros and $o(N)$ of the zeros may behave in an arbitrary way without affecting the empirical probability. However, we do have also a stronger result

\medskip

\begin{theorem}
\label{th:sharp-loc}
With the inter-arrival distribution \eqref{eq:Kspecial}, for every $\gep>0$ there exists $N_\gep\in \bbN$ such that
the distance between  the support of $\mu_N$ and $\cC_\ga $ is smaller than $\gep$ for $N\ge N_\gep$.
\end{theorem}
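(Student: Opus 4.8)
Fix $\gep>0$. Since $\operatorname{supp}(\mu_N)$ is exactly the set of zeros of $Z_{N,h}$ in $\bbC_{2\pi}$, it suffices to show that $Z_{N,h}\ne0$ on
$U_\gep:=\{h\in\bbC_{2\pi}:\ \dist(h,\cC_\ga)\ge\gep\}$
for all $N\ge N_\gep$. Applying Proposition~\ref{th:res0} with this $\gep$ (and any $\gep'$) we already know $Z_{N,h}\ne0$ on $\{\Re(h)\le-\gep\}\cup\{\Re(h)\ge C\}$ for $N$ large, and by Lemma~\ref{th:crit} every remaining point of $U_\gep$ lies in $\cL_\ga$ or in $\cD_\ga$. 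So the claim reduces to the two \emph{compact} sets $K^{\cL}_\gep:=U_\gep\cap\cL_\ga\cap\{-\gep\le\Re(h)\le C\}$ and $K^{\cD}_\gep:=U_\gep\cap\cD_\ga\cap\{-\gep\le\Re(h)\le C\}$, each of which, since $\dist(\cdot,\cC_\ga)\ge\gep>0$ on it, is a compact subset of the open set $\cL_\ga$, resp.\ $\cD_\ga$.

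On $K^{\cL}_\gep$, Lemma~\ref{th:crit} and the discussion preceding~\eqref{eq:crit} give that $\cL_\ga$ is the connected component of $\{h:\ \Re(\tf(h))>0\}$ (with $\tf$ as in~\eqref{eq:Fcomplex}) containing $(0,\infty)$, so by continuity and compactness $\Re(\tf(h))\ge c>0$ on $K^{\cL}_\gep$. Theorem~\ref{th:fe}(2) then gives $|Z_{N,h}|\ge e^{Nc/2}$ there for $N$ large, hence no zeros. On $K^{\cD}_\gep$, Theorem~\ref{th:fe}(1) only yields $(1/N)\log|Z_{N,h}|\to0$, which does not by itself preclude zeros, so here one invokes the sharp asymptotics of $Z_{N,h}$ that underlie Theorems~\ref{th:empprobab} and~\ref{th:fe}: the renewal representation together with~\eqref{eq:Kz} gives, for $|z|$ small,
\begin{equation*}
\sum_{N\ge1}Z_{N,h}z^N\,=\,\frac{e^h\widehat K(z)}{1-e^h\widehat K(z)}\,=\,\frac{e^h-e^h(1-z)^\ga}{(1-e^h)+e^h(1-z)^\ga}\,,
\end{equation*}
and for $h\in\cD_\ga$, with the principal branch of $(1-z)^\ga$, the only singularity of this function in the closed unit disc is the branch point at $z=1$. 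Singularity analysis then yields
\begin{equation*}
Z_{N,h}\,=\,\frac{c\,e^h}{(1-e^h)^2}\,N^{-1-\ga}\bigl(1+o(1)\bigr)\,,\qquad c=\frac{1}{-\Gamma(-\ga)}\,,
\end{equation*}
uniformly on $K^{\cD}_\gep$. Since $e^h/(1-e^h)^2$ is bounded away from $0$ and $\infty$ on $K^{\cD}_\gep$ — on $\bbC_{2\pi}$ it is singular only at $h=0\in\cC_\ga$, which lies at distance $\ge\gep$ — the right-hand side is nonzero for $N$ large, so $Z_{N,h}\ne0$ on $K^{\cD}_\gep$. Taking $N_\gep$ larger than the thresholds from all four regions finishes the proof.

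The main obstacle is the \emph{uniformity} of the $\cD_\ga$-asymptotics right up to distance $\gep$ of $\cC_\ga$: as $h$ approaches $\cC_\ga$ the would-be pole $z_h=1-(1-e^{-h})^{1/\ga}$ of the generating function approaches the unit circle (and crosses it upon passage into $\cL_\ga$, cf.~\eqref{eq:Freal}), so the transfer-theorem estimate must be controlled uniformly while $z_h$ stays a fixed distance outside the unit disc. This uniform control is precisely what the asymptotic analysis behind Theorems~\ref{th:empprobab}--\ref{th:fe} supplies; the substance here is to combine it with the above compactness reduction and with Proposition~\ref{th:res0} in the unbounded directions $\Re(h)\to\pm\infty$.
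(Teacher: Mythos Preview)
Your argument is correct and follows essentially the same route as the paper: the paper simply states that Theorem~\ref{th:sharp-loc} is a direct consequence of Proposition~\ref{th:Zsharp}, which gives uniform sharp asymptotics $Z_{N,h}\sim c(h)\,z_{\ga,h}^{-(N+1)}$ on $\cL_\ga$ and $Z_{N,h}\sim K(N)\,e^h/(1-e^h)^2$ on $\cD_\ga$, in each case uniformly for $h$ bounded away from $\cC_\ga$, with nonzero leading constants. Your proof unpacks exactly this, routing the $\cL_\ga$ part through Theorem~\ref{th:fe}(2) (itself a corollary of Proposition~\ref{th:Zsharp}) and re-deriving the $\cD_\ga$ asymptotic via singularity analysis, while handling the unbounded strips via Proposition~\ref{th:res0}; the paper's Proposition~\ref{th:Zsharp} already absorbs the latter (its proof invokes Proposition~\ref{th:sharp-Reh>0} and, via the Remark after it, Proposition~\ref{th:Reh<0}).
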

\medskip

Theorem~\ref{th:sharp-loc} is therefore saying that all the zeros are at distance $\gep$  from the support of the limit empirical probability, and this is an important improvement on Theorem~\ref{th:empprobab}. But it is  still a very imprecise result near the most interesting point, that is $h=0$. In particular, it is straightforward to check that $\cC_\ga $ near the origin is asymptotically close to the angle 
$\{z: \, \vert\Arg(z)\vert = \ga \pi/2 \}$.  It thus seems natural to conjecture that the zeros  are close to $\{z: \, \vert\Arg(z)\vert = \ga \pi/2 \}$ when we look at the zeros that are very close to the origin, i.e.\ the zeros that are at a distance from the origin that vanishes for $N\to \infty$: for $\ga=1/2$ we will  show that the closest zeros are at a distance proportional to $1/ \sqrt{N}$ from the origin and that in a ball of radius $r/ \sqrt{N}$ we can find arbitrarily many zeros by choosing $r$ large, in the limit $N \to \infty$. However the  natural distance to consider on this scale is the one  rescaled by the size of the neighborhood we are considering.  Whether the zeros  are close to $\{z: \, \vert\Arg(z)\vert = \ga \pi/2 \}$ or not on such intermediate scales depends on a finer analysis that one can in principle deal with for every $\ga\in (0,1)$, but there are some obstacles to this analysis. The most important one is the implicit characterization  of the $\ga$-stable laws. There is however the notable 
exception of $\ga=1/2$ and, in fact, in this case we are able to go rather far.  Here and in all the rest of the paper we order $( h_{N,j})$ so that  $(\vert h_{N,j}\vert)$ is non decreasing.
 We actually choose $h_{N,1}$  one of the closest zeros to the origin with argument  in $[0, \pi]$.  Moreover $h_{N,2}:= \overline{ h_{N,1}}$, unless 
$h_{N,1}$ is real.

The next result is therefore restricted to $\ga=1/2$ (but we stress that it does not require the special set-up  \eqref{eq:Kspecial}). To state it we introduce the entire function 
\begin{equation}
\label{eq:F0zeta}
F_0(\zeta) \, :=\,  e^{\zeta^2}\zeta\left(1+\mathrm{erf}(\zeta)\right)+\frac{1}{\sqrt{\pi }}\, ,
\end{equation}
where erf$(\zeta)=(2/\sqrt{\pi}) \int_0^\zeta e^{-t^2} \dd t $ is the error function \cite[Ch.~7]{cf:DLMF}. 
We show in Lemma~\ref{th:zeta_1} that all the solutions  $F_0(\zeta_j)=0$ are given by an infinite  sequence $(\zeta_j)_{j=1, 2, \ldots }$ that can be ordered in such a way that the modulus is non decreasing and  $\zeta_{2j-1}=  \overline{\zeta_{2j}}$. With this we are implicitly saying that there is no real zero. Moreover,   all the zeros have  positive real part and they are all 
 simple (cf.~Remark~\ref{rem:simplezeros}). 

\medskip 

\begin{theorem}
\label{th:closest}
With the inter-arrival distribution \eqref{eq:Kgeneral}, $\ga=1/2$ and possibly by properly arranging the order 
of $(h_{N,j})$  we have that for every $j$ 
\begin{equation}
\label{eq:closest}
h_{N,j} \stackrel{N \to \infty} \sim \frac{\zeta_j}{N^{1/2}}\, .
\end{equation} 
\end{theorem}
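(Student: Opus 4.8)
The plan is to obtain a sharp local asymptotic expansion for $Z_{N,h}$ on the scale $|h| \asymp N^{-1/2}$, matching it to $F_0$, and then use Hurwitz's theorem (or Rouché) to transfer the zeros of the limit function to the zeros of $Z_{N,h}$. The starting point is the renewal representation $Z_{N,h} = \sum_{k\ge 1} e^{kh} \sum_{n_1 + \dots + n_k = N} \prod_i K(n_i)$, equivalently the generating-function identity $\sum_{N\ge 1} Z_{N,h} w^N = \frac{e^h \widehat K(w)}{1 - e^h \widehat K(w)}$ — or, writing $u = e^h - 1$ so that the critical point is $u=0$, $\sum_N Z_{N,h} w^N = \frac{(1+u)\widehat K(w)}{1-(1+u)\widehat K(w)}$. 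For $\ga = 1/2$ and general $K$ satisfying \eqref{eq:Kgeneral} we have $\widehat K(w) = 1 - c\sqrt{\pi}\,(1-w)^{1/2} + o((1-w)^{1/2})$ near $w=1$ (with $c = 1/(-\Gamma(-1/2)) = 1/(2\sqrt\pi)$ under the normalization, so $c\sqrt\pi = 1/2$), and the singularity of the generating function nearest the origin is governed by the equation $1 = (1+u)\widehat K(w)$. The first step is therefore to set $u = \zeta^2 / N$ (this is the right scaling: $1-\widehat K(w) \asymp \sqrt{1-w}$, so the dominant singularity $w_*(u)$ satisfies $1 - w_*(u) \asymp u^2$, and the $w^N$ extraction sees $N(1-w_*) \asymp N u^2$, which is order one exactly when $u \asymp N^{-1/2}$).

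The second step is the singularity analysis proper. I would use the classical transfer theorems of Flajolet–Odlyzko (singularity analysis of generating functions; this is the analytic toolkit the introduction advertises), applied to $G_u(w) := \frac{(1+u)\widehat K(w)}{1 - (1+u)\widehat K(w)}$ with $u = \zeta^2/N$, to show that
\begin{equation}
\label{eq:local-limit}
N^{1/2}\, Z_{N, \log(1+\zeta^2/N)} \stackrel{N\to\infty}{\longrightarrow} c_0\, F_0(\zeta)
\end{equation}
for some explicit nonzero constant $c_0$, uniformly for $\zeta$ in compact subsets of $\bbC$. The heuristic: $G_u(w)$ has a simple pole at $w_*(u)$ where $(1+u)\widehat K(w_*) = 1$, plus the branch-point contribution from the $(1-w)^{1/2}$ behaviour of $\widehat K$; expanding $(1+u)\widehat K(w) \approx (1+\zeta^2/N)(1 - \tfrac12(1-w)^{1/2})$ and extracting the $w^N$ coefficient via the Hankel-contour integral $\frac{1}{2\pi i}\oint (1-w)^{1/2}\cdots\, w^{-N-1}\dd w$ produces, after the substitution $1-w = t/N$, precisely an integral representation of $e^{\zeta^2}\zeta(1+\mathrm{erf}(\zeta)) + \pi^{-1/2}$ — this is where the error function enters, as the Fourier–Laplace transform of $t^{1/2}$-type kernels against $e^{-t}$. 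A clean way to organize this is to write $Z_{N,h}$ itself via Cauchy's formula $Z_{N,h} = \frac{1}{2\pi i}\oint G_u(w) w^{-N-1}\dd w$ and deform the contour onto a Hankel contour around $w=1$, then rescale. The only inputs on $K$ used are \eqref{eq:Kgeneral} with $\ga = 1/2$ (to control $\widehat K$ near $w=1$) and that $\widehat K$ extends analytically to a "$\Delta$-domain," i.e.\ a bit beyond the unit disk minus a sector at $1$ — but this last point needs care (see below).

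The third step is routine once \eqref{eq:local-limit} and Lemma~\ref{th:zeta_1} are in hand: $F_0$ is entire with a discrete zero set $(\zeta_j)$, all zeros simple, so by Hurwitz's theorem the zeros of $\zeta \mapsto N^{1/2} Z_{N,\log(1+\zeta^2/N)}$ converge, with multiplicity, to the $\zeta_j$ on every compact set; pulling back through $h = \log(1+\zeta^2/N) = \zeta^2/N + O(\zeta^4/N^2)$ and through the ordering convention gives $h_{N,j} \sim \zeta_j/\sqrt N$. One must also check that no zeros "escape to infinity or come in from infinity" on intermediate scales between $N^{-1/2}$ and the macroscopic scale — but Theorem~\ref{th:sharp-loc} together with the fact that $\cC_\ga$ near $0$ is the sector $|\Arg z| = \pi/4$ confines all zeros to a thin sector, and a crude lower bound $|Z_{N,h}| > 0$ for $N^{-1/2} \ll |h| \ll 1$ in that sector (from the same singularity analysis, now with $Nu^2 \to \infty$, where a saddle-point/pole estimate shows $|Z_{N,h}|$ is bounded below) rules this out.

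The main obstacle I anticipate is the uniformity and contour-deformation in the second step \emph{without} the special form \eqref{eq:Kspecial}: under only \eqref{eq:Kgeneral}, Remark~\ref{rem:fabry} warns that $\widehat K$ may have the unit circle as a natural boundary, so the Hankel contour cannot be pushed outside $|w| = 1$. The fix is not to deform past the circle but to use a contour hugging $|w| = 1$ from inside together with the standard singularity-analysis estimate that the part of the integral away from $w = 1$ contributes $o(N^{-1/2})$ — this is exactly the content of the Flajolet–Odlyzko transfer theorem under a one-sided ($\Delta$-domain) hypothesis, and one checks that \eqref{eq:Kgeneral} alone suffices to get the required error bounds on the circle minus a neighborhood of $1$ (using $|\widehat K(w)| < 1$ there, plus aperiodicity so that $\widehat K(w) = 1$ only at $w = 1$ on the closed disk). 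Making the error terms uniform in $\zeta$ over compacts, and tracking that the subleading $o((1-w)^{1/2})$ term in $\widehat K$ really does contribute only $o(1)$ after rescaling, is the delicate bookkeeping the proof will have to do carefully.
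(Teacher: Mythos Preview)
Your approach via singularity analysis of the generating function is not the route the paper takes, and it runs into a genuine obstacle under the general hypothesis \eqref{eq:Kgeneral}. You correctly flag the issue yourself: Remark~\ref{rem:fabry} shows that $\widehat K$ may have the whole unit circle as a natural boundary. But your proposed fix --- a contour ``hugging $|w|=1$ from inside'' --- does not rescue the argument. The Flajolet--Odlyzko transfer theorems are not one-sided: they require a $\Delta$-domain that extends \emph{beyond} the unit circle, because it is only by pushing the contour outward (away from $w=1$) that one gains the decay $|w|^{-N}\ll 1$ needed to show the arc away from the singularity contributes $o(N^{-1/2})$. On a contour at radius $1-c/N$ one has $|w|^{-N}=O(1)$, and on $|w|=1$ itself one gets at best Riemann--Lebesgue type $o(1)$, not the quantitative $o(N^{-1/2})$ you need (note $\widehat K'$ need not even be bounded on $\partial B_0(1)$ since $\sum nK(n)=\infty$).

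The paper sidesteps this entirely by a probabilistic route. One writes $Z_{N,h}=\sum_{j=1}^N e^{hj}\bP(\tau_j=N)$ and appeals to the Local Limit Theorem (Theorem~\ref{th:LLT}), valid under \eqref{eq:Kgeneral} alone, which gives $a_j\bP(\tau_j=n)\approx g_\ga(n/a_j)$ uniformly. Setting $h=\zeta/\sqrt N$ and approximating the sum by a Riemann integral produces Proposition~\ref{th:usingLLT}: $\sqrt N\,Z_{N,\zeta/\sqrt N}\to F_{0}(\zeta)$ uniformly on compacts, together with the analogous statement for $Z'_{N,\cdot}$. The Argument Principle (Proposition~\ref{th:zeros0}) then transfers the zeros, exactly as in your third step.

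Two smaller issues. First, your scaling $u=\zeta^2/N$ contradicts your own (correct) heuristic that $u\asymp N^{-1/2}$: the right parametrization is $h=\zeta/\sqrt N$, and with it $h_{N,j}\sim\zeta_j/\sqrt N$ falls out directly without the square. Second, you invoke Theorem~\ref{th:sharp-loc} to exclude zeros at intermediate scales, but that theorem is proved only under \eqref{eq:Kspecial}. Fortunately no such input is needed: applying Proposition~\ref{th:zeros0} with $D=B_0(R)$ for $R$ just above $|\zeta_j|$ already shows that the $j$ zeros of smallest modulus all lie in $B_0(R/\sqrt N)$, and taking $D$ a small ball around each $\zeta_j$ pins them down individually.
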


\medskip

In particular we will see that $\zeta_1= 1.225 \ldots + i 2.547\ldots $, that is $\Arg(\zeta_1)=1.122\ldots = (\pi/4) 1.429\ldots$.
This  is therefore saying that the closest zeros to the origin are not close to $\{z: \, \vert\Arg(z)\vert = \ga \pi/2 \}$. 
%: of course we have proven this only for $\ga=1/2$.

 It is highly plausible that $(\vert \zeta_{2j-1} \vert)$ is strictly increasing for every $j$ and therefore so is 
 $(\vert h_{N,2j-1}\vert )$ for $N$ sufficiently large: if this is the case we avoid the nuisance of properly choosing the 
 ordering of  $(h_{N,j})$ when there are values of $j$ such that $\vert h_{N,2j-1}\vert= \vert h_{N,2j+1}\vert$.
 %  for Theorem~\ref{th:closest} to hold. 

\subsection{The disordered pinning model and  Griffiths singularities}
\label{sec:introG}
The partition function of the  disordered version of the pinning model is
\begin{equation}
\label{eq:Zdismod}
Z_{N, \go,h} \, :=\, \bE\left[
\exp\left( \sum_{n=1}^{N} (\go_n+h) \gd_n\right) \gd_N\right]\, ,
\end{equation}
where  $\go=(\go_n)_n$ is an IID sequence, and $h$ a real parameter. Results on this disordered model have been obtained under mild conditions on the law of $\go_1$, but let us choose $\go_1$ that takes just two values. And, to be ready for the specific analysis we want to perform, 
we choose $\go_n= sb_n$, with $s$ a real number and $(b_n)$ IID Bernoulli$(1-p)$, $p \in (0,1)$. Moreover, $(b_n)$ and $(\tau_n)$ are independent and we are therefore working on a probability space that is the product of the space in which the disorder variables are defined, and the space on which the renewal process is defined: the probability is then the product probability of $\bbP$ and $\bP$.  
Note that if $s=0$ then the model is non disordered -- sometimes called \emph{pure} -- and $Z_{N, \go,h}= Z_{N, h}$. The reason for this atypical 
choice of disorder is that we are going to be interested in the limit $s \nearrow \infty$, but let us recall some general facts for now. 

First of all the \emph{free energy density} $\tf_s(h)= \lim_N (1/N) \bbE \log Z_{N, \go,h}$
exists and it is a convex non decreasing function of $h$. In fact,
$\tf_s(h)=0$ for every $h\le h_c(s, p)\in \bbR$ and $h\mapsto \tf_s(h)$ is strictly increasing (and strictly convex) for 
$h> h_c(s, p)$. Many estimates are available on the  value of $h_c(s, p)$, and in some cases it can even be computed exactly, but this will not be important for us here. While it is clear that $h_c(s, p)$ is a critical point, i.e. 
$h\mapsto \tf_s(h)$ is not real analytic at $h_c(s, p)$, the only result available on the regularity of $h\mapsto \tf_s(h)$ for $h>h_c(s, p)$ for $s\neq 0$ is that it is $C^\infty$ \cite{cf:GTAlea}. On the other hand, for $s=0$ (non disordered case), 
the free energy density $h\mapsto \tf_0(h)=\tf(h)$ is real analytic except at $h=h_c(0, p)=0$. The transition at $h_c(s, p)$
is a delocalization to localization transition as explained in detail for example in \cite{cf:GB,cf:G,cf:dH}: we refer to \cite{cf:BL} for an updated bibliography.

The obstruction to showing analyticity in the presence of disorder is not just a technical problem: R. B. Griffiths showed in 1969 that disorder may induce singularities. Griffiths' full argument was given for the ferromagnetic Ising model with dilution; that is, Ising model on a  lattice, $\bbZ^d$, in which some bonds are deleted.  In spite of a large amount of literature on the issue, the  understanding of Griffiths singularities is still poor. In particular Griffiths singularities are expected to be rather generic, but their existence is proven only in very specific cases (for example, in presence of dilution, which corresponds to introducing  infinite potentials in the system). 

It is very natural to ask whether Griffiths singularities are present for the pinning model: is $h\mapsto \tf_s(h)$
analytic for $h>h_c(s, p)$ or are there other non analyticity points or regions?
This question has been tackled in \cite{cf:KM} by considering the $s\nearrow \infty $ limit of the model we just introduced. To deal with this limit
it is practical to consider also the discrete renewal  process $\gs=(\gs_n)$ that marks the sites where $b_n=1$  and set $N_\gs:= \sup \{j:\, \gs_j \le N \}$, with $\gs_0:=0$. By this we mean that, if $N_\gs >0$, 
$\{\gs_1, \ldots, \gs_{N_\gs}\}=\{n =1, \ldots, N:\, b_n=1\}$. 
Otherwise $\{n =1, \ldots, N:\, b_n=1\}$ is empty.
Separating out the contribution of the realizations where the renewal process $\tau$ visits all of the sites in $\gs$, we have
\begin{equation}
\label{eq:sinfty}
Z_{N, \go,h} \, =\, 
%\exp \left( s \sum_{n=1}^{N} b_n\right) 
%\left(\prod_{j=1}^{N_\gs} Z_{\gs_j- \gs_{j-1}, h}  \right)
%Z_{N- \gs_{N_\gs}, h} \left[1 + O(\exp(-s))  \right]
\exp \left( N_\sigma s \right) 
\left(\prod_{j=1}^{N_\gs} Z_{\gs_j- \gs_{j-1}, h}  \right)
Z_{N- \gs_{N_\gs}, h} 
+ O\left( \exp\left((N_\sigma -1\right)s \right)
\end{equation}
for $N$ fixed and $s \nearrow \infty$.

 It is straightforward to check that $\lim_{s\nearrow \infty} h_c(s,p)=-\infty$, so the limit model is always localized. 
One can now consider as reduced model the first term in the right-hand side of \eqref{eq:sinfty} %, after all what we leave out is just an affine  function of $h$,   
and the free energy density of this model 
is (of course a.s. $\lim_N N_\gs /N =1-p$) 
\begin{equation}
s(1-p) + (1-p)^2 \sum_{n=1}^\infty p^{n-1} \log Z_{n, h}\,,
\end{equation}
where we have used the Law of Large Numbers: $\bbE[\vert \log Z_{\gs_1, h} \vert ]< \infty$ because 
 $e^h K(n) \le Z_{n, h} \le \exp(n \max(h,0))$.
Note that  the existence of a Griffiths singularity in this reduced model boils down to 
determining whether 
\begin{equation}
h \mapsto \sum_{n=1}^\infty p^{n-1} \log Z_{n, h}\, =:\, \tilde \tf_p (h)\, ,
\end{equation}
is real analytic or not and the prediction is straightforward: 
$ \tilde \tf_p (\cdot)$ does have a singularity in zero, because  the zeros of $Z_{n, h}$ in the complex plane have a unique real  accumulation point, as $n \to \infty$, in the origin. 

The fact that the singularity is expected to happen at $h=h_c(0,p)=0$ in this specific model is very much in the spirit 
of Griffiths' idea. The critical point of the pure model ($s=0$) is $h=0$. For $s>0$ and large the system is essentially a collection of independent pure models pinned at the points on which $b_n=1$
and its (localization) critical point is $h_c(s, p)$. All of the pure systems in the collection are finite, so their contribution is analytic, but in this collection there are systems that are arbitrarily large (the larger, the fewer). 
And the larger they are, the less their contribution can be continued outside the real line in the proximity of $h=0$.
Therefore the total contribution is not analytic, but the free energy turns out to be  in any case $C^\infty$ at $h=0$ because  the large pure systems in the collection are \emph{exponentially rare}. 

%The diluted Ising model has its own critical temperature that depends on the dilution parameter, but it contains arbitrarily large isolated undiluted regions (and the nondiluted model has a different critical point): so these regions induce a non analytic behavior also at the nondiluted critical point. For the pinning model with binary disorder we consider  there are arbitrarily large regions in which the model is homogeneous with pinning potential equal to $h$ in each site. Of course the localization transition will be dominated by the stronger potentials$s+h$, so for $s$ large $h_c(s,p)$ will be negative and its absolute value very large, but one definitely expects a singularity also as effect of the regions in which the potential is just $h$ (with critical point at $h=0$).  

\smallskip

Here is the result that we have:

\medskip

\begin{theorem}
\label{th:griffiths}
In the framework of \eqref{eq:Kspecial} with $\ga=1/2$, 
$h \mapsto \tilde \tf_p (h)$ is real analytic except at $0$ where for $k \to \infty$
\begin{equation}
\label{eq:griffiths}
\frac {\partial_h^k \tilde\tf_p (h)}{(k-1)!}
  \bigg \vert_{h=0}\, =\,  C_1 C_2^k \exp(A \sqrt{k})
 \Gamma \left(\frac  k 2 +1\right) \left(\cos\left( {\mathtt a} k + {\mathtt b} \sqrt{k} +  \mathtt{c} \right) + O \left( \frac {(\log k)^2}{\sqrt{k}} \right)\right)\, ,
\end{equation}
where $C_1$, $C_2$, $A$, ${\mathtt a}$, ${\mathtt b}$ and $\mathtt{c}$ are real constants that we give explicitly in the proof (see Remark~\ref{rem:constants}).
In particular, as a consequence of the fact that ${\mathtt b} \neq 0$,  we have that there exists 
a $\bN_0\subset \bbN$ of density zero in $\bbN$ such that for $k \to \infty$ with $k\notin \bN_0$ we have
\begin{equation}
\label{eq:griffiths-2}
\frac {\partial_h^k\tilde\tf_p (h)}{(k-1)!}
  \bigg \vert_{h=0}
\, \sim\,  C_1 C_2^k \exp(A \sqrt{k})
 \Gamma \left(\frac  k 2 +1\right) \cos\left( {\mathtt a} k + {\mathtt b} \sqrt{k} +  \mathtt{c} \right)\, .
\end{equation}
\end{theorem}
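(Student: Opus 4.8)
The plan is to write the Taylor coefficients of $\tilde\tf_p$ at the origin as a sum over renewal lengths, to reduce each summand to the contribution of the two zeros of $Z_{n,\cdot}$ closest to $0$, and then to evaluate the resulting sum by a saddle‑point argument that is fed by a sharpening of Theorem~\ref{th:closest}. Throughout set $c_{k,n}:=[h^k]\log Z_{n,h}$ and, since $Z_{n,0}=\bP(n\in\tau)>0$ for every $n$ so that each $\log Z_{n,h}$ is analytic at $0$,
\[
b_k\, :=\, \frac{\partial_h^k\tilde\tf_p(h)|_{h=0}}{(k-1)!}\, =\, k\sum_{n\geq1}p^{n-1}c_{k,n}\, .
\]

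\textbf{Analyticity off the origin and reduction to one sum.} Given $h_0\in\bbR\setminus\{0\}$, Proposition~\ref{th:res0} (with parameters chosen according to the sign and size of $h_0$) provides a closed disc around $h_0$ on which $Z_{n,\cdot}$ is zero‑free for all large $n$; combining the factorisation $Z_{n,h}=\ga^n e^h\prod_{j=1}^{n-1}(e^h-e^{h_{n,j}})$ with the elementary bound $|Z_{n,h}|\leq Z_{n,\Re h}\leq e^{n\max(\Re h,0)}$ one gets $\sup_h|\log Z_{n,h}|=O(n)$ uniformly over such a disc, so $\sum_n p^{n-1}\log Z_{n,h}$ converges uniformly there and $\tilde\tf_p$ is analytic by the Weierstrass theorem (the finitely many $n$ carrying a zero in the disc are removed by shrinking it). A Cauchy estimate on a circle of radius comparable to $|\zeta_1|n^{-1/2}$, the distance from $0$ to the nearest zero by Theorem~\ref{th:closest}, yields $|c_{k,n}|=O(n^{k/2})$, so $b_k$ is finite for every $k$, i.e.\ $\tilde\tf_p\in C^\infty$ near $0$; non‑analyticity at $0$ will come out of the growth of $b_k$. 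From the same factorisation and the fact that near $h_{n,j}$ the $j$‑th factor has a simple logarithmic branch point, the next singularities being at distance $2\pi$, one gets $c_{k,n}=-\tfrac1k\sum_{j=1}^{n-1}h_{n,j}^{-k}+O((2\pi)^{-k})\cdot n$; since by Theorem~\ref{th:closest} the strictly closest zeros to $0$ are $h_{n,1}$ and $h_{n,2}=\overline{h_{n,1}}$, a routine split of the index set into a near‑origin block (where $h_{n,j}\sim\zeta_j/\sqrt n$ with $|\zeta_j|$ increasing) and a bulk block ($|h_{n,j}|\geq c\,n^{-1/2}\log n$) gives $b_k=-2\Re S_k+(\text{exponentially smaller})$, where $S_k:=\sum_{n\geq1}p^{n-1}h_{n,1}^{-k}$.

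\textbf{A refinement of Theorem~\ref{th:closest}.} Here the special law \eqref{eq:Kspecial} with $\ga=1/2$ is used. From $\sum_N Z_{N,h}x^N=e^h\widehat K(x)/(1-e^h\widehat K(x))$ and $\widehat K(x)=1-(1-x)^{1/2}$, the $x$‑generating function is meromorphic on $\bbC\setminus[1,\infty)$ with a simple pole at $x_p(h)=1-(1-e^{-h})^2$ and a square‑root branch point at $1$; extracting $[x^N]$ by deforming the contour onto a small circle around $x_p$ plus a Hankel contour around $1$, and setting $h=\zeta/\sqrt N$, $x=1-t/N$ (the substitution $u=s^2$ producing the error function), one obtains, uniformly for $\zeta$ in compacts,
\begin{equation*}
Z_{N,\zeta/\sqrt N}\, =\, \frac{1}{\sqrt N}\Big(F_0(\zeta)+\frac{F_1(\zeta)}{\sqrt N}+\frac{F_2(\zeta)}{N}+O\Big(\frac{(\log N)^2}{N^{3/2}}\Big)\Big)\, ,
\end{equation*}
with $F_0$ as in \eqref{eq:F0zeta} and $F_1,F_2$ explicit entire functions. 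Since $F_0(\zeta_1)=0$ with $F_0'(\zeta_1)\neq0$ (Remark~\ref{rem:simplezeros}), the implicit function theorem localises the corresponding zero of $Z_{N,\cdot}$ as
\begin{equation*}
h_{N,1}\, =\, \frac{\zeta_1}{\sqrt N}+\frac{a_1}{N}+\frac{a_2}{N^{3/2}}+O\Big(\frac{(\log N)^2}{N^2}\Big)\, ,
\end{equation*}
equivalently $\log h_{n,1}=\log\zeta_1-\tfrac12\log n+\beta_1 n^{-1/2}+\beta_2 n^{-1}+O(n^{-3/2}(\log n)^2)$, with $a_1,a_2,\beta_1,\beta_2$ explicit.

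\textbf{Laplace evaluation, conclusions, and the main obstacle.} Inserting this into $S_k=\sum_n p^{n-1}\exp(-k\log h_{n,1})=\zeta_1^{-k}\sum_n p^{n-1}n^{k/2}e^{-k\beta_1 n^{-1/2}-k\beta_2 n^{-1}}(1+O(kn^{-3/2}(\log n)^2))$, the exponent has a unique saddle at $n_\ast=\tfrac{k}{2|\log p|}(1+O(k^{-1/2}))$; a Laplace expansion there together with Stirling for $n_\ast^{k/2}p^{n_\ast}$ produces $\Gamma(k/2+1)$, a geometric factor $C_2^k$ (collecting $|\zeta_1|^{-k}$, $(2|\log p|)^{-k/2}$ from the saddle, and $\Re\beta_2$ as a constant), a factor $e^{A\sqrt k}$ (collecting $\Re\beta_1$ through $k n_\ast^{-1/2}\asymp\sqrt{2|\log p|\,k}$ and the $O(\sqrt k)$ shift of the saddle), and a bounded positive prefactor; the arguments of $\zeta_1^{-k}$ and of the $\beta_1$‑term furnish the phase, so that $b_k=-2\Re S_k+(\text{exp.\ smaller})$ takes the form \eqref{eq:griffiths}, the remainder $O((\log k)^2/\sqrt k)$ being precisely the error $O(kn^{-3/2}(\log n)^2)$ evaluated at $n_\ast$ (it must be carried additively past the cosine, as it dominates near the zeros of the cosine). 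Tracking all contributions gives the explicit $C_1,C_2,A,\mathtt a,\mathtt b,\mathtt c$ of Remark~\ref{rem:constants}, with $\mathtt a\equiv-\Arg\zeta_1$ and $\mathtt b$ a nonzero multiple of $\Im\beta_1$ plus an explicit saddle‑shift term; that $\mathtt b\neq0$ is a finite explicit check. Granting $\mathtt b\neq0$, the sequence $\mathtt a k+\mathtt b\sqrt k+\mathtt c$ is equidistributed mod $2\pi$ (Weyl/van der Corput, using $\mathtt b\neq0$), so the set $\bN_0$ on which $|\cos(\mathtt a k+\mathtt b\sqrt k+\mathtt c)|<(\log k)^2/\sqrt k$ has density $0$, giving \eqref{eq:griffiths-2}; and along $k\notin\bN_0$ one has $|b_k|^{1/k}\geq(C_2\,\Gamma(k/2+1))^{1/k}(1+o(1))\to\infty$, so the Taylor series of $\tilde\tf_p$ at $0$ has radius of convergence $0$ and $\tilde\tf_p$ is not analytic there, while it is analytic on $\bbR\setminus\{0\}$ by the first step. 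The hard part is the pairing of the two asymptotic regimes in the last two steps: one needs the expansion of $Z_{N,\zeta/\sqrt N}$, hence of $h_{N,1}$, two orders beyond the leading one and with an $O(N^{-2}(\log N)^2)$ remainder that is \emph{uniform} over the whole range of $N$ contributing to the saddle $n\asymp k$, since the Laplace sum couples ``$N$ large'' to ``$k$ large'' and crude termwise bounds lose the precision required for the $O((\log k)^2/\sqrt k)$ error; extracting the explicit constants and checking $\mathtt b\neq0$ is delicate but finite.
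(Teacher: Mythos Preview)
Your overall architecture matches the paper's: reduce $\partial_h^k\tilde\tf_p(0)$ to $-\sum_n p^n\sum_j h_{n,j}^{-k}$ (the paper does this by passing to the auxiliary function $f$ in \eqref{eq:ftilde}, which is your $O((2\pi)^{-k})$ step made precise), isolate the $j=1,2$ contribution, feed in a two-order refinement of Theorem~\ref{th:closest}, evaluate the resulting sum by a Laplace/saddle argument around $n\asymp k/(2|\log p|)$, and finish with equidistribution of $\mathtt a k+\mathtt b\sqrt k+\mathtt c$ mod $2\pi$. The one genuine methodological difference is in how the refined expansion of $h_{N,1}$ is obtained: you propose singularity analysis of $\widehat Z_h(z)$ via a pole-plus-Hankel decomposition, whereas the paper (Proposition~\ref{th:sharper0}) uses the Enriquez--Noiry moment representation \eqref{eq:what-the-}, which it finds less cumbersome for extracting the higher-order terms; both should yield the same $z_0,z_1,z_2$. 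One technical slip: your ``bulk block'' bound $|h_{n,j}|\ge c\,n^{-1/2}\log n$ is neither proved nor needed; what the paper uses (and what suffices) is the uniform ratio bound of Corollary~\ref{th:locate0}, namely $|h_{n,j}|/|h_{n,1}|\ge 1+\delta$ for \emph{all} $j\ge3$, which makes the $j\ge3$ sum exponentially smaller than the $j=1,2$ contribution after the polylogarithm asymptotics.
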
 
\medskip

Theorem~\ref{th:griffiths} is
strongly related to Theorem~\ref{th:closest}, notably to  
\eqref{eq:closest} for the case $j=1$: the two zeros that are closest to the origin determine the leading behavior of the singularity.
However, to obtain \eqref{eq:griffiths} we have employed a substantial refinement of \eqref{eq:closest} in the case $j=1$: see Proposition~\ref{th:sharper0}.

A priori  \eqref{eq:griffiths} may not be very informative because $ \cos\left( {\mathtt a} k + {\mathtt b} \sqrt{k} +  \mathtt{c} \right) $ may be arbitrarily close to zero and $O( ( \log k)^2/ \sqrt{k})$ may become leading. But, as  we will explain in the proof, $\vert \cos\left( {\mathtt a} k + {\mathtt b} \sqrt{k} +\mathtt{c} \right)\vert \gg ( \log k)^2/ \sqrt{k}$ except on a density zero subsequence of values of $k$. This is spelled out in 
\eqref{eq:griffiths-2}, which implies non analyticity of $\tf_p (\cdot)$ at the origin because of the superexponential growth of the right-hand side in \eqref{eq:griffiths-2}.

Theorem~\ref{th:griffiths} also shows that the picture of the phenomenon (location of the zeros, Griffiths singularities) given in 
\cite{cf:KM}, while qualitatively to a certain extent correct, it is quantitatively imprecise. The limit of the analysis in \cite{cf:KM} is that it plays on the fact  that the zeros accumulate along the lines with slope $\pm \tan (\ga \pi /2)$ near the origin. This is true in an appropriate \emph{mesoscopic} sense, but, as we have seen, the leading Griffiths singularity of the reduced model (introduced in  \cite{cf:KM}) depends only on the two conjugate zeros closest to the origin, and they are not close (on the correct microscopic scale) to those two lines.

\subsection{About  the tools we use, organization of proofs, perspectives}
\label{sec:sketch}
\subsubsection{How we tackle the problem} 
Our approach mixes probabilistic tools and  analytic ones. 
We discuss in some detail the  proof of  Proposition~\ref{th:res0} because it contains some of the main tools we also use  for the sharper results that follow. 
A direct consequence of a basic result in renewal theory \cite[Th.~A]{cf:Doney} is that for $h< 0$ \cite[Ch.~2]{cf:GB}: 
\begin{equation}
\label{eq:fD}
Z_{N, h} \stackrel{N\to \infty} \sim \frac{e^h}{ (1-e^h)^2} K(N)\, .
\end{equation} 
Proposition~\ref{th:Reh<0} says that  \eqref{eq:fD} holds also in the complex plane, provided that $\Re(h)<0$.
Moreover this asymptotic behavior is uniform if $\Re(h)$ is bounded away from zero: this directly entails that, asymptotically,  
$Z_{N, h}\neq 0$ in the left complex half plane, and that no zero escapes to $-\infty$ as $N \to \infty$.  
The proof of Proposition~\ref{th:Reh<0} uses \cite[Th.~A]{cf:Doney} much in the same way as for \eqref{eq:fD}. 

On the other hand, we already know that $\log Z_{N, h} \sim N \tf(h)$, with $\tf(h)>0$ and increasing  for $h>0$, so we definitely expect that also for $\Re(h)>0$ and $\vert \Im(h)\vert $ \emph{somewhat small} with respect to  $\Re(h)$ the partition function still grows exponentially. 
In fact, we will show that for $\Re(h)$ sufficiently large, exponential growth holds regardless of the value of  $\Im(h)$. 
In order to make this concrete and quantitative we exploit  the singularity analysis of the $z$-transform (characteristic function). Recall \eqref{eq:Kz} for the notation: the $z$-transform of $(Z_{N, h})$ can be easily computed 
in terms of the $z$-transform of $(K(N))$. In fact 
with $Z_{0, h}:=1$ we have
\begin{multline}
\label{eq:z-t}
%\begin{split}
\widehat Z_h(z)\, :=\,
\sum_{n=0}^\infty z^n Z_{n,h}\, 
=\, 1+ \sum_{n=1}^\infty \sum_{k=1}^n \sumtwo{\ell \in \bbN^k: }{ \sum_j \ell_j=n} 
\prod_{j=1}^k \left( e^h z^{\ell_j} K(\ell_j) \right)
%&
\\ =\, 1+ \sum_{k=1}^\infty \sum_{\ell \in \bbN^k}\prod_{j=1}^k \left( e^h z^{\ell_j} K(\ell_j) \right) \,
%&
=\, 
\frac 1{1- e^h\sum_{j=1}^\infty  z^j K(j)}\,=\, 
\frac 1{1- e^h \widehat K(z)}\, .
%\end{split}
\end{multline}
These steps are justified only for $\vert z \vert$ small, as it can be seen also from the rightmost term:
 the radius of convergence of  $\widehat K(z)$ is one, so $\widehat Z_h(z)$ is meromorphic in the unit disk.
The precise asymptotic behavior of $(Z_{N, h})$ can be obtained by analyzing the singularities of  $\widehat Z_h(z)$: in particular it is well known \cite{cf:FS} that, if
$1- e^h \widehat K(z_0)=0$ for $\vert z_0 \vert < 1$ (let us assume that there is a unique zero with minimal modulus and that  this zero, which we call $z_0$, is simple: of course general results are available) 
then the leading behavior of $Z_{N, h}$ for $N \to \infty$ is $\vert z_0 \vert^{-N}$ times an explicit $h$ dependent non zero constant.  
One can actually show that this result is uniform in a neighborhood of $h$ and, as before, this excludes  $Z_{N, h}=0$ in such a neighborhood and for $N$ sufficiently large. 

We are therefore at the level of the grey regions of Figure~\ref{fig:D0} and it is natural, in analogy with the real case, to dub as \emph{localized} the region in which the free energy has exponential growth: we could therefore define $\cL_\ga$ as the values of $h$ such that 
$1- e^h \widehat K(z)=0$ can be solved for $\vert z \vert< 1$. 
We have chosen to introduce $\cL_\ga$ only for the special one parameter family of inter-arrival laws in \eqref{eq:Kspecial}
because our main focus is on the location of the zeros. As a matter of fact we have defined first the critical curve $\cC_\ga$ (Lemma~\ref{th:crit}), on which 
the zeros lie in the $N\to \infty$ limit, and this curve splits the whole space in two open regions $\cL_\ga$ and $\cD_\ga$ that are natural continuation of the localized and delocalized (non critical) real regions.
A posteriori (see Section~\ref{sec:alternative}), we do verify that $h \in \cL_\ga$ if and only if $1- e^h \widehat K(z)=0$ can be solved for $\vert z \vert< 1$. 

But if we understand why $\cL_\ga$ is asymptotically zero free, 
Figure~\ref{fig:D0} is telling us that the fact that $\cL_\ga$ and $\{h:\, \Re(h)<0\}$ are zero free 
leaves open a substantial region on which the zeros may end up being. 
It turns out that  complex analytic singularity analysis is useful in this region too, 
but only under the  requirement of being  able to analytically  continue $1- e^h \widehat K(z)$ beyond the unit circle. Note that  $\widehat K(z)$ has a singularity in $1$ and that this singularity is not a pole, but this does not exclude a continuation to the centered ball of radius $R>1$ minus  $[1, \infty)$ (or minus  a proper cone containing $[1, \infty)$, see for example \cite[Ch.~VI]{cf:FS}). In this case, singularity analysis does yield, again, sharp asymptotic control on $Z_{N, h}$ that excludes that $Z_{N, h}=0$ for $N$ large. 

\smallskip

\begin{rem}
\label{rem:L2}
In the special framework \eqref{eq:Kspecial} $\cC_\ga= \partial  \cL_\ga= \partial \cD_\ga$, but we have no reason to believe that this holds in full generality.
Our arguments heavily rely on a suitable analytic continuation and the general context does not grant this, see Remark~\ref{rem:fabry}. 
Moreover we do know that in different contexts the \emph{critical region} on which the zeros accumulate is not a curve: 
see the end of Section~\ref{sec:sketch}, notably the considerations on \cite{cf:D91}. 
\end{rem}

\smallskip

%Therefore, in order to control the continuation properties of $\widehat Z_h(z)$, or (equivalently) of $1- e^h \widehat K(z)$, we resort to  

Let us go more deeply into the special framework of \eqref{eq:Kspecial}. 
In this case,  by \eqref{eq:Kz}, we have 
\begin{equation}
\label{eq:z-tspec}
\widehat Z_h(z)\,=\, \frac 1{1- e^h \left(1-(1-z)^\ga \right)}\,,
\end{equation}
and we readily see that $z \mapsto1- e^h (1-(1-z)^\ga)$ can be continued to an analytic function to the whole of $\bbC$ minus a \emph{cut curve} that starts at $1$. %We choose the cut to be the line $[1, \infty)$.
Singularity analysis, once again, yields the sharp asymptotic behavior from which we conclude that also $\cD_\ga$ (and even $\cC_\ga$!) is eventually (i.e., for $N$ sufficiently large) zero free. But where are the zeros then? The point is that the results we obtain in $\cL_\ga$
and $\cD_\ga$ are uniform in $h$ bounded away from the critical curve $\cC_\ga$. This leaves the door open  to the possibility  that the zeros asymptotically accumulate on their boundary $\cC_\ga$. And, by exploiting tools from potential theory, we do prove that this  happens.

The limit of potential theory is that it yields only \emph{macroscopic results}, much in the sense that controlling the free energy yields a control on macroscopic observables. But we may be interested in sharper aspects: the crucial relevance of sharper estimates is definitely clear for $h \in \bbR$  \cite{cf:GB,cf:G}, notably (but not only) for $h$ close to the critical point, i.e. zero. And we are able to produce finer estimates precisely 
in  a complex neighborhood of the origin: for this we  exploit once again a probabilistic approach and identify the scaling behavior of $Z_{N, h}$ with $h$ that tends to zero with $N$ in a suitable way. We are thus able to understand the critical window in the complex plane (see \cite{cf:JS} for the real case). Results here are  mostly limited to $\ga=1/2$ because of  the non explicit character of the stable laws for $\ga\neq1/2$, even if we do not need to restrict to \eqref{eq:Kspecial}.  The Argument Principle is exploited, in conjunction with the scaling limits, to identify the position of the zeros. 

Corrections to the leading asymptotic locations of the zeros are obtained   in the  special framework of \eqref{eq:Kspecial} (still assuming $\ga=1/2$) and this is central for proving the results in connection with Griffiths singularities. 

\subsubsection{Organization of the rest of the paper}
In Section~\ref{sec:LDC} we prove Lemma~\ref{th:crit} and we provide alternative characterizations of $\cL_\ga$,
$\cD_\ga$ and $\cC_\ga$ that we use in the sections that follow.

Section~\ref{sec:sharp} exploits singularity analysis to obtain the sharp behavior of $(Z_{N, h})$. 
Proposition~\ref{th:Reh<0} is the only result in this section that does not rely  on singularity analysis and 
Proposition~\ref{th:Reh<0} plus Proposition~\ref{th:sharp-Reh>0} provide a full proof of Proposition~\ref{th:res0}.
The rest of the Section is devoted to the proof of Theorem~\ref{th:fe}: in fact, much more precise results are proven, see notably Proposition~\ref{th:Zsharp}.  Theorem~\ref{th:sharp-loc}  is also a direct consequence of Proposition~\ref{th:Zsharp}.  

Section~\ref{sec:potential} is devoted to the potential theory analysis. 
Theorem~\ref{th:empprobab} is a direct corollary of Proposition~\ref{th:mu}, but several other estimates of independent interest, notably about the limit density of the zeros on $\cC_\ga$, are given.

Section~\ref{sec:near0} is devoted to the precise analysis of the zeros close to the origin. One finds here a proof of Theorem~\ref{th:closest}, which follows from the general result in Proposition~\ref{th:zeros0} and the $\ga=1/2$ control on the scaling limit of Lemma~\ref{th:zeta_1}. This section contains also the much sharper estimate of Proposition~\ref{th:sharper0} which demands hypothesis \eqref{eq:Kspecial} and is crucial for the Griffiths singularity analysis. 

The Griffiths singularity analysis, with the proof of Theorem~\ref{th:griffiths}, is in Section~\ref{sec:G}.

\subsubsection{Perspectives and open problems}
The following are a few aspects of the related literature and plausible future developments.
\smallskip

\begin{itemize}[leftmargin=.2 in]
\item The pinning model may be considered the easiest exactly solvable statistical mechanics model. Yet, it does not enjoy the \emph{surprisingly rigid} structure of the  
Lee-Yang Circle Theorem \cite{cf:LY}, see \cite{cf:FV} and \cite{cf:Ruelle} for many developments and references. Nonetheless, in the special framework, the zeros do (asymptotically) lie on a closed curve that is smooth (except for the corner at the real critical point), but only in the limit. There is numerical evidence, see Figure~\ref{fig:zeros-horizon}, that the zeros approach the critical curve $\cC_\ga$ from the delocalized region $\cD_\ga$ and we believe that this is within reach of our tools (but we do not develop this aspect).
 Moreover, we do have (and present)  a good control of the zeros which are at distance $O(1/ \sqrt{N})$ from the origin when $\ga=1/2$, but results appear to be much more challenging if $\ga\neq 1/2$, or even for $\ga=1/2$ but on an \emph{intermediate scale}. By intermediate scale we mean  studying the  points close to the origin, but at a distance   
much larger than $1/ \sqrt{N}$.
\item What happens in the general framework of \eqref{eq:Kgeneral}? Theorem~\ref{th:closest} does shed some light, but ultimately only for $\ga=1/2$ and, worse, only for the zeros at distance $O(1/N^{\ga})$ from the origin. 
This  suffices to exclude the validity of  the  generalization to the pinning model, stated in  \cite{cf:KM},  of the conjecture in \cite{cf:IPZ} that the zeros should approach  the real critical point close to the lines with slope $\pm \tan (\ga \pi /2)$. However, this fact should hold on intermediate scales, i.e. for zeros that are far from the origin on the scale $1/ N^\ga$, but  a distance $o(1)$ from the origin. But this is precisely the intermediate scale region on which the control is poor. 
\item In \cite{cf:D91} (see also \cite{cf:DdSI} for models on hierarchical lattices) the random energy model is analyzed and the zeros densely fill a subset of $\bbC$ with non empty interior. Can this type of phenomena happen also for pinning models? We do not know the answer, but the fact that the critical region $\cC_\ga$ is a curve is by no means granted in the general framework (see Figure~\ref{fig:lac}). 
\item Our analysis is restricted to the case of $\ga \in (0,1)$. Larger values of $\ga$ can be treated as well, at least to a certain extent, but it is lengthy and  not straightforward. In particular, the case
$\ga \in (1,2)$ (inter-arrivals with finite first moment, but infinite variance) is different from the  $\ga \ge 2$ case, for which the inter-arrival law is in the domain of attraction of the Gaussian law. 
\item It is certainly possible to give a general statement for inter-arrival laws whose characteristic function satisfies a number of hypotheses, in particular suitable continuation properties, not only for the characteristic function  but also for its inverse  (defined a priori on the positive real axis). This is rather involved and, ultimately, we can verify the conditions only for \eqref{eq:Kspecial}, at least if we want to treat every $h \in \bbC$. 
\item It is very unfortunate that we control the Griffiths singularity only for the reduced model introduced in \cite{cf:KM}. As it is claimed in \cite{cf:KM}, the result should be somewhat robust and should hold also for the original model (at least close to the limit in which the reduced model emerges). How to prove this remains a challenge. But this challenge is a special case of  the (much more) general problem of showing the existence of Griffiths singularities for non diluted models.
\item A number of works, e.g. \cite{cf:ACCMM,cf:CMM,cf:GZ}, studied the dynamical counterpart of Griffiths singularities and rather sharp results have been obtained for some diluted lattice models.  We cite also \cite{cf:subdiff} for another type of dynamical phenomenon due to rare regions of Griffiths type. 
For pinning models the dynamical analysis is up to now limited to the nondisordered case: we cite \cite{cf:CMT} that deals with the localized phase, the one relevant for the Griffiths singularity, of the pinning model, but the results in \cite{cf:CMT} are without disorder. 
\end{itemize}

\subsubsection{Recurrent notations} We use $\overline z$ for the complex conjugate of $z$,
$B_w(r):= \{z \in \bbC: \, \vert z-w \vert < r\}$ for the open ball of radius $r>0$ centered in $w \in \bbC$ and 
 Sect$(\gb):=\{z:\, \vert \arg(z)\vert < \gb\}$ for the symmetric sector centered on the positive real axis, of angle opening $2\gb$. 

\section{On the localized, delocalized and critical regions (assuming \eqref{eq:Kspecial})}
\label{sec:LDC}

$\cL_\ga$, $\cD_\ga$ and $\cC_\ga$ are defined in Lemma~\ref{th:crit}, assuming \eqref{eq:Kspecial}: in this section we work only in this restricted framework.
We start by giving  a proof of Lemma~\ref{th:crit}, so $\cL_\ga$, $\cD_\ga$ and $\cC_\ga$ are well defined. Then we give alternative characterizations 
of these three sets. 

\subsection{About the critical curve: proof of Lemma~\ref{th:crit}}
$\cC_\ga$ is just the image of $\theta \mapsto (1-\exp(-i \theta))^\ga$ under the map $z \mapsto \Log (1-z)$. 
So we start with the following result:

\medskip

\begin{lemma}
\label{th:beforeLog}
The map  $\theta \mapsto (1-\exp(-i \theta))^\ga$ draws a simple closed curve  %$\cC_\ga^\tilde$
in $\cC$ when $\theta$ runs from $0$ to $2\pi$. This curve is invariant under complex conjugation, is  contained in the closure of Sect$(\ga \pi/2)$ and in 
 the closure of $B_0(2^\ga)$ (hence it is also contained   in the strip $\{z: 0 \le \Re(z)  \le 2^\ga$, see Fig.~\ref{fig:combined}(A)). For 
 $\theta\searrow 0$ and  $\theta \nearrow 2 \pi$ the curve is tangent to the boundary of Sect$(\ga\pi/2)$.  
Moreover, it is smooth, except at the origin. 
\end{lemma}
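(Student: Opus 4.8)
The plan is to reduce every assertion to an elementary closed form for the parametrization. Writing $1-e^{-i\theta}=e^{-i\theta/2}\bigl(e^{i\theta/2}-e^{-i\theta/2}\bigr)=2\sin(\theta/2)\,e^{i(\pi-\theta)/2}$ for $\theta\in(0,2\pi)$, and noting that $2\sin(\theta/2)>0$ there while the argument $(\pi-\theta)/2$ stays strictly inside $(-\pi/2,\pi/2)$, the principal branch power gives
\begin{equation*}
\gamma(\theta)\,:=\,\bigl(1-e^{-i\theta}\bigr)^{\ga}\,=\,\bigl(2\sin(\theta/2)\bigr)^{\ga}\,e^{i\ga(\pi-\theta)/2}\,,\qquad \theta\in(0,2\pi)\,,
\end{equation*}
and $\gamma(0)=\gamma(2\pi)=0$ by continuity (recall $|w^\ga|=|w|^\ga$). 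Equivalently, $\theta\mapsto 1-e^{-i\theta}$ runs over the circle $\{|z-1|=1\}$, which with the origin removed lies in the open right half plane Sect$(\pi/2)$, on which $z\mapsto z^{\ga}$ is analytic. All the geometric claims then follow by inspection of $\gamma$.

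First I would read off: $\arg\gamma(\theta)=\ga(\pi-\theta)/2\in(-\ga\pi/2,\ga\pi/2)$ for $\theta\in(0,2\pi)$, which together with $\gamma(0)=\gamma(2\pi)=0$ gives containment in the closure of Sect$(\ga\pi/2)$; $|\gamma(\theta)|=(2\sin(\theta/2))^{\ga}\leq 2^{\ga}$ with equality only at $\theta=\pi$, giving containment in the closure of $B_0(2^{\ga})$; and for any $z$ with $|\arg z|\leq\ga\pi/2<\pi/2$ and $|z|\leq 2^\ga$ one has $0\leq\Re(z)=|z|\cos(\arg z)\leq 2^{\ga}$, which is the strip statement. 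Invariance under conjugation is the identity $\gamma(2\pi-\theta)=\overline{\gamma(\theta)}$, immediate from $\sin(\pi-\theta/2)=\sin(\theta/2)$ and $e^{i\ga(\theta-\pi)/2}=\overline{e^{i\ga(\pi-\theta)/2}}$. For the simple closed curve claim: $\gamma$ is continuous with $\gamma(0)=\gamma(2\pi)$, and if $\gamma(\theta_1)=\gamma(\theta_2)$ with $\theta_1,\theta_2\in(0,2\pi)$ then equality of moduli forces $\sin(\theta_1/2)=\sin(\theta_2/2)$, i.e.\ $\theta_1=\theta_2$ or $\theta_1+\theta_2=2\pi$, and in the latter case equality of arguments, $\ga(\pi-\theta_1)/2=\ga(\pi-\theta_2)/2$, again forces $\theta_1=\theta_2$; since moreover $\gamma\neq0$ on $(0,2\pi)$, $\gamma$ is injective on $[0,2\pi)$, hence a simple closed curve.

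Smoothness away from the origin and the tangency statement come from differentiating: a one-line computation gives $\gamma'(\theta)=\ga\,(2\sin(\theta/2))^{\ga-1}\,e^{i(\ga\pi-(\ga+1)\theta)/2}$, which is real-analytic, finite and nonzero on $(0,2\pi)$; and as $\theta\searrow0$ one has $\gamma(\theta)/|\gamma(\theta)|\to e^{i\ga\pi/2}$ and $\gamma'(\theta)/|\gamma'(\theta)|\to e^{i\ga\pi/2}$, while as $\theta\nearrow2\pi$ these converge (up to the sign coming from the curve then heading back to the origin) to $e^{-i\ga\pi/2}$; thus the curve leaves and re-enters the origin tangentially to the two rays bounding Sect$(\ga\pi/2)$. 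I do not expect a genuine obstacle here: the only points needing a little care are justifying that the principal branch is the correct one — precisely the observation that $\arg(1-e^{-i\theta})\in(-\pi/2,\pi/2)$ stays away from the cut $(-\infty,0]$ — and stating the tangency cleanly, since $|\gamma'(\theta)|\to\infty$ as $\theta\to0^+,2\pi^-$, so one phrases it via the limiting unit tangent vector rather than $\gamma'$ itself.
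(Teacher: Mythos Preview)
Your proof is correct and is exactly the ``elementary arguments'' the paper alludes to: the paper's own proof is a single sentence observing that the curve is the image of $\partial B_1(1)$ under $z\mapsto z^\ga$, and you have simply carried this out explicitly via the parametrization $\gamma(\theta)=(2\sin(\theta/2))^\ga e^{i\ga(\pi-\theta)/2}$. The approach is the same; you have just supplied the details.
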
 
\medskip

\begin{proof}
The proof follows by elementary arguments based on the fact that the curve is the map of the circle $\partial B_1(1)=\{1-\exp(-i \theta): \, \theta \in [0, 2\pi)\}$ 
under  $z \mapsto z^\ga$. 
\end{proof} 

%\begin{figure}[h]
%\centering
%\includegraphics[width=7.5 cm]{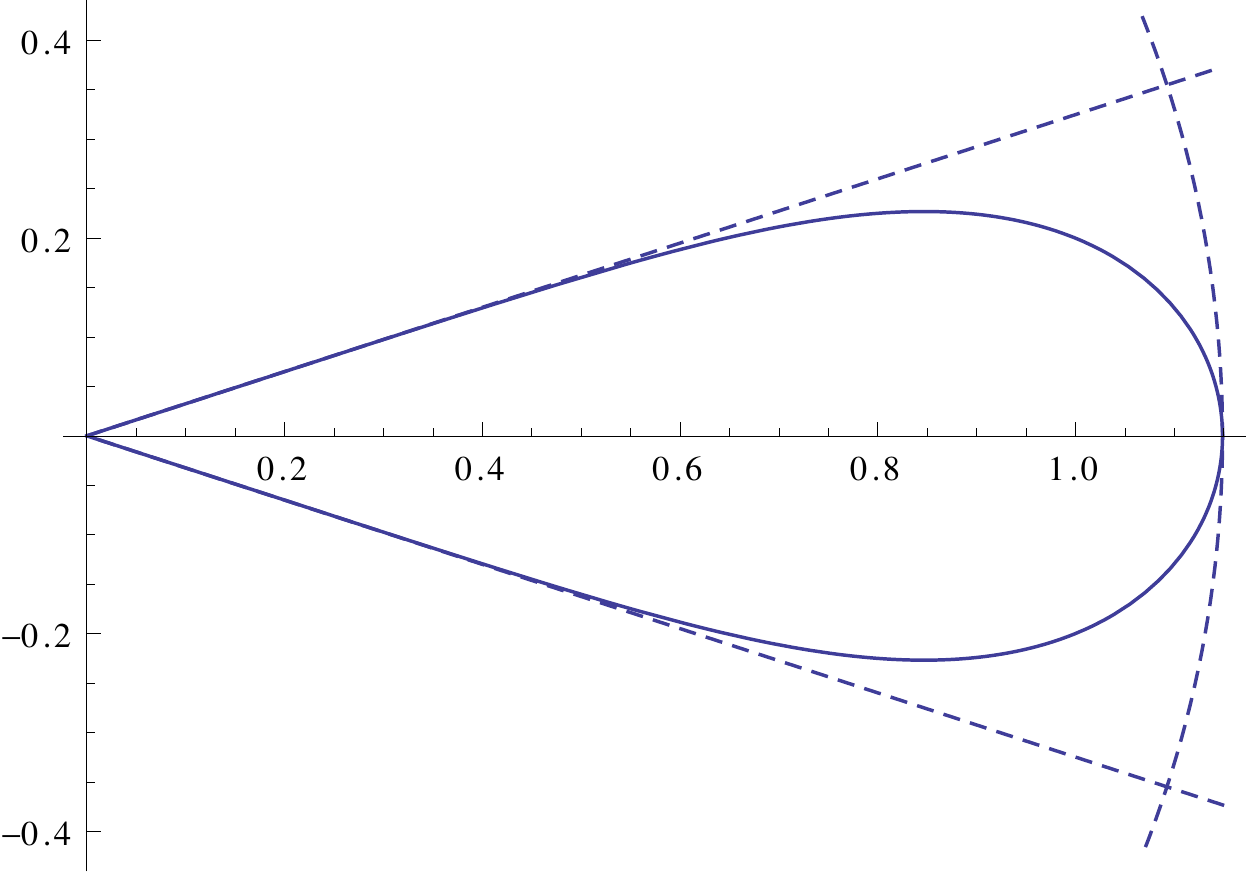}
%\caption{\label{fig:pre-curve} 
 %The curve drawn by the map map  $\theta \mapsto (1-\exp(-i \theta))^\ga$, with $\ga=0.2$. The dashed line are the bounds given in Lemma~\ref{th:beforeLog}.}
%\end{figure}

\begin{figure}[h!]
  \centering
  \begin{subfigure}[b]{0.55 \linewidth}
    \includegraphics[width=7.7 cm]{pre-curve.pdf}
    \caption{}
  \end{subfigure}
  \begin{subfigure}[b]{0.4\linewidth}
    \includegraphics[width=5.8 cm]{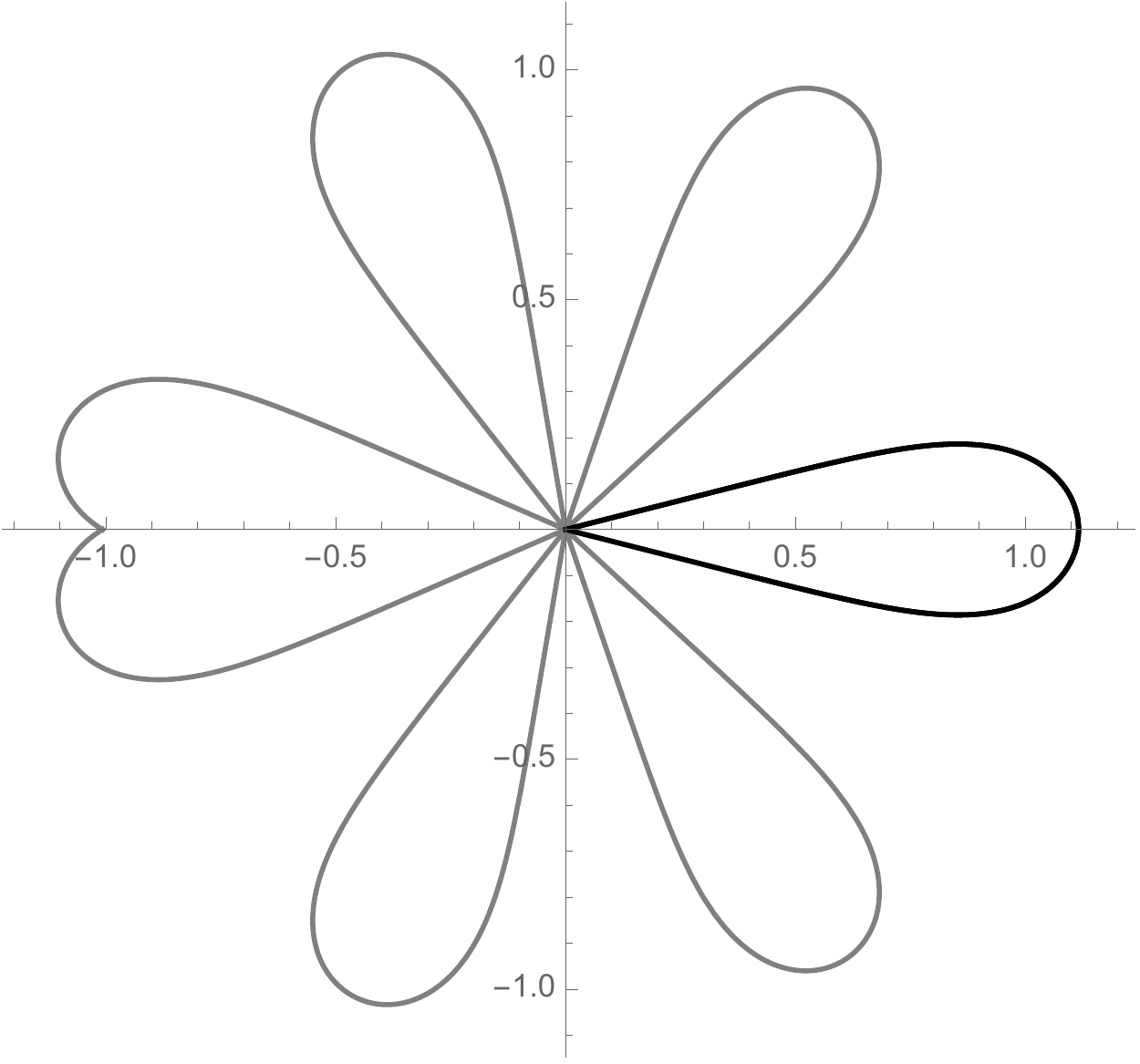}
    \caption{}
  \end{subfigure}
  \caption{In (A) the curve drawn by the map map  $\theta \mapsto (1-\exp(-i \theta))^\ga$, with $\ga=0.2$. The dashed line are the bounds given in Lemma~\ref{th:beforeLog}. In (B) we plot the set $\{\eta: \, \vert 1- \eta^{1/ \ga} \vert =1\}$ (recall that when we write $\eta^{1/ \ga}$ we mean that $\eta$ is in the domain of $z \mapsto z^{1/\ga}$, i.e. $\eta \notin (-\infty, 0]$)
   for  $\ga=1/\sqrt{40}\approx 0.1581$. }
 \label{fig:combined} 
\end{figure}

\medskip

\begin{proof}[Proof of Lemma~\ref{th:crit}]
This follows directly from Lemma~\ref{th:beforeLog} and some elementary considerations:
notably the fact that the curve  of Lemma~\ref{th:beforeLog} is in Sect$(\ga \pi /2)$  and tangent to its boundary approaching the origin says that $\cC_\ga$ does not enter 
Sect$(\ga\pi /2)$ and it is also tangent to this set approaching the origin. Moreover  
the curve  of Lemma~\ref{th:beforeLog} is in the closure of 
$B_0(2^\ga)$ (in fact,  the intersection with the boundary of $B_0(2^\ga)$ is just  the point $2^\ga$) and this 
 yields  that $\cC_\ga$ is in the strip $0 \le \Re(h) \le - \log (2^\ga -1)$ (and that the point of contact with the boundary of the strip are only $0$ and $ - \log (2^\ga -1) + i \pi$).
The curve of Lemma~\ref{th:beforeLog}
separates $\bbC$ into two connected components: the bounded one is mapped into $\cL_\ga$, and the unbounded one is mapped into $\cD_\ga$. 
\end{proof}
\medskip

\begin{rem}
\label{rem:branches}
 Figure~\ref{fig:combined}(B) identifies a phenomenon we need to watch out for: $\{(1-\exp(-i \theta))^\ga:\, \theta \in \bbR\}$ is a subset of
 $\{\eta: \, \vert 1- \eta^{1/ \ga} \vert =1\}$ and, unless $\ga \ge 2/3$, it is a proper subset. This is due to the fact that $(\eta \exp(2\pi i k \ga))^{1/\ga}= \eta^{1/\ga}$ if $\eta \exp(2\pi i k \ga)\in \bbC \setminus (-\infty,0)$. So 
 $\{\eta: \, \vert 1- \eta^{1/ \ga} \vert =1\}$ in general contains several copies of $\{(1-\exp(-i \theta))^\ga:\, \theta \in \bbR\}$ rotated by $ \exp(2\pi i k \ga)$, except that the phase $2\pi \ga + \arg(\eta)$ of the points in the rotated copies must be  in $(-\pi, \pi]$. In view of the (sharp) bounds in Lemma~\ref{th:crit} we see
 that the two sets coincide if and only if the curve for  $k=1$ has empty intersection with the upper half plane
 (equivalently,  the curve for  $k=-1$ has empty intersection with the lower half plane). This amounts to 
  $2\pi \ga - (\pi/2) \ga \ge  \pi$, i.e. $\ga \ge 2/3$.   
\end{rem}

 \begin{figure}[h]
\centering
\includegraphics[width=12 cm]{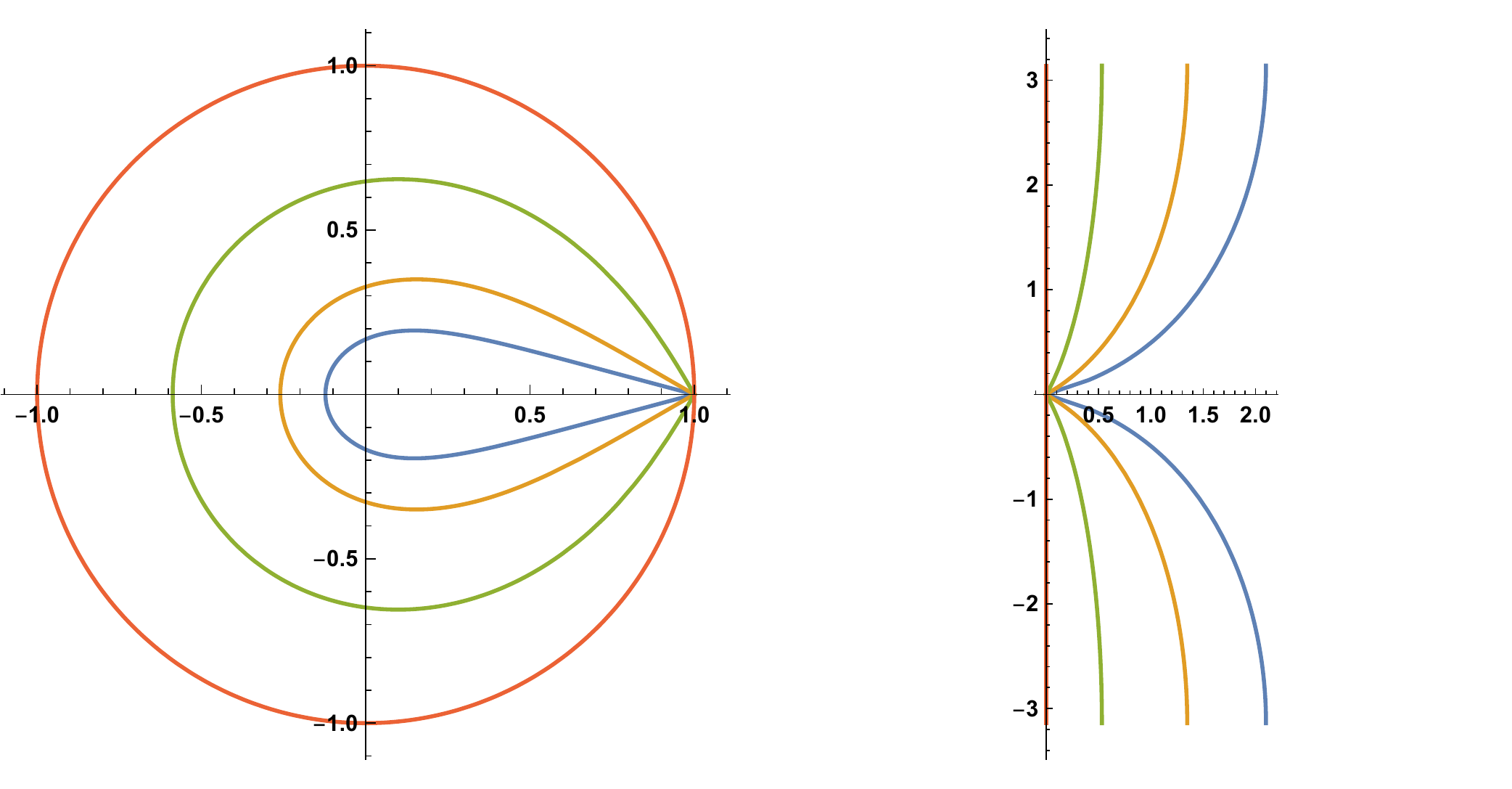}
\caption{\label{fig:curve4} 
On the left the plot of the curve $\theta \to 1-(1-\exp(-i\theta))^\ga$, for $\ga=1/6,1/3,2/3,1$ (i.e., blue, yellow, green red). On the right the logarithm of the same curve: i.e., on the right we have $\cC_\ga$ for the same values of $\ga$.
 The monotonicity of the curves on the right and (say) in the first quadrant has the simple geometric interpretation that when one goes though one of the curves with $\ga <1$ on the left, the distance of the curve
$\theta \mapsto 1-(1-\exp(-i\theta))^\ga$ to the origin decreases for $\theta$ that goes from $0$ to $\pi$. A proof of the monotonicity can be found in Lemma~\ref{th:appC}.
}
\end{figure}

\medskip

Another fact that follows directly from Lemma~\ref{th:beforeLog} is that $\cC_\ga$ can be seen as the graph of a function of the imaginary coordinate. It is actually an increasing (respectively, decreasing) function of the imaginary coordinate if the imaginary coordinate is positive (respectively, negative) as it is apparent from the curves on the right of Figure~\ref{fig:curve4}.
This can be shown by making  the parametric representation  $\cC_\ga$ explicit: with  
a rather cumbersome computation we can write $\cC_\ga$ as $\{f_1(\theta)+ i f_2(\theta): \, \theta \in [0, 2\pi)\}$
 (we set $a:=1-\ga$) with
\begin{equation}
\label{eq:f1}
f_1(\theta)= 
-\frac 12\log \left({2^{2 a} \sin ^2\left(\frac{ a (\pi -\theta)}2\right) \sin ^{2a}\left(\frac{\theta}{2}\right)+\left(1-2^{a} \sin ^{a}\left(\frac{\theta}{2}\right) \cos \left(\frac{a (\pi -\theta)}2\right)\right)^2}\right) ,
\end{equation}
and
\begin{equation}
\label{eq:f2}
f_2(\theta)= 
\arctan_0 \left(
\frac{2^{a} \sin \left(\frac{a (\pi -\theta)}2\right) \sin ^{a}\left(\frac{\theta}{2}\right)}
{
1-2^{a} \sin ^{a}\left(\frac{\theta}2\right) \cos \left(\frac{ a (\pi -\theta)}2\right)
}
\right)\,,
\end{equation}
where  $\arctan_0(\cdot): \bbR \to [0,\pi]$ is a version of $\arctan(\cdot): \bbR \to (-\pi/2, \pi/2)$ defined as 
$\arctan_0(t)=\arctan(t) $ if $t\ge 0$ and $\arctan_0(t)=\arctan(t) + \pi$ if $t<0$. 
And now it is \emph{just} a matter of showing that $f'_1(\theta)>0$ for $\theta\in (0, \pi)$.
In Lemma~\ref{th:appC} we show this along with an independent proof of $f'_2(\theta)>0$.

%One can make $f'_2(\theta)$ explicit and directly check that $f'_2(\theta)>0$ for every $\theta \in (0, \pi)$ (according to the determination of the $\arctan$ we choose, $f_2(\cdot)$ may or may not have a jump at $\pi$, but this is irrelevant for our discussion because the analysis for $\theta \in (0,\pi)$ is sufficient, by symmetry). This yields another proof of the fact that $\cC\ga$ is the graph of a function of the imaginary coordinate. What appears to be  also true is that $f_1'(\theta)>0$ for $\theta \in (0, \pi)$: hence that $\cC_\ga \cap \{h:\, \Im(h)\ge 0\}$ can be seen as the graph  of a function of the real coordinate. We are unable to present a \emph{reasonable} proof  of this second fact, i.e. that $f_1'(\theta)>0$ for $\theta \in (0, \pi)$. 

\subsection{Alternative characterizations of the localized and  delocalized regions}
\label{sec:alternative}
We start by defining the open set 
\begin{equation}
\label{eq:cLstar}
 \cL_\ga^\star\, :=\, \left\{ h \in \bbC:\, \Re(\tf(h))>0 \right\}\, ,
\end{equation}
and remark that $(0, \infty)\subset \cL_\ga^\star$. We then introduce two more subsets of $\bbC$:
\begin{equation}
\label{eq:L'}
\cL'_\ga\text{ is the connected component of }\cL_\ga^\star \text{ that contains }(0, \infty)\, ,
\end{equation}
and 
\begin{equation}
\label{eq:L''}
\cL''_\ga\, :=\, \left \{ h \in \bbC:\, \text{ there exists } z \in B_0(1) \text{ such that } 1-e^h(1-(1-z)^\ga)=0 \right\}\,.
\end{equation}
Let us point out from now that, for $h\in \cL''_\ga$, the solution to $1-e^h(1-(1-z)^\ga)=0$ is of course unique: 
in fact, for $z \in B_0(1) $, 
$1-e^{-h}=(1-z)^\ga$  is equivalent to $z=1-(1-e^{-h})^{1/ \ga}$. What is also straightforward is to check  that $z$ is a simple zero.

\medskip

\begin{lemma}
\label{th:3=}
$\cL_\ga= \cL'_\ga= \cL''_\ga$.
\end{lemma}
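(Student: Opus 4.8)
The plan is to establish the chain of equalities by proving two containments and then closing the loop: $\cL''_\ga \subseteq \cL'_\ga$, $\cL'_\ga \subseteq \cL_\ga$, and $\cL_\ga \subseteq \cL''_\ga$. The most tractable starting point is the explicit description of $\cL''_\ga$. As already noted right after \eqref{eq:L''}, for $h \in \bbC$ the equation $1 - e^h(1-(1-z)^\ga)=0$ has the unique candidate solution $z = 1 - (1-e^{-h})^{1/\ga}$, and $h \in \cL''_\ga$ precisely when this $z$ lies in $B_0(1)$, i.e.\ when $|1 - (1-e^{-h})^{1/\ga}| < 1$. On the other hand, $\Re(\tf(h))>0$ with $\tf$ as in \eqref{eq:Fcomplex} is equivalent (after exponentiating) to $|1 - \exp((1/\ga)\Log(1-e^{-h}))| < 1$, i.e.\ to the same inequality $|1-(1-e^{-h})^{1/\ga}|<1$ once one checks that the branch used in \eqref{eq:Fcomplex} is the principal one away from the cut $(-\infty,0]$ of $h \mapsto 1-e^{-h}$. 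Hence, \emph{as sets}, $\cL''_\ga$ and $\cL_\ga^\star$ coincide on $\bbC \setminus (-\infty,0]$; one must separately argue that points of $(-\infty,0]$ belong to neither (for $h<0$ one has $\tf(h)=0$ so $h\notin\cL_\ga^\star$, and $z=1-(1-e^{-h})^{1/\ga}$ with $1-e^{-h}<0$ is not even in the domain of the principal power, reflecting that such $h$ sit on the branch cut). So $\cL''_\ga = \cL_\ga^\star$.

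The next step is to pass from $\cL_\ga^\star$ to its connected component $\cL'_\ga$: the containment $\cL'_\ga \subseteq \cL_\ga^\star = \cL''_\ga$ is immediate. For the reverse, I would show that $\cL''_\ga = \cL_\ga^\star$ is in fact \emph{connected}, so it equals its own component containing $(0,\infty)$. Connectedness is best seen through the conformal description: the map $h \mapsto 1 - e^{-h}$ sends $\bbC \setminus (-\infty,0]$ (a simply connected domain) onto $\bbC \setminus \{w: \Re(w)\ge 1\}$, and then $w \mapsto 1 - w^{1/\ga}$ composed with "modulus $<1$" carves out a domain; pulling back the open unit disk under the continuous map $h \mapsto 1-(1-e^{-h})^{1/\ga}$ defined on the connected set $\bbC\setminus(-\infty,0]$ gives an open set whose connectedness one can read off from Lemma~\ref{th:beforeLog} together with the observation that the relevant preimage is exactly the (image under $\Log(1-\cdot)$ of the) \emph{bounded} component cut out by the curve $\theta \mapsto (1-e^{-i\theta})^\ga$. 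That is, the proof of Lemma~\ref{th:crit} already identified $\cL_\ga$ with the image under $z \mapsto \Log(1-z)$ of the bounded region enclosed by that curve, and this bounded region is connected; chasing the same maps identifies it with $\{h: |1-(1-e^{-h})^{1/\ga}|<1\} = \cL''_\ga$. This simultaneously yields $\cL_\ga = \cL''_\ga$ and the connectedness needed for $\cL'_\ga = \cL_\ga^\star$.

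Concretely the cleanest route is: (i) record the set-level identity $\cL''_\ga = \cL_\ga^\star$ from the explicit inequality above; (ii) use Lemma~\ref{th:beforeLog} and the proof of Lemma~\ref{th:crit} to write $\cL_\ga = \Log(1 - \{\text{bounded component}\})$ and observe that $h \in \cL_\ga$ iff $1-(1-z)^\ga$ with $z = $ the point of $\partial B_1(1)$ corresponding to $h$ has modulus... — more precisely, iff $h$ lies inside the Jordan curve $\cC_\ga$, which by \eqref{eq:crit} and a direct computation means $|1 - (1-e^{-h})^{1/\ga}| < 1$, giving $\cL_\ga = \cL''_\ga$; (iii) deduce $\cL_\ga^\star = \cL''_\ga = \cL_\ga$ is connected and contains $(0,\infty)$, hence equals $\cL'_\ga$. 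The main obstacle I anticipate is step (ii): one must carefully track which branch of $(1-z)^\ga$ / $z^{1/\ga}$ is in force and verify that the "inside the Jordan curve $\cC_\ga$" condition translates \emph{without branch ambiguity} into the single inequality $|1-(1-e^{-h})^{1/\ga}|<1$ — this is exactly the subtlety flagged in Remark~\ref{rem:branches}, where for small $\ga$ the naive level set $\{\eta: |1-\eta^{1/\ga}|=1\}$ has extra sheets beyond $\cC_\ga$. Handling that correctly — i.e.\ showing that the extra sheets do not intersect the region pulled back by the \emph{principal} branch on $\bbC\setminus(-\infty,0]$ — is where the real work lies, and the bounds in Lemma~\ref{th:beforeLog} (containment in $\mathrm{Sect}(\ga\pi/2)$ and in $B_0(2^\ga)$) are precisely what rules the extra sheets out.
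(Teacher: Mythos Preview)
Your overall strategy and ingredients are right, but step (i) as written is \emph{false} for small $\ga$, and this propagates to (iii). You assert that $h\in\cL''_\ga$ iff the candidate $z=1-(1-e^{-h})^{1/\ga}$ lies in $B_0(1)$, i.e.\ $\cL''_\ga=\cL_\ga^\star$. The forward implication is fine, but the backward one is not: if $z\in B_0(1)$ then $1-z=(1-e^{-h})^{1/\ga}\in B_1(1)$, yet $((1-e^{-h})^{1/\ga})^\ga=1-e^{-h}$ only when $|\arg(1-e^{-h})|<\ga\pi$ (in fact $<\ga\pi/2$ once you also use $1-z\in B_1(1)$). For $\ga<2/3$ there are $h$'s with $(1-e^{-h})^{1/\ga}\in B_1(1)$ but $|\arg(1-e^{-h})|$ near $\pi$ --- precisely the extra sheets of Remark~\ref{rem:branches} --- and these lie in $\cL_\ga^\star$ but \emph{not} in $\cL''_\ga$. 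So $\cL''_\ga\subsetneq\cL_\ga^\star$ in general, $\cL_\ga^\star$ is \emph{not} connected, and your step (iii) collapses. You yourself flag the branch issue, but you locate it in step (ii); it already breaks step (i).

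The paper's proof avoids this trap by never comparing with the full level set $\cL_\ga^\star$. After the change of variables $\eta=1-e^{-h}$, $\zeta=1-z$, it works with $\cT_\ga$ (bounded component cut out by the curve), $\cT'_\ga$ (the component of $\{|1-\eta^{1/\ga}|<1\}$ containing $(0,1)$), and $\cT''_\ga$ (the image $\{\zeta^\ga:\zeta\in B_1(1)\}$). Then $\cT_\ga=\cT'_\ga$ is read off from Lemma~\ref{th:beforeLog}; $\cT'_\ga\subset\cT''_\ga$ is immediate; and for $\cT''_\ga\subset\cT'_\ga$ one notes $\cT''_\ga\subset\{|1-\eta^{1/\ga}|<1\}$ and, since $\zeta\in B_1(1)$ forces $|\arg(\zeta^\ga)|<\ga\pi/2$, the sector bound \eqref{eq:forhr2} (a consequence of Lemma~\ref{th:beforeLog}) rules out the extra components. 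Your plan can be repaired along exactly these lines: replace (i) by the one-sided inclusion $\cL''_\ga\subset\cL_\ga^\star$, prove $\cL''_\ga=\cL_\ga$ as you outline in (ii), and then argue $\cL'_\ga=\cL_\ga$ from the fact that $\cL_\ga$ is connected, contained in $\cL_\ga^\star$, and its boundary $\cC_\ga$ sits in $\{\Re(\tf)=0\}$, so $\cL_\ga$ is both open and relatively closed in $\cL_\ga^\star$.
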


\medskip

\begin{proof}
 Since $h \mapsto 1 - e^{-h}$ and $z \mapsto 1-z$ are both one-to-one,
this is the same as asking whether $\cT_\ga= \cT'_\ga= \cT''_\ga$ with 
\begin{enumerate}
\item $\cT_\ga$ defined by considering the closed curve $\{(1-\exp(i \theta))^\ga:\, \theta \in [0, \pi)\}$ that splits $\bbC$ into two connected components:
$\cT_\ga$ is the bounded one; 
\item $\cT'_\alpha$ the connected component of $\left\{ \eta : \ \left| 1 - \eta^{1/\alpha}  \right| < 1 \right\}$ containing $(0,1)$;
\item $\cT''_\alpha := \left\{ \eta :\, \text{there exists } \zeta \text{ such that } \ |1-\zeta | < 1 \text{ and }\zeta^\alpha=\eta \right\}$.
\end{enumerate}

$\cT_\ga= \cT'_\alpha $ is a direct consequence of 
Lemma~\ref{th:beforeLog}.

Moreover we have $\cT''_\alpha\supset  \cT'_\alpha $ because if $\eta\in \cT'_\alpha $ we can set $\zeta =\eta^{1/\ga}$, which is in Sect$(\pi/2)$, hence $\zeta^\ga= \eta$, besides of course $\vert 1- \zeta\vert <1$. Therefore $\eta\in \cT''_\alpha $.

For $\cT''_\alpha\subset  \cT'_\alpha $ we start by claiming that 
$\cT''_\alpha \subset \{ \eta:\, \vert 1- \eta^{1/ \ga}\vert <1\}$. In fact if $\eta \in \cT''_\alpha$ there exists    $\zeta\in B_1(1)$ such that  $\zeta^\ga=\eta$,  so $\zeta=\eta^{1/ \ga}$.  And taking $\eta^{1/ \ga} \in B_1(1)$ yields the claim. Now we remark that Lemma~\ref{th:beforeLog} %, recall also Remark~\ref{rem:branches}, 
implies that  
\begin{equation}
\label{eq:forhr2}
\{ \eta:\, \vert 1- \eta^{1/ \ga}\vert <1\} \setminus \cT'_\alpha\subset\left\{ \eta:\, \vert \arg (\eta) \vert \ge 2  \pi \ga  - \frac \pi 2 \ga \right\}\, .
\end{equation} 
But $\vert 1- \zeta \vert < 1$ implies $\vert \arg (\zeta)\vert < \pi/2$, so $\vert \arg (\zeta^\ga ) < \ga \pi/2$.
So $\zeta^\ga $ is not contained in the set in the right-hand side of  \eqref{eq:forhr2}. 
Therefore $\{ \eta:\, \vert 1- \eta^{1/ \ga}\vert <1\} \setminus \cT'_\alpha$ and $\cT''_\alpha$ have empty intersection. Hence $\cT''_\alpha\subset  \cT'_\alpha $ and the proof is complete.
\end{proof}

\medskip

Lemma~\ref{th:3=} implies that if $h \in \cD_\ga\cup \cC_\ga$ then $1-e^h(1-(1-z)^\ga)=0$ has no solution $z\in B_0(1)$. We need to refine 
this statement:

\medskip

\begin{lemma}
\label{th:D+C}
\begin{enumerate}
\item If $h\in \cC_\ga$ then there exists a unique solution $z$ to
$1-e^h(1-(1-z)^\ga)=0$ and $\vert z \vert =1$. Moreover $z$ is a simple zero if $h\neq 0$.
\item For every $\gep>0$ there exists $r_\gep>1$ such that if $h\in \cD_\ga$ and $\dist(h, \cC_\ga) \ge \gep$ then 
$1-e^h(1-(1-z)^\ga)\neq 0$ for every $z\in B_0(r_\gep)$.
\end{enumerate}
\end{lemma}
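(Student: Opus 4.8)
The plan is to funnel both parts through the equation obtained from \eqref{eq:z-tspec}: a point $z$ is a root of $z\mapsto 1-e^h(1-(1-z)^\ga)$ exactly when $(1-z)^\ga=1-e^{-h}$. The structural input is that the fractional power $w\mapsto w^\ga=e^{\ga\Log w}$ is \emph{injective} on $\bbC\setminus(-\infty,0]$ — it raises moduli to the power $\ga$ and multiplies arguments by $\ga\in(0,1)$, so $\arg(w^\ga)\in(-\ga\pi,\ga\pi)\subset(-\pi,\pi)$ — so that once one solution $z$ with $z\notin[1,\infty)$ is exhibited it is the only one off that ray. Roots on $(1,\infty)$ will be ruled out by an argument count (the convention $\Log(1-z)=\log(z-1)+i\pi$ gives $\arg((1-z)^\ga)=\ga\pi$ there, whereas $1-e^{-h}$ is constrained), while $z=1$ forces $(1-z)^\ga=0$, i.e.\ $h=0$.

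For (1), write $h=-\Log(1-(1-e^{-i\theta_0})^\ga)\in\cC_\ga$, so $(1-e^{-i\theta_0})^\ga=1-e^{-h}$. If $\theta_0\neq0$ I would simply verify that $z_0:=e^{-i\theta_0}$ works: since $\Re(1-e^{-i\theta_0})=1-\cos\theta_0>0$ the principal branch gives $(1-z_0)^\ga=(1-e^{-i\theta_0})^\ga=1-e^{-h}$, whence $1-e^h(1-(1-z_0)^\ga)=1-e^h e^{-h}=0$, with $|z_0|=1$. If $\theta_0=0$ then $h=0$ and the equation reduces to $(1-z)^\ga=0$, whose unique root is $z=1$, again of modulus one. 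Uniqueness is then the injectivity statement above together with exclusion of the cut, for which one uses Lemma~\ref{th:beforeLog}, which confines $1-e^{-h}=(1-e^{-i\theta_0})^\ga$ to $\mathrm{Sect}(\ga\pi/2)$, incompatible with the argument $\ga\pi$ occurring on $(1,\infty)$. Simplicity for $h\neq0$ (so $z_0\neq1$) follows because the derivative of $z\mapsto1-e^h(1-(1-z)^\ga)$ is $-\ga e^h(1-z)^{\ga-1}$, which does not vanish at $z_0$ since $1-z_0\notin(-\infty,0]$.

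For (2) I would argue by contradiction and compactness. If it fails, there are $\gep>0$, $r_n\downarrow1$, points $h_n\in\cD_\ga$ with $\dist(h_n,\cC_\ga)\geq\gep$, and $z_n\in B_0(r_n)$ with $1-e^{h_n}(1-(1-z_n)^\ga)=0$. The equation gives $|e^{-h_n}|=|1-(1-z_n)^\ga|\leq1+(1+r_n)^\ga$, so $\Re(h_n)$ is bounded below; and by Lemma~\ref{th:crit} the open half-cylinder $\{\Re(h)>-\log(2^\ga-1)\}$ is connected, disjoint from $\cC_\ga$, and meets $(0,\infty)$, hence lies in $\cL_\ga$, so $\cD_\ga\subset\{\Re(h)<-\log(2^\ga-1)\}$ and $\Re(h_n)$ is bounded above; thus $(h_n)$ is precompact in $\bbC_{2\pi}$. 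Passing to a subsequence, $h_n\to h_*$ and $z_n\to z_*$ with $|z_*|\leq1$; since $\dist(h_*,\cC_\ga)\geq\gep$ and $\partial\cD_\ga\subset\cC_\ga$ (Lemma~\ref{th:crit}), $h_*\in\cD_\ga$. If $z_*=1$, then $|(1-z_n)^\ga|=|1-z_n|^\ga\to0$, so $e^{-h_n}\to1$ and $h_*=0\in\cC_\ga$, a contradiction. If $z_*\neq1$, then $z_*\notin[1,\infty)$ and letting $n\to\infty$ in the equation yields $1-e^{h_*}(1-(1-z_*)^\ga)=0$: if $|z_*|<1$ this puts $h_*\in\cL''_\ga=\cL_\ga$ by Lemma~\ref{th:3=}, contradicting $h_*\in\cD_\ga$; if $|z_*|=1$, writing $z_*=e^{-i\theta}$ and solving for $h_*$ gives $e^{-h_*}=1-(1-e^{-i\theta})^\ga$, hence $h_*\in\cC_\ga$ by \eqref{eq:crit} — i.e.\ the $h$ admitting a modulus-one root are exactly $\cC_\ga$, the converse of (1) — again a contradiction.

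The hard part is the bookkeeping in (2): precompactness of $(h_n)$ genuinely needs the location of $\cD_\ga$ relative to the strip of Lemma~\ref{th:crit} (the hypothesis $\dist(h_n,\cC_\ga)\geq\gep$ alone does not prevent $h_n\to\infty$), and the branch-point case $z_*=1$ must be isolated because there $z\mapsto(1-z)^\ga$ is continuous but not holomorphic, so the naive "pass to the limit in the equation" step fails. Everything else — the reverse inclusion used at the end and the argument count excluding roots on $[1,\infty)$ — is routine once the injectivity of the slit-plane fractional power is in hand.
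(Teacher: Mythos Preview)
Your proof is correct and follows the same compactness/contradiction strategy as the paper's. You are in fact more careful than the paper: you explicitly supply the upper bound on $\Re(h_n)$ (via the strip localization of Lemma~\ref{th:crit}), isolate the branch-point case $z_*=1$, and spell out the injectivity-plus-cut-exclusion argument for uniqueness in (1), all of which the paper leaves implicit or skips.
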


\medskip 

%Let us remark that if $h=0$ then $z=1$, which is a singular point of $1-(1-e^h(1-z)^\ga)= (1-z)^\ga$.
%\medskip

\begin{proof}
For (1) we see that the equation $1-e^h(1-(1-z)^\ga)=0$, for $h= -\Log(1-(1-\exp(-i\theta))^\ga )$, reduces to $z=\exp(-i \theta)$. Moreover $\partial_z (1-e^h(1-(1-z)^\ga))= -e^h(1-z)^{\ga -1}$ is clearly non zero for $z=\exp(-i \theta)$, $\theta \in (0, 2 \pi)$, so the zero is simple. Let us remark that the problem with $h=0$, i.e. $z=1$, is that it is a singular point for $1-e^h(1-(1-z)^\ga)$.

For (2) we remark that if $r_\gep>1$ does not exist then we can find sequences $(h_j)$ and $(z_j)$ with $h_j\in \cD_\ga$, $\dist(h_j, \cC_\ga)> \gep$, $\vert z_j\vert >1$ and
$1-e^{h_j} (1-(1-z_j)^\ga)=0$ for every $j$, but $z_j \to z \in \partial B_0(1)$.   Since $1-e^{h_j} (1-(1-z_j)^\ga)=0$ and the fact that $(z_j)$ stays in a compact set, $\Re(h_j)$ is bounded below.  
Therefore there  is no loss of generality in assuming also $h_j\to h$ and of course $h$ is in $\cD_\ga$ and at distance $\gep$ or more from the boundary $\cC_\ga$. But this implies that 
$1-e^{h} (1-(1-z)^\ga)=0$, with $z \in \partial B_0(1)$, that is $h\in \cC_\ga$, which is impossible. 
So part (2) is proven.
\end{proof}

%\begin{figure}[h]
%\centering
%\includegraphics[width=7 cm]{branches.pdf}
%\caption{\label{fig:branches}  This is dones with $\ga=1/\sqrt{40}= 0.1581\ldots$ and we observe that$\ldots$ add add add add add add add add add add add add}
%\end{figure}

\section{Sharp estimates on the partition function}
\label{sec:sharp}

In this section we mostly exploit complex analysis tools, except for the first result (Proposition~\ref{th:Reh<0}) that is based on a more probabilistic estimate.  

\subsection{Sharp estimates in the general framework} 
In the  general context   \eqref{eq:Kgeneral}, 
for $n\to \infty$ and uniformly in $j$ such that $n/a_j\to \infty$, i.e. $j/ n^\ga\to 0$ (recall Remark~\ref{rem:a_n} for the definition of $(a_j)$), we have \cite[Th.~A]{cf:Doney}
\begin{equation}
\label{eq:Doney}
\bP\left( \tau_j=n\right) \, \sim \, j K(n)\,.
\end{equation}

\medskip

\begin{proposition}
\label{th:Reh<0} 
In the general context of \eqref{eq:Kgeneral}, 
if $\Re(h)<0$ we have 
\begin{equation}
Z_{N, h} \stackrel{N \to \infty} \sim K(N) \frac{\exp \left(h \right)}{\left(1-\exp \left( h\right)\right)^2}\, ,
\end{equation} 
and this result holds uniformly if $\Re(h)$ is bounded away from $0$. 
\end{proposition}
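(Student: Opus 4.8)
The plan is to obtain the claimed asymptotics by expanding $Z_{N,h}$ as a renewal-type sum and applying Doney's local limit theorem~\eqref{eq:Doney}, exactly as in the real case but keeping careful track of the complex phases. Starting from the definition~\eqref{eq:Z}, I would write
\begin{equation}
\label{eq:Zexpand}
Z_{N,h}\,=\,\sum_{j=1}^{N} e^{jh}\,\bP\left(\tau_j=N\right)\,,
\end{equation}
since the contribution of a renewal configuration visiting $N$ with exactly $j$ renewal points before (and including) $N$ carries weight $e^{jh}$ and probability $\bP(\tau_j=N)$. Splitting the sum at a cutoff $j\le L_N$ with $L_N\to\infty$ but $L_N/N^{\ga}\to 0$ (say $L_N=\lfloor\log N\rfloor$), the head of the sum is the ``relevant'' part and the tail must be shown negligible.

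For the head, $\sum_{j=1}^{L_N}e^{jh}\bP(\tau_j=N)$, I would substitute $\bP(\tau_j=N)=jK(N)(1+o(1))$ from~\eqref{eq:Doney}, where the $o(1)$ is uniform in $j\le L_N$. This gives $K(N)\sum_{j=1}^{L_N} j e^{jh}(1+o(1))$. Since $\Re(h)<0$, the series $\sum_{j\ge 1} j e^{jh}$ converges absolutely to $e^h/(1-e^h)^2$, and the partial sum up to $L_N$ differs from it by $O(L_N^2 e^{\Re(h)L_N})=o(1)$; the error term contributes $o(1)\cdot\sum_j j\,|e^{jh}|=o(1)$ as well. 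Hence the head is $K(N)\,\frac{e^h}{(1-e^h)^2}(1+o(1))$, which is the desired leading term. Uniformity for $\Re(h)$ bounded away from $0$ is automatic here because all the geometric-type bounds depend only on $\Re(h)$, provided one also keeps $\Re(h)$ bounded above — but boundedness above is not needed: for $\Re(h)$ large negative the expression only gets smaller, and in fact one should note $e^h/(1-e^h)^2$ stays bounded away from $0$ on any set $\{-M\le\Re(h)\le-\gep\}$, while the case $\Re(h)\to-\infty$ is handled separately or simply excluded from the uniformity claim's natural reading (the statement asks for $\Re(h)$ bounded away from $0$, which I interpret as $\Re(h)\le-\gep$, with an implicit compact range or a direct check that the argument degrades gracefully).

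The main obstacle is controlling the tail $\sum_{j>L_N}e^{jh}\bP(\tau_j=N)$, where~\eqref{eq:Doney} no longer applies. Here I would bound $|e^{jh}\bP(\tau_j=N)|\le e^{j\Re(h)}\bP(\tau_j=N)$ and use crude renewal estimates: for all $j$, $\bP(\tau_j=N)\le C\,j\,K(N)$ fails in general for large $j$, so instead I would use that $\sum_{j>L_N}e^{j\Re(h)}\bP(\tau_j=N)\le e^{(L_N/2)\Re(h)}\sum_{j\ge 1}e^{(j/2)\Re(h)}\bP(\tau_j=N)\le e^{(L_N/2)\Re(h)}\,Z_{N,\Re(h)/2}$ — wait, more cleanly: $\sum_{j\ge 1}e^{j\Re(h)}\bP(\tau_j=N)=Z_{N,\Re(h)}$ with a \emph{real} potential $\Re(h)<0$, and it is classical (and follows from~\eqref{eq:fD}, which holds for real negative potential) that $Z_{N,\Re(h)}\sim \mathrm{const}\cdot K(N)$. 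Combining, $\sum_{j>L_N}e^{j\Re(h)}\bP(\tau_j=N)\le e^{L_N\Re(h)/2}\cdot\max_{j}\left(e^{j\Re(h)/2}\right)^{-1}\cdot(\ldots)$ — the slickest route is: since the summands $e^{j\Re(h)}\bP(\tau_j=N)$ are the terms of a convergent (real) series summing to $Z_{N,\Re(h)}=O(K(N))$, and since each term for $j\le L_N$ is $\asymp j K(N)$, the tail $\sum_{j>L_N}$ must itself be $o(K(N))$ — indeed $\sum_{j>L_N}e^{j\Re(h)}\bP(\tau_j=N)\le e^{(\Re(h)/2)L_N}\sum_{j>L_N}e^{(\Re(h)/2)j}\bP(\tau_j=N)\le e^{(\Re(h)/2)L_N}Z_{N,\Re(h)/2}=e^{(\Re(h)/2)L_N}O(K(N))=o(K(N))$, using $\Re(h)/2<0$ and~\eqref{eq:fD} applied at real potential $\Re(h)/2$. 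Thus the tail is negligible compared to the head, and~\eqref{eq:Zexpand} yields the claim. I expect the bookkeeping in this tail bound — getting a clean, uniform-in-$h$ statement while only invoking~\eqref{eq:fD} for real potentials — to be the only genuinely delicate point; everything else is a direct transcription of the real-variable argument.
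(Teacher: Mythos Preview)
Your approach is correct and essentially the same as the paper's: expand $Z_{N,h}=\sum_j e^{jh}\bP(\tau_j=N)$, use Doney's estimate \eqref{eq:Doney} on a head segment, and kill the tail via the exponential decay of $e^{j\Re(h)}$. The only cosmetic differences are your choice of cutoff ($\log N$ rather than $N^\ga/\log N$) and your tail bound via the real case \eqref{eq:fD} at potential $\Re(h)/2$ rather than the paper's direct bound $|\bP(\tau_j=N)-jK(N)|\le 2j$; your hesitation about uniformity as $\Re(h)\to-\infty$ is unwarranted, since all your error-to-leading ratios are in fact bounded by quantities depending only on $\gep$ and $N$ once $\Re(h)\le -\gep$ (the paper makes this explicit).
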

\medskip

\begin{proof}
We write $Z_{N, h}=\sum_{j=1}^N e^{hj}\bP( \tau_j=N)$ and for $\Re(h)<0$ by \eqref{eq:Doney} we have that if we choose a decreasing sequence  $(\gamma_N)$ of positive numbers, say $\gamma_N:= 1/ \log(N)$, then there exists  $(\gep_N)$, $\gep_N \searrow 0$, such that for $N$ sufficiently large
\begin{multline} 
\left \vert Z_{N, h}- K(N) \sum_{j=1}^N e^{hj} j \right \vert \, \le \, 
\sum_{j=1}^N \exp(-j \vert\Re(h)\vert) \left \vert \bP\left( \tau_j=N\right) -j K(N) \right \vert
\\
\le \, K(N)  \sum_{j\le N^\ga \gamma_N}  j
\exp(-j \vert\Re(h)\vert)  \left \vert \frac{\bP\left( \tau_j=N\right)}{jK(N)}-1 \right \vert 
  + 2\sum_{j> N^\ga \gamma_N} j \exp(-j \vert\Re(h)\vert)
  \\
  \le \, 
  K(N) \gep_N   \frac{\exp \left( -\vert \Re(h) \vert\right)}{\left(1-\exp \left( -\vert \Re(h) \vert\right)\right)^2}
+ 3 N^\ga \gamma_N \vert\Re(h)\vert \exp(-N^\ga \gamma_N \vert\Re(h)\vert) \, ,
\end{multline}
where from the second to the third line we have used $jK(N)=O(1/N^\ga)=o(1)$.
Therefore if $\Re(h) \le -\gep$ for an $\gep>0$, there exists $c_\gep>0$ such that
\begin{equation}
\label{eq:forrhs56}
\left \vert Z_{N, h}- K(N) \frac{\exp \left(h \right)}{\left(1-\exp \left( h\right)\right)^2}\right \vert \, \le \, c_\gep
K(N) \gep_N   {\exp \left( -\vert \Re(h) \vert\right)}\, ,
\end{equation}
and this is the uniform estimates we claimed:
since ${\exp \left(h \right)}/{\left(1-\exp \left( h\right)\right)^2} \sim \exp \left(h \right)$
for $\Re(h) \to -\infty$, for every $c>0$ the ratio between  the error term (i.e., the right-hand side of \eqref{eq:forrhs56}) 
and the leading behavior of $Z_{N, h}$, i.e. $K(N){\exp \left(h \right)}/{\left(1-\exp \left( h\right)\right)^2}$,
is $O(\gep_N)$ uniformly in $h$ such that $\Re(h) \le -c$.
\end{proof}

\medskip

\begin{proposition}
\label{th:sharp-Reh>0}
We fix  $K(\cdot)$ which satisfies  \eqref{eq:Kgeneral} 
and  consider $W\subset \bbC$ which is the union of
\begin{enumerate}
  \item the half plane  with $\Re(h) \ge C>0$;
  \item the set of $h$'s with $\Re (h)\ge a>0$ and $\vert \Im(h)\vert < \gep$.  
  \end{enumerate} 
If we choose  $C$  suitably large  and $\gep$ suitably small ($\gep $ depends on $a$, $C$ does not)  then for every $h\in W$
   there exists a unique solution $z=z_h\in B_0(1)$ to  $\widehat K(z) =\exp(-h)$  with minimal absolute value
  and such that,   
  uniformly in $h \in W$, we have 
\begin{equation}
\label{eq:leading-1}
Z_{N, h} \stackrel{N \to \infty} \sim  \frac{(1-\exp(-h))^{(1-\ga)/\ga}}{\ga \exp(h)z_{h} }  z_{ h}^{-N}\,.
\end{equation}
\end{proposition}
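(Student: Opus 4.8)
\emph{Strategy.} The proof is a singularity analysis of the generating function $\widehat Z_h(z)=1/(1-e^h\widehat K(z))$ found in \eqref{eq:z-t}: since $\widehat K$ is analytic on $B_0(1)$, $\widehat Z_h$ is meromorphic there and its poles are exactly the solutions of $\widehat K(z)=e^{-h}$. The heart of the matter is to show that, once $C$ is large and $\gep=\gep(a)$ small, for each $h\in W$ there is a unique such solution $z_h$ of minimal modulus, that it is a simple pole, and that the remaining singularities of $\widehat Z_h$ stay at modulus at least $(1+c)|z_h|$ for some $c>0$ uniform on $W$; then \eqref{eq:leading-1} drops out of a contour shift plus a residue computation, and the $\sim$ is uniform because all the estimates involved are.

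\emph{Locating and isolating $z_h$.} Because $\widehat K'(0)=K(1)>0$, $\widehat K$ is conformal near $0$ with an analytic local inverse $\Psi$ defined on a disk $B_0(\delta_0)$ with $\delta_0$ depending only on $K$; for $\Re(h)\ge C$ with $e^{-C}<\delta_0$ this gives $z_h:=\Psi(e^{-h})$, and since $|z_h|\le c_0 e^{-C}$ while every other solution of $\widehat K(z)=e^{-h}$ lies outside the fixed neighbourhood $\Psi(B_0(\delta_0))$ of $0$, hence at modulus bounded below independently of $h$, taking $C$ large makes $z_h$ strictly dominant — this settles region (1), with $C$ depending only on $K$. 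For region (2), $\widehat K$ is real-analytic, strictly increasing, with positive derivative on $[0,1)$, hence has an analytic inverse on a complex neighbourhood $U_a$ of the compact segment $[0,z_a]$, where $\widehat K(z_a)=e^{-a}$, $z_a<1$; choosing $\gep=\gep(a)$ small places $e^{-h}$ in $U_a$ and defines $z_h$. Uniqueness of the minimal-modulus solution and the uniform spectral gap come from aperiodicity: $|\widehat K(z)|<\widehat K(|z|)$ off the positive real axis, with a quantitative loss $|\widehat K(z)|\le(1-\eta)\widehat K(|z|)$ outside a sector around it, so a solution $z'\neq z_h$ is either near-real, hence $=z_h$ by local injectivity of $\widehat K$ near the real segment, or loses this factor, and combined with $|\widehat K(z')|=|e^{-h}|\le\widehat K(|z_h|)$ and the bounded derivative of $\widehat K^{-1}$ on $[0,e^{-a}]$ (for $\gep$ small, $\widehat K(|z_h|)$ and $|e^{-h}|$ are comparable up to $1+o_\gep(1)$) this forces $|z'|\ge(1+c)|z_h|$. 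In both regimes $|z_h|$ stays bounded away from $1$ on $W$ ($z_a<1$ in region (2); universally in region (1)).

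\emph{Coefficient extraction.} Starting from $Z_{N,h}=\frac{1}{2\pi i}\oint_{|z|=r}\widehat Z_h(z)z^{-N-1}\,dz$ for $0<r<|z_h|$ and pushing the contour out to $|z|=r'$, where $r'\in(|z_h|,1)$ is chosen with $r'<(1+c)|z_h|$ and $r'/|z_h|\ge 1+c'$ for a uniform $c'>0$ (possible since $|z_h|$ is bounded away from $1$), one crosses only the simple pole $z_h$ and gets
\begin{equation}
Z_{N,h}\,=\,\frac{z_h^{-N-1}}{e^h\widehat K'(z_h)}\,+\,\frac{1}{2\pi i}\oint_{|z|=r'}\frac{z^{-N-1}}{1-e^h\widehat K(z)}\,dz\,.
\end{equation}
The first term, rewritten with $\widehat K(z_h)=e^{-h}$ and the singular expansion $1-\widehat K(z)\sim(1-z)^\ga$ implied by \eqref{eq:Kgeneral} (for \eqref{eq:Kspecial} this relation is exact, giving $\widehat K'(z_h)=\ga(1-e^{-h})^{(\ga-1)/\ga}$), equals the prefactor displayed in \eqref{eq:leading-1} times $z_h^{-N}$. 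The second term is $O((r')^{-N})$, hence a factor $(|z_h|/r')^N\le(1+c')^{-N}$ smaller than the first: indeed $|1-e^h\widehat K(z)|=e^{\Re(h)}|\widehat K(z)-e^{-h}|$ is bounded below on $|z|=r'$ uniformly on $W$, since the image curve $\widehat K(\{|z|=r'\})$ stays at distance at least $c_1|z_h|$ from $e^{-h}=\widehat K(z_h)$ (conformality of $\widehat K$ near $z_h$ and $r'-|z_h|\ge\frac{c'}{2}|z_h|$) while $e^{\Re(h)}|z_h|$ is bounded below on $W$. Uniformity of the $\sim$ in \eqref{eq:leading-1} follows; together with Proposition~\ref{th:Reh<0} this also completes Proposition~\ref{th:res0}.

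\emph{Main obstacle.} The delicate part is not the residue calculus but the \emph{uniform} control of $z_h$ and of its separation from the rest of the spectrum of $\widehat Z_h$ over the unbounded set $W$, and in particular the matching of the two regimes: in region (1) $z_h$ is tiny and one exploits conformality at $0$, whereas in region (2) $z_h$ may approach $1$ as $a\downarrow 0$, so one must use the (non-polar) singular structure of $\widehat K$ at $1$ and calibrate $\gep(a)$ so that $e^{-h}$ never leaves the region where the real inverse of $\widehat K$ extends analytically and the aperiodicity gap is quantitative. This dichotomy is precisely why $C$ can be taken independent of $a$ while $\gep$ cannot.
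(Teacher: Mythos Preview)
Your approach is the paper's: for region~(1) use conformality of $\widehat K$ at $0$ to define $z_h=\Psi(e^{-h})$, for region~(2) start from the unique real solution on $(0,1)$ and perturb via the analytic inverse function theorem together with aperiodicity, then extract $Z_{N,h}$ by pushing the Cauchy contour past the simple pole $z_h$ and bounding the remainder on a larger circle. The only structural differences are that the paper obtains the uniform gap $|w_h|\ge|z_h|+\delta$ in region~(2) by a short compactness/contradiction argument rather than your quantitative aperiodicity inequality $|\widehat K(z)|\le(1-\eta)\widehat K(|z|)$, and that it keeps the integration radius fixed (independent of $h$) in each region, which makes the uniform remainder bound a bit more transparent than your moving choice $r'\asymp(1+c')|z_h|$.

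One slip to flag: your identification of the residue $e^{-h}/\widehat K'(z_h)$ with the displayed prefactor in \eqref{eq:leading-1} via ``the singular expansion $1-\widehat K(z)\sim(1-z)^\ga$ implied by \eqref{eq:Kgeneral}'' is not sound, because that expansion holds only as $z\to 1$, whereas $z_h$ is near $0$ in region~(1) and can be anywhere in $(0,1)$ in region~(2). The explicit form $(1-e^{-h})^{(1-\ga)/\ga}/(\ga e^h z_h)$ is exact only under \eqref{eq:Kspecial}; the correct general prefactor---and what both your residue computation and the paper's proof actually deliver---is $e^{-h}/(\widehat K'(z_h)\,z_h)$.
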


\medskip

\begin{proof} 
We proceed by obtaining the sharp asymptotic behavior of $Z_{N, h}$ for $N\to \infty$ and uniformly in $h$ in appropriate subsets of $\bbC$.
We will be in the case in which we can identify $r\in (0,1)$ such that
$\widehat Z_{N, h}$ has only one pole, a single pole that we call $z_h\in B_0(r)$ and no pole on $\partial B_0(r)$, so we have  (we recall that $\widehat K(z)$ is defined in \eqref{eq:Kz} and  $\widehat Z_{h}(z)$ in \eqref{eq:z-t})
\begin{equation}
  \label{eq:Taylor-1}
  \widehat Z_{h}(z)\, =\, - \frac{\exp(-h)}{\widehat K ' (z_h)(z-z_h)} + R_{h}(z)\, ,
 \end{equation}  
 which defines  $R_{h}(z)$. Therefore $z \mapsto R_{h}(z)$ is analytic in a neighborhood of the closure of $B_0(r)$ and 
\begin{equation}
 \label{eq:residue-N}
 Z_{N, h} \,=\,  \frac{\exp(-h)}{\widehat K ' (z_h)} z_h^{-N-1}+ \frac 1{2\pi i} \oint \frac{ R_{h}(z)}{z^{N+1}} \dd z\, ,
 \end{equation}
 with $z$ running, counterclockwise,  on $\partial B_0(r)$. 
 
 We treat the two regions separately.
\smallskip

For case (1) we observe that for $z$ small $\widehat{K}(z)\sim K(1)z$ and $\widehat{K}'(0)=K(1)>0$.
This entails that there exists $r>0$ such that $\widehat{K}'(z)\neq 0$ for every $z \in B_0(r)$ and 
$\widehat K: B_0(r) \to \widehat K(B_0(r))$ is invertible. Of course  $ \widehat K(B_0(r))$ is a neighborhood of the origin.
Therefore  there exists $h_0>0$ such that $\exp(-h)\in  \widehat K(B_0(r))$ for $\Re(h)>h_0$ and, for such values of $h$, $ z_h=\widehat K^{-1}(\exp(-h))$  
 is the unique solution  $B_0(r)$  of  $1- e^h \widehat{K}(z)=0$. 
 Possibly by replacing $r$ by a smaller value, we can assume also that $\widehat{K}(z) \neq 0$ for every $z$ with
 $\vert z \vert =r$, so $\inf_{z\in \partial B_0(r)} \vert \widehat{K}(z)\vert \ge c_r>0$. Hence, always for  $z\in \partial B_0(r)$, we have 
 $\vert 1- \exp(h) \widehat{K}(z)\vert \ge \exp( \Re(h)) c_r/2$ for $\Re(h) \ge h_0>0$, with suitable choice of $h_0$.
 Note that $z_h \sim \exp(-h)/K(1)$ for $\Re (h)\to \infty$ so,  in particular,  
 $\widehat K ' (z_h) \sim K(0)$. Therefore, possibly by choosing $h_0$ smaller, we have $\vert z - z_h \vert \ge r/2$ and 
 \begin{equation}
\sup_z \vert R_{h}(z) \vert \, \le \, \sup_z \left\vert  \widehat Z_{h}(z) \right \vert 
+    \sup_z\frac{ e^{-\Re(h)} }{\vert \widehat K ' (z_h)\vert \,\vert z-z_h\vert } 
\, \le\, C_r e^{-\Re(h)} \, ,
 \end{equation}
 where  the $z$ runs in  $\partial B_0(r)$ and $C_r$ can be chosen equal to $2(1/c_r+ 1/(rK(0))$.
 So, by using \eqref{eq:residue-N} we obtain
 that  for every $h$ with $\Re(h) \ge h_0$ we have 
 \begin{equation}
 \left \vert  Z_{N, h}\, - \frac{\exp(-h)}{\widehat K ' (z_h)} z_h^{-N-1} \right \vert \, \le \, C_r e^{-\Re(h)} r^{-N}\, .
 \end{equation} 
 Such a uniform estimate guarantees that there exists $h_0$ and $N_0$ such that, if $\Re(h) \ge h_0$,  $Z_{N, h}\neq 0$ for $N\ge N_0$.
 
 \smallskip

For case (2) we start by recalling  that $\widehat{K}' (z)>0$  for $z \in (0,1)$, so $\widehat{K}(z)=\exp(-h)$ has a unique (positive) solution $z=z_{h}$ for $h>0$. This may not be the unique solution in $\bbC$, but if there is another one, call it $w_{h}\in \bbC\setminus (0, \infty)$, then
$\vert w_{h} \vert >  z_{h}$. In  fact if $\vert w_{h} \vert <  z_{h}$ then $\exp(-h)= \widehat{K}(w_{h} )\le 
\widehat{K}(\vert w_{h} \vert)< \widehat{K}(z_{h})=\exp(-h)$, which is impossible. And $\vert w_{h} \vert =  z_{h}$ is excluded by aperiodicity of $K(\cdot)$. Another immediate fact is that $\widehat{K}(z)=\exp(-h)$ can be solved for $h$ in a neighborhood of the real axis and $z=z_h$ which is also in a neighborhood of the real axis: in fact, by the analytic inverse function theorem, $z_h$ is an analytic function on   $B_{a,b}(\gep):=\{z \in \bbC: 
\inf_{x\in (a,b)} \vert z -x\vert < \gep\}$,  with $0<a <b < \infty$ and $\gep>0$ sufficiently small.  
Let us fix a $\gep>0$ such  we have that $\vert w_h\vert > \vert z_h\vert$ for every $h\in B_{a,b}(2\gep)$ and such that $z_h$ is analytic in $B_{a,b}(2\gep)$. 
 We aim at showing that  there exists $\gd >0$ such that, if there exists $w_h \neq z_h$ such that 
$\widehat{K}(w_{h} )= \exp(-h)$ for $h \in B_{a,b}(\gep)$, then $\vert w_h \vert \ge \vert z_h\vert + \gd$. 
The proof is by contradiction: if this is  false, then we can find $(h_j)$ in  $B_{a,b}(\gep)$  for which $w_{h_j}$ exists for every $j$ (so  $\vert w_{h_j}\vert > \vert z_{h_j} \vert$) and $\vert w_{h_j}\vert - \vert z_{h_j} \vert \to 0$.
Without loss of generality we can assume that these three sequences converge (for the limits we just omit the subscript). By passing to the limit we see that $\vert w_h\vert = \vert z_h\vert$ for $h$ which is 
in the closure of  $B_{a,b}(\gep)$, hence in $B_{a,b}(2\gep)$, which is impossible. 

The proof now proceeds in the same way as case (1) and the final result is that for $0<a<b<\infty$ there exists $\gep>0$ and a two positive constants $c$ and $C$ such that
\begin{equation}
 \left \vert  Z_{N, h}- \frac{\exp(-h)}{\widehat K ' (z_h)} z_h^{-N-1} \right \vert \, \le \, C  \left((1+c)\vert z_h\vert \right)^{-N-1}\, ,
 \end{equation} 
for every $h \in B_{a,b}(\gep)$. This is of course sufficient to cover case (2) in view of case (1).
\end{proof}

\subsection{Sharp estimates in the special framework}

\begin{proposition}
\label{th:Zsharp}
With the inter-arrival distribution \eqref{eq:Kspecial}
\begin{enumerate}
\item if  $h \in \cL_\ga$ then $z \mapsto 1-\exp(h)(1-(1-z)^\ga)$ has a unique zero $z_{\ga, h}$ in the open unit disk around the origin and 
\begin{equation}
\label{eq:ZsharpL}
Z_{N, h} \stackrel{N \to \infty} \sim  \frac{(1-\exp(-h))^{(1-\ga)/\ga}}{\ga \exp(h) }  z_{\ga, h}^{-(N+1)}\,,
\end{equation}
and this result is uniform for $h$ bounded away from $\cC_\ga$;
\item if $h \in \cD_\ga$
\begin{equation}
\label{eq:ZsharpD}
  Z_{N, h} \stackrel{N \to \infty} \sim  K(N)\frac{e^h }{\left(1-e^h\right)^2}
  \,,
  \end{equation}
  and  also this result is uniform for $h$ bounded away from $\cC_\ga$;
\end{enumerate}
\end{proposition}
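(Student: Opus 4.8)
The plan is to read off the asymptotics of $(Z_{N,h})$ from the singularities of the generating function, which in the framework \eqref{eq:Kspecial} is explicit by \eqref{eq:z-tspec}:
\[
\widehat Z_h(z)\,=\,\frac{1}{g_h(z)}\,,\qquad g_h(z)\,:=\,1-e^h\bigl(1-(1-z)^\ga\bigr)\,,
\]
with $z\mapsto g_h(z)$ analytic on $\bbC$ minus the cut $[1,\infty)$. The two cases differ structurally. For $h\in\cL_\ga$, Lemma~\ref{th:3=} and the remark after \eqref{eq:L''} give that $g_h$ has exactly one zero in $B_0(1)$, namely $z_{\ga,h}=1-(1-e^{-h})^{1/\ga}$, and it is simple because $g_h'(z)=-\ga e^h(1-z)^{\ga-1}$ does not vanish there; this simple pole is the dominant singularity of $\widehat Z_h$. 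For $h\in\cD_\ga$ there is no such pole, and the dominant singularity is the algebraic branch point of $(1-z)^\ga$ at $z=1$. In both cases I would make the estimates uniform by compactness, after first removing the unbounded directions of $\cL_\ga$ and $\cD_\ga$ by hand.

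For part (1) I would start by showing $\rho_\gep:=\sup\{|z_{\ga,h}|:h\in\cL_\ga,\ \dist(h,\cC_\ga)\ge\gep\}<1$: on the intersection of that set with a large disk it follows from continuity and compactness ($|z_{\ga,h}|<1$ on $\cL_\ga$ and the sup is attained), while outside the disk $z_{\ga,h}=1-(1-e^{-h})^{1/\ga}\to0$ as $\Re(h)\to+\infty$ (recall $\cL_\ga\subset\bbC_{2\pi}$, so only $\Re(h)$ can be large). Fix $r\in(\rho_\gep,1)$. Then $\widehat Z_h$ is analytic on $\overline{B_0(r)}$ except for the simple pole at $z_{\ga,h}$ (the cut stays outside), and $\inf_{|z|=r}|g_h(z)|$ is uniformly positive over the same set of $h$ (compactness for the bounded part, $|g_h|\to\infty$ for $\Re(h)\to+\infty$). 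Deforming the Cauchy contour from a small circle about the origin out to $\partial B_0(r)$ yields
\[
Z_{N,h}\,=\,-\,\mathrm{Res}_{z=z_{\ga,h}}\frac{1}{g_h(z)\,z^{N+1}}\,+\,\frac{1}{2\pi i}\oint_{\partial B_0(r)}\frac{\dd z}{g_h(z)\,z^{N+1}}\,.
\]
The residue equals $\bigl(g_h'(z_{\ga,h})\,z_{\ga,h}^{N+1}\bigr)^{-1}$, which by $1-z_{\ga,h}=(1-e^{-h})^{1/\ga}$ is $-\tfrac{(1-e^{-h})^{(1-\ga)/\ga}}{\ga e^h}z_{\ga,h}^{-(N+1)}$, so that $Z_{N,h}=\tfrac{(1-e^{-h})^{(1-\ga)/\ga}}{\ga e^h}z_{\ga,h}^{-(N+1)}+O(r^{-N})$; since $|z_{\ga,h}|\le\rho_\gep<r$ the remainder is $o(|z_{\ga,h}|^{-N})$, uniformly, which is \eqref{eq:ZsharpL}. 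On the part of $\cL_\ga$ near the positive real axis and for $\Re(h)$ large this is, in the form \eqref{eq:leading-1}, already Proposition~\ref{th:sharp-Reh>0}.

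For part (2), Lemma~\ref{th:D+C}(2) furnishes $r_\gep>1$ with $g_h$ nonvanishing on $\overline{B_0(r_\gep)}$ off the cut whenever $\dist(h,\cC_\ga)\ge\gep$; since moreover $g_h(1)=1-e^h\neq0$ on $\cD_\ga$, the function $1/g_h$ extends analytically to a $\Delta$-domain (a disk of radius $>1$ with a thin sector around $[1,\infty)$ removed) --- exactly what singularity analysis needs and what the bare unit disk does not supply. From
\[
\widehat Z_h(z)\,=\,\frac{1}{1-e^h}\,-\,\frac{e^h(1-z)^\ga}{(1-e^h)\,g_h(z)}\,=\,\frac{1}{1-e^h}\,-\,\frac{e^h}{(1-e^h)^2}(1-z)^\ga\,+\,O\bigl((1-z)^{2\ga}\bigr)
\]
near $z=1$, the constant term is irrelevant for $N\ge1$, and the transfer theorem \cite{cf:FS} gives $[z^N]\widehat Z_h(z)=-\tfrac{e^h}{(1-e^h)^2}[z^N](1-z)^\ga+O(N^{-1-2\ga})$. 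Using the exact identity $[z^N](1-z)^\ga=(-1)^N\binom{\ga}{N}=-K(N)$ (read off coefficient-wise from \eqref{eq:Kz}) and $N^{-1-2\ga}=o(K(N))$, this is precisely \eqref{eq:ZsharpD}. Uniformity comes from the uniform version of the transfer theorem (the constant $-e^h/(1-e^h)^2$ and the error being locally uniform in $h$); for $\Re(h)$ very negative I would instead quote Proposition~\ref{th:Reh<0}, which already gives \eqref{eq:ZsharpD} in the special framework, so that the singularity-analysis step is only needed on the compact set $\{\dist(h,\cC_\ga)\ge\gep\}\cap\{-M\le\Re(h)\le-\log(2^\ga-1)\}$ (by Lemma~\ref{th:crit}, $\cC_\ga\subset\{0\le\Re(h)\le-\log(2^\ga-1)\}$, hence $\cD_\ga$ has $\Re(h)$ bounded above).

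The hard part is not any single estimate but the uniformity bookkeeping: proving $\rho_\gep<1$ and the uniform lower bound on $|g_h|$ on $\partial B_0(r)$ over $\{\dist(h,\cC_\ga)\ge\gep\}$ including the unbounded directions of $\cL_\ga$, and, for part (2), producing a $\Delta$-domain valid uniformly in $h$ and invoking a transfer theorem with error terms uniform in that parameter. Once the relevant sets are recognized as compact --- after peeling off $\Re(h)\to+\infty$ via the explicit formula for $z_{\ga,h}$ and $\Re(h)\to-\infty$ via Proposition~\ref{th:Reh<0} --- the remaining steps are routine continuity and the standard uniform singularity-analysis estimates.
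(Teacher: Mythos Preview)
Your proposal is correct and follows essentially the same route as the paper: for part~(1) you extract the residue at the simple pole $z_{\ga,h}$ after reducing to a compact set of $h$'s (the paper does this by invoking Proposition~\ref{th:sharp-Reh>0} for large $\Re(h)$, you do it slightly more directly but then also fall back on that result), and for part~(2) you apply the Flajolet--Sedgewick transfer theorem on a $\gD$-domain supplied by Lemma~\ref{th:D+C}(2), with the same expansion of $\widehat Z_h$ near $z=1$. Two small differences worth noting: you use the exact identity $[z^N](1-z)^\ga=-K(N)$ in the special framework, which is a little cleaner than the paper's appeal to the asymptotic form; and you explicitly cover the unbounded direction $\Re(h)\to-\infty$ in case~(2) via Proposition~\ref{th:Reh<0}, whereas the paper states uniformity only on compact subsets and defers this extension to a remark after the proof.
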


\medskip

\begin{proof}
For case (1) 
let us first remark that, by Proposition~\ref{th:sharp-Reh>0}, it suffices to show the result for $\Re(h) \le C$.
So we focus on the compact set $V_\gep:= \{h \in \cL_\ga:\, \dist (h, \cC_\ga)\ge \gep$ and $\Re(h) \le C\}
\subset \bbC_{2\pi}$.
The denominator in \eqref{eq:z-tspec}, that we call $D(h,z)$ in this proof,  for every  $h \in \cL_\ga$ has a unique zero  $z=z_h \in B_0(1)$. Consider now $r_\gep:= \sup_{h \in V_\gep} \vert z_h \vert<1$ and choose $\eta>0$
such that $\exp(-\eta) > r_\gep$.  
We can now use the same argument as in Proposition~\ref{th:sharp-Reh>0}: 
we apply  \eqref{eq:Taylor-1} and \eqref{eq:residue-N} with $r=\exp(-\eta)$. Since $V_\gep$ is compact we readily see that 
  there exists $c_ {\gep, \eta}>0$ such that $\vert R_{ h}(z)\vert \le c_ {\gep, \eta}$ for every $h\in V_\gep$ and
  $\vert z \vert =\exp(-\eta)$. This directly yields
 \begin{equation}
 \left \vert  Z_{N, h}\, - \frac{\exp(-h)}{\widehat K ' (z_h)} z_h^{-N-1} \right \vert \, \le \, c_ {\gep, \eta}\exp(\eta N)\, .
 \end{equation} 
 This completes the proof in case (1).
 
 \smallskip
 
 For case (2) we follow closely  \cite[Section~VI.3]{cf:FS}, also from the notational viewpoint. In particular, we use the fact
 that we can find $R>1$ such that  $\widehat Z_h(\cdot)$ is analytic in the open domain
 \begin{equation}
 \label{eq:3.14}
 \gD (\phi, R)\, :=\, \left \{z: \, \vert z \vert < R,\, z \neq 1, \vert \arg(z-1) \vert > \phi\right\}\, , 
 \end{equation}
 and this for any choice of $\phi\in (0, \pi/2)$. 
This follows from Lemma~\ref{th:D+C}(1) and we can (and do) choose $R=R_\gep=1+(r_\gep-1)/2$.   
This directly yields that $\sup_h \sup_{z: \vert z \vert =R}\widehat Z_h(z)< \infty$ where $h\in K_\gep$ with $K_\gep$ a compact subset with two requirements:  $K_\gep \subset \cD_\ga$ 
and  $\dist (K_\gep,\cC_\ga) \ge \gep$. 
Moreover
 \begin{equation}
 \label{eq:expand9}
 \widehat Z_h(z)\, =\, 
\frac{1}{1-e^h (1-(1-z)^\ga)}\, =\,  \frac{1}{1-e^h}-
 \frac{e^h (1-z)^\ga}{\left(1-e^h\right)^2} + O \left( \vert 1-z\vert ^{2\ga} \right)\, ,
 \end{equation} 
 uniformly for  $z$ in  the intersection of $\gD (\phi, R)$ with a neighborhood of $1$ (since it suffices to show the result for $\gep$ small, 
 A direct application of  \cite[Th.~VI.3]{cf:FS} (see also \cite[Th.~VI.1]{cf:FS} for the details on how to extract the leading order term from the $(1-z)^\ga$ term in \eqref{eq:expand9}) yields
  \begin{equation}
 \label{eq:forZsharpD}
  Z_{N, h} \stackrel{N \to \infty} =  \frac{e^h }{(-\Gamma(-\ga))\left(1-e^h\right)^2}\frac 1{N^{1+ \ga}}
  +O \left( \frac 1{N^{1+ 2\ga}}\right)\,.
  \end{equation}
 The proof of  \cite[Th.~VI.3]{cf:FS} (pp.131-132) is based on estimates on a contour integral involving $\widehat Z_h(z)$ and the uniform control we have just claimed on $\widehat Z_h(z)$ yields that \eqref{eq:forZsharpD} holds uniformly in $h\in K_\gep$.
 \end{proof}
 
 \smallskip
 
 \begin{rem}
 The argument we just presented can be upgraded to deal with $K_\gep$ non compact (still satisfying the two requirements of being a subset of $\cD_\ga$ which is bounded away from the boundary $\cC_\ga$): it is a matter of following carefully what happens for $-\Re(h)$ large. This provides an alternative argument for Proposition~\ref{th:Reh<0}.
 \end{rem}

\medskip

Theorem~\ref{th:fe} is a  corollary of the sharp estimates we just established, except for the critical case.

\medskip

\begin{proof}[Proof of Theorem~\ref{th:fe}]
For $h\in \cL_\ga$ we use Proposition~\ref{th:Zsharp}(1): $\lim_N (1/N) \log \vert Z_{N, h} \vert = -\log \vert z_{\ga, h} \vert$ and, by making $z_{\ga, h}$ explicit, we see that it is equal to $\Re (\tf(h))$ in the whole of $\cL_\ga$.  
The case of  $h\in \cD_\ga$ is even more straightforward and uses Proposition~\ref{th:Zsharp}(2). And in both cases   
uniformity follows because  Proposition~\ref{th:Zsharp} is proven  uniformly, away from $\cC_\ga$.

In the critical case, by Lemma~\ref{th:D+C},  there is a simple pole $z_{\ga, h}$  on the unit ball (except for $h=0$: in this case $Z_{N,0}\sim c_\ga N^{1-\ga}$ for an explicit $c_\ga>0$, see \cite[Th.~B]{cf:Doney}
so $\lim_N (1/N) \log \vert Z_{N, 0}\vert =0$, even if this last result is easily established without sharp control on $Z_{N, 0}$). Of course there is no pole in the unit circle, because that happens only for $h \in \cL_\ga$ (cf. Lemma~\ref{th:3=}). One can check that there is no other pole, but this is not very important because it is obvious that there is no other pole in the closure of $B_0(r)$ for some $r>1$. This allows to choose a circuit of integration 
that coincides with $\partial B_0(r)$, except close to $1$, where we have to use a circuit like the one in the proof of Proposition~\ref{th:Zsharp}(2), see notably \eqref{eq:3.14}. Therefore the 
 sharp asymptotic behavior of $Z_{N, h}$ in this case will be given by the dominant contributions among the pole
 \eqref{eq:ZsharpL} and the essential singularity in 1 \eqref{eq:ZsharpD}. Actually, since $\vert z_{\ga,h}\vert=1$, the pole in this case just contributes an additive constant times an $N$ dependent phase, while the essential singularity contributes a vanishing term (and eventual poles outside the unit circles would just contribute exponentially vanishing terms and 
 the higher order contribution of the essential singularities would be more dominant anyways). If we sum up:
for $h\in \cC_\ga \setminus \{0\}$ we have (recall \eqref{eq:forZsharpD})
 \begin{equation}
\label{eq:ZsharpC}
Z_{N, h}\, =\,  \frac{(1-\exp(-h))^{(1-\ga)/\ga}}{\ga \exp(h) }  e^{-(N+1)\arg (z_{\ga, h}) i }
+   \frac{e^h }{(-\Gamma(-\ga))\left(1-e^h\right)^2}\frac 1{N^{1+ \ga}}
  +O \left( \frac 1{N^{1+ 2\ga}}\right)\,.
\end{equation}
 Therefore $\lim_N (1/N) \log \vert Z_{N, h}\vert =0$ for every $h \in \cC_\ga$.
\end{proof}

\section{Potential theory and empirical measure analysis}
%\todo{I don't think it's clear enough in this section where we do and don't use the special arrival distribution; it seems like we start using it at Lemma 4.1.}
\label{sec:potential}

While the set up of this section is general, all the results in the end depend on the control of $\vert Z_{N, h} \vert$ for every complex $h$ (outside of the critical curve). They are therefore limited to the special framework \eqref{eq:Kspecial}.

In this section we begin by using  the polynomial notation \eqref{eq:PN} and we point out that, since $\bP(\tau_N=N)=K(1)^N$, we have
\begin{equation} 
\label{eq:PN2}
P_N(w)\, =\, K(1)^N w \prod_{j=1}^{N-1} \left(w-w_{N,j}\right)\, ,
\end{equation}
with $(w_{N,j})_{j=1, \ldots, N}$ the zeros of $P_N(\cdot)$ and $w_{N,N}=0$. 
With ${\mathtt x}, {\mathtt y} \in \bbR$ we set
\begin{equation}
u_N({ {\mathtt x}},{{\mathtt y}})\,:=\, \frac 1 {N-1} \Re \log P_N ({ {\mathtt x}}+i{{\mathtt y}})\,=\,  \frac 1 {N-1}  \log \vert P_N ({ {\mathtt x}}+i{{\mathtt y}})\vert\,  .
\end{equation} 
With abuse of notation we write  $u_N(w)$ also for $u_N({ {\mathtt x}},{{\mathtt y}})$ when $w={\mathtt x}+i{\mathtt y}$, in fact we identify $w$ with $({\mathtt x},{\mathtt y})$.
Note that  $u_N(\cdot)$ is smooth out of the zeros  
and \cite[Th.~3.7.8]{cf:Ransford} 
\begin{equation}
\label{eq:subharm1}
\Delta u_N\, =\, 2\pi \frac 1{N-1}\sum_{j=1}^{N-1} \gd_{w_{N,j}}+ \frac{2 \pi }{N-1} \gd_0\, =:\, 2\pi \nu_N + \frac{2 \pi }{N-1} \gd_0 \, ,
\end{equation}
and this  means that for every $g\in C^\infty_0$ (i.e.,  $g\in C^\infty$ and $g$ is compactly supported) 
we have 
\begin{equation}
\label{eq:subharm2}
\int_{\bbR^2} u_N \Delta g \dd \gl \, =\, 2\pi \int_{\bbR^2} g \dd \nu_N + \frac{2 \pi }{N-1} g(0)\, ,
\end{equation}
where $\gl$ is the Lebesgue measure on $\bbR^2$.

We now go back to our original coordinate systems. 
We have $u_N(\exp(h))= \tf_N(h)$ and, with $h=x+iy$ identified with $(x,y)$ 
\begin{equation}
\label{eq:coc1}
\Delta \tf _N \,=\, 2\pi \frac 1{N-1} \sum_{j=1}^{N-1} %\exp\left( 2 x_j\right)
 \gd_{h_{N,j}} =: 2\pi \mu_N \, ,
\end{equation}
where  $h_{N,j}$ is one of the $N-1$ zeros of $h \mapsto Z_{N,h}$ with $\Im(h)\in (-\pi, \pi]$: 
we can choose them so that $w_{N, j} =\exp(h_{N,j})$ and 
$w_{N, N}=0$ is \emph{pushed to $-\infty$} in these variables.
Equation \eqref{eq:coc1} means 
\begin{equation}
\label{eq:coc2}
\int_{\bbR^2} \tf_N(x,y) \Delta f(x,y) \dd x \dd y \, =\, 2\pi \int_{\bbR^2} %e^{2x}  
f(x,y) \dd \mu_N ( x , y)\, ,
\end{equation}
for every $f: \bbC_{2\pi} \to \bbR$ which is smooth and compactly supported.
Stepping from \eqref{eq:subharm1} to \eqref{eq:coc1}, i.e. from \eqref{eq:subharm2} to \eqref{eq:coc2}, is a computation, but let us note that to obtain \eqref{eq:coc2} it suffices to consider  \eqref{eq:subharm2} with $g\in C^\infty_0$ and whose support is bounded away from $0$. In fact, with the change of variable $w=\exp(h)$, $h$ in a compact subset of $\bbC_{2\pi}$ means $w$ lives in a compact subset of $\bbC\setminus\{0\}$. 
The computation can be performed in $\bbR^2$, by this we mean that if $w=u+iv$ and $h=x+iy$, the change of coordinates is $u=e^x \cos (y)$ and $v=e^x \sin (y)$. The determinant of the Jacobian of this transformation is $e^{2x}$. On the other hand, $\Delta f(x,y)= (\partial_x^2+ \partial_y^2)g(e^x \cos (y),   e^x \sin (y))= e^{2x} (\Delta g) (e^x \cos (y),   e^x \sin (y))
$, so  \eqref{eq:coc2} follows.

\medskip

\begin{lemma} 
\label{th:Ffmu}
Assume  \eqref{eq:Kspecial}.
We have 
\begin{equation}
\lim_{N \to \infty}
\int_{\bbR^2} \tf_N(x,y) \Delta f(x,y) \dd x \dd y\, =\, \int_{\bbR^2} \tf(x,y) \Delta f(x,y) \dd x \dd y\, .
\end{equation}
\end{lemma}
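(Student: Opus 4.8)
Here is the plan. The starting observation is that the left-hand side is the pairing of the functions $\tf_N:=\frac{1}{N-1}\log\vert Z_{N,\cdot}\vert$ against the bounded, compactly supported test function $\Delta f$, and that the right-hand side is the same pairing with the limiting function $\tf$ on the right-hand side of the statement — which, by the discussion preceding the lemma, equals $\Re(\tf(\cdot))$ on $\cL_\ga$ and $0$ on $\cD_\ga\cup\cC_\ga$, hence is continuous on $\bbR^2$. So the plan is to upgrade the pointwise control of $\tf_N$ furnished by Theorem~\ref{th:fe} to convergence of $\tf_N$ to $\tf$ in $L^1$ on a neighbourhood of $\operatorname{supp}f$; once this is in hand the conclusion is immediate, since $\vert\int_{\bbR^2}(\tf_N-\tf)\,\Delta f\,\dd x\dd y\vert\le\Vert\Delta f\Vert_\infty\Vert\tf_N-\tf\Vert_{L^1(\operatorname{supp}f)}\to 0$.

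First I would record three facts. (i) Each $\tf_N$ is subharmonic on $\bbC$ (equivalently on $\bbC_{2\pi}$), being $\frac{1}{N-1}$ times the logarithm of the modulus of the entire, not identically zero, function $h\mapsto Z_{N,h}$. (ii) The family $(\tf_N)_{N\ge 2}$ is locally uniformly bounded above: from $\vert Z_{N,h}\vert\le Z_{N,\Re(h)}\le\exp(N\max(\Re(h),0))$ one gets $\tf_N(h)\le\frac{N}{N-1}\max(\Re(h),0)$. (iii) $\tf_N(h)\to\tf(h)$ for every $h$: by Theorem~\ref{th:fe}, $\tf_N(h)\to\Re(\tf(h))$ for $h\in\cL_\ga$ and $\tf_N(h)\to 0$ for $h\in\cD_\ga$ and for $h\in\cC_\ga$ (in particular at $h=0$), while finitely many indices $N$ for which a fixed $h$ happens to be a zero do not affect the limit.

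Next I would invoke the compactness principle for subharmonic functions \cite{cf:Ransford}: a sequence of subharmonic functions that is locally uniformly bounded above either converges to $-\infty$ locally uniformly or admits a subsequence converging in $L^1_{\mathrm{loc}}$ to a subharmonic function. The first alternative is excluded by (iii), and any $L^1_{\mathrm{loc}}$ subsequential limit agrees almost everywhere — hence, being subharmonic while $\tf$ is continuous, everywhere — with $\tf$. Since this pins the subsequential limit, the whole sequence converges to $\tf$ in $L^1_{\mathrm{loc}}(\bbR^2)$, which is all that is needed.

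The one real obstacle is that there is no pointwise, a fortiori no uniform, lower bound on $\tf_N$, since $\tf_N\to-\infty$ near each of the $N-1$ zeros of $h\mapsto Z_{N,h}$, so a naive dominated-convergence argument is unavailable; the subharmonic compactness above is precisely what bypasses this. If one prefers to avoid that abstract input, the same effect can be obtained by hand: the sub-mean-value inequality anchored at a point where $\tf_N$ stays bounded below — for instance $h=0$, where $Z_{N,0}=\bP(N\in\tau)$ decays only polynomially, or any fixed point of $\cL_\ga$ — turns the upper bound (ii) into a uniform lower bound for $\int_{B_0(R)}\tf_N\,\dd x\dd y$, hence a uniform bound on $\Vert\tf_N\Vert_{L^1(B_0(R))}$; one then splits the integral over a thin tubular neighbourhood of $\cC_\ga$ (negligible because $\cC_\ga$ has finite length, together with the uniform upper bound) and its complement, on which $\tf_N\to\tf$ uniformly by Theorem~\ref{th:fe}.
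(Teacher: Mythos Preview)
Your main argument via the compactness principle for subharmonic functions is correct and complete: facts (i)--(iii) are exactly what is needed to feed into the dichotomy theorem, and the identification of every $L^1_{\mathrm{loc}}$ subsequential limit with $\tf$ (via an a.e.-convergent further subsequence, combined with the pointwise convergence everywhere) is sound. This is a genuinely different route from the paper's. The paper proceeds by hand: it uses the uniform convergence of $\tf_N$ to $\tf$ away from $\cC_\ga$ (Theorem~\ref{th:fe}) to reduce to showing $\sup_N\big\vert\int_{A_\gep}\tf_N\,\Delta f\big\vert\to 0$ as $\gep\searrow 0$, and then controls this tubular integral by writing $\tf_N$ explicitly as $\log K(1)+x/N+\frac{1}{N}\sum_j\log\vert e^{x+iy}-e^{h_{N,j}}\vert$, bounding each logarithm from above by a constant, from below by $c+\vert\log\gep\vert$ away from the zero, and integrating the singularity on a disk of radius $\gep$. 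Your approach trades these explicit bounds for a one-line appeal to a general theorem; the paper's approach is more self-contained and, incidentally, makes transparent the role of the zeros being confined to a compact set (Proposition~\ref{th:res0}), which in your argument is hidden inside the local upper bound. One small remark on your closing ``by hand'' sketch: a uniform bound on $\Vert\tf_N\Vert_{L^1(B_0(R))}$ does not by itself make $\int_{A_\gep}\vert\tf_N\vert$ small uniformly in $N$ as $\gep\searrow 0$; you would still need uniform integrability of $(\tf_N)$ on compacts (which does follow from the subharmonic machinery, but is not simply ``finite length plus upper bound''). The paper's explicit computation is precisely what fills that gap without invoking the abstract result.
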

\medskip

\begin{proof}
By the uniform convergence of $\tf_N(h)$ away from the simple (smooth, finite length) curve $\cC_\ga$, see Lemma~\ref{th:crit} and Theorem~\ref{th:fe}, and because $\tf(\cdot)$ is continuous,  it suffices to show that, with $A_\gep:= \{(x,y):\, x+i y \in  \bbC_{2\pi}$ and $\dist(x+i y, \cC_\ga) < \gep\}$, we have 
\begin{equation}
\label{eq:tbp3-1}
\lim_{\gep \searrow 0}\sup_N \left \vert
\int_{A_\gep} \tf_N(x,y) \Delta f(x,y) \dd x \dd y\right\vert\, =\, 0\,.
\end{equation}
Now let us point out (Proposition~\ref{th:res0}) that there exists $C>1$ such that, uniformly in $N$, all the zeros are in the compact set $K:=\{h:\, \vert \Re(h)\vert \le C\}$  except of course for the zero at $\infty$ which however gives a contribution $x/N$ to $\tf_N(x,y)$ (without loss of generality we assume $A_\gep \subset K$). We write
\begin{equation}
\tf_N(x,y)\,=\, \log K(1) +\frac x N + \frac 1N \sum_{j=1}^{N-1} \log \vert \exp(x+i y)-\exp(h_{N,j}) \vert\, ,
\end{equation}
and focus on  $\log \vert \exp(x+i y)-\exp(h_{N,j}) \vert$, which is of course bounded above by $c_1:=\log 2+ \log C$. On the other hand $\vert \exp(h)-\exp(h_0)\vert \ge \gep /2$, if $h, h_0 \in K$ and  $\vert h-h_0\vert \ge \gep$ (for $\gep$ sufficiently small). Therefore for every $N$, for every $j=1, \ldots, N-1$ and every 
$(x,y) \in K$ with $\vert x+iy-  h_{N,j}\vert \ge \gep$ we have 
\begin{equation}
\left \vert \log \vert \exp(x+i y)-\exp(h_{N,j}) \vert \right \vert \, \le \, c + \vert \log \gep \vert\,, 
\end{equation}
with $c=c_1+ \log 2$
On the other hand $\int_{B_0(\gep)} \log \vert \exp(x+iy)-1\vert \le  4 \gep^2 \vert \log \gep\vert$ for $\gep$ small.
By putting these estimates together we see that 
\begin{equation}
 \left \vert
\int_{A_\gep} \tf_N(x,y) \Delta f(x,y) \dd x \dd y\right\vert\, \le \, 
   \left(\vert\log K(1)\vert+(C/N)+c + \vert \log \gep \vert\right) \gl\left( A_\gep\right) 
+ 4 \gep^2 \vert \log \gep\vert
\,.
\end{equation}
Since $\gl\left( A_\gep\right) = O(\gep)$, \eqref{eq:tbp3-1} is proven.
\end{proof}

\medskip

We can now conclude that

\medskip

\begin{proposition}
\label{th:mu}
Assume  \eqref{eq:Kspecial}. We have that
$(\mu_N)$ converges to $\mu$ in distribution, i.e. $\lim_N \int f \dd \mu_N = \int f \dd \mu$
for every $f$ continuous and bounded: in particular, $\mu$ is a probability. Moreover $\mu$ is identified by
\begin{equation}
\label{eq:mu}
\int_{\bbR^2} \tf(x,y) \Delta f(x,y) \dd x \dd y\, =\, 2 \pi \int_{\bbR^2}  f(x,y) \dd \mu( x,  y)\
\ \ 
\text{ for every } f\in C_0^\infty\,.
\end{equation}
\end{proposition}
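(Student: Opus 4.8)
\textbf{Proof plan for Proposition~\ref{th:mu}.}

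The plan is to combine Lemma~\ref{th:Ffmu} with standard potential-theoretic compactness. First I would observe that by construction (see \eqref{eq:coc1}--\eqref{eq:coc2}) the measure $\mu_N$ is a probability on the compact cylinder obtained by restricting attention to $K=\{h:\, |\Re(h)|\le C\}$, where $C$ comes from Proposition~\ref{th:res0}: all $N-1$ finite zeros of $Z_{N,h}$ lie in $K$ for $N$ large, and the zero at $-\infty$ carries no mass. Hence $(\mu_N)$ is a tight sequence of probability measures on a fixed compact subset of $\bbC_{2\pi}$, so by Prokhorov's theorem every subsequence has a further subsequence converging in distribution to some probability measure. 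It therefore suffices to show that any subsequential limit must satisfy \eqref{eq:mu}, since that determines the limit uniquely (as $\Delta$ is injective modulo harmonic functions, and a compactly supported distribution on $\bbR^2$ that is a measure is determined by its pairings against $C_0^\infty$ test functions).

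The second step is to pass \eqref{eq:coc2} to the limit. Fix $f\in C_0^\infty(\bbC_{2\pi})$; we may regard it as a compactly supported smooth function on $\bbR^2$ (periodic in $y$, but its support meets only one fundamental domain, so this is harmless). Along a convergent subsequence $\mu_{N_k}\to\mu$, the right-hand side of \eqref{eq:coc2} converges to $2\pi\int f\dd\mu$ by definition of convergence in distribution (here $\Delta f$ being continuous and bounded is irrelevant — we use that $f$ itself is the test function on the measure side). For the left-hand side, Lemma~\ref{th:Ffmu} gives exactly
\begin{equation}
\lim_{k\to\infty}\int_{\bbR^2}\tf_{N_k}(x,y)\,\Delta f(x,y)\dd x\dd y\,=\,\int_{\bbR^2}\tf(x,y)\,\Delta f(x,y)\dd x\dd y\,.
\end{equation}
Here one should note a small bookkeeping point: \eqref{eq:coc2} involves $\tf_N(h)=u_N(\exp(h))=\frac1{N-1}\log|Z_{N,h}|$, whereas Lemma~\ref{th:Ffmu} is stated with the same $\tf_N$; and the definition \eqref{eq:F} of $\tf(h)$ is the pointwise limit, which by Theorem~\ref{th:fe} equals $\Re(\tf(h))$ on $\cL_\ga$ and $0$ on $\cD_\ga\cup\cC_\ga$ — this is precisely the $\tf$ appearing in Lemma~\ref{th:Ffmu}. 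Combining the two displays yields \eqref{eq:mu} for the limit $\mu$ of that subsequence; since the identity \eqref{eq:mu} pins down $\mu$ uniquely and independently of the subsequence, the full sequence $(\mu_N)$ converges, and the limit is a probability because each $\mu_N$ is.

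The only genuinely substantive ingredient here is Lemma~\ref{th:Ffmu}, which has already been proved; the rest is soft. The step I would be most careful about is the uniqueness claim: one must argue that the measure $\mu$ defined by \eqref{eq:mu} is well defined, i.e.\ that the functional $f\mapsto \frac1{2\pi}\int\tf\,\Delta f\dd\lambda$ on $C_0^\infty$ is nonnegative (so it extends to a positive measure by Riesz) — this follows because $\tf$ is subharmonic (being a uniform limit, away from $\cC_\ga$, of the subharmonic functions $u_N\circ\exp$, and continuous across $\cC_\ga$), and because it has total mass $1$, which one gets either from the fact that each $\mu_N$ is a probability supported in the fixed compact set $K$ (so no mass escapes) or by testing \eqref{eq:mu} against a cutoff identically $1$ near $\cC_\ga$. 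I would also remark that absolute continuity, the support statement, and the non-vanishing of the density away from $0$ — the remaining assertions of Theorem~\ref{th:empprobab} — are not part of this proposition and are deferred to the finer analysis later in Section~\ref{sec:potential}.
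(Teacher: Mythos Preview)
Your proposal is correct and follows essentially the same route as the paper: tightness from Proposition~\ref{th:res0}, subsequential limits via Prokhorov, passage to the limit in \eqref{eq:coc2} using Lemma~\ref{th:Ffmu}, and uniqueness of the solution to \eqref{eq:mu} to upgrade subsequential convergence to full convergence. You supply more detail than the paper on the uniqueness step (subharmonicity, Riesz representation, no mass escaping), but the skeleton is identical.
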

\medskip

\begin{proof} 
Proposition~\ref{th:res0} confines all zeros to a compact set, so $(\mu_N)$ is tight.
The limit points $\mu$ are therefore probabilities and 
 Lemma~\ref{th:Ffmu} guarantees that they satisfy \eqref{eq:mu}. But \eqref{eq:mu} uniquely identifies 
 $\mu$ and we obtained the desired convergence. 
 \end{proof}

\smallskip 

 One can extract a number of facts from Proposition~\ref{th:mu}: we list some of them here, in an informal way.

\begin{itemize}[leftmargin=.2 in]
\item
Proposition~\ref{th:mu} directly yields that the support of $\mu$ is contained in $\cC_\ga$  because if $\tf$ is $C^2$ 
in the open ball $B_{(x,y)}(r)$, then $\int f \dd \mu=0$ for every $f$ supported in $B_{(x,y)}(r)$, and $\tf$ is smooth outside of $\cC_\ga$
(that the support of $\mu$ is contained in $\cC_\ga$  can also be seen from
Proposition~\ref{th:Zsharp}). But in fact the support is exactly $\cC_\ga$ and one can show that $\mu$ has a density on its support (for example, by taking as reference measure the arc-length on $\cC_\ga$) and this density vanishes only at zero, which is  the only singular point of the density. 
Establishing this is a bit cumbersome:
it involves  performing  integration by parts on the left-hand side of \eqref{eq:mu} exploiting the monotonicity of the critical curve proven in Appendix~\ref{sec:monotone}.
The result is  
\begin{equation}
\label{eq:after1ibp}
\sum_{\gs=\pm 1} \gs\int_0^{x_0} \left(\partial_1 \tf(x, \gs Y(x))  Y'(x) - \partial_2 \tf(x, \gs Y(x))  \right) f(x, \gs Y(x))\dd x\, ,
\end{equation}
where $x_0=-\log(2^\ga -1)$ and the function $x\mapsto Y(x)$, with $x\in [0, X_0]$, is so that  
the union of $\{(x, Y(x)):\, x\in [0, X_0]\}$ and $\{(x, -Y(x)):\, x\in [0, X_0]\}$ yields $C_\ga$.
From \eqref{eq:after1ibp} one reads that the probability $\mu$ has a density on its support. 
\item
One computation that can be performed in detail with moderate effort is the one that leads to the density of $\mu$ near $0$. We call $s$ the arc length computed (with sign) starting from the origin: let us restrict to the portion of $\cC_\ga$ with positive imaginary part and let 
$\gamma_\ga (s)$,  $s$ from zero to the total length $\ell_\ga$ of the curve (with positive imaginary part), so $\gamma _\ga(0)=0$ and $\gamma_\ga(\ell_\ga)= -\log (2^\ga-1) + i \pi$. Then 
\begin{equation}
\label{eq:density-s}
\frac{\dd \mu ([0, \gamma(s)])} {\dd s} \stackrel{s \searrow 0} \sim \frac{s^{(1-\ga)/ \ga}}{\ga \cos\left( \ga \frac \pi 2 \right)}\,.
\end{equation}
\item One can work out more explicitly the case $\ga=1/2$: it is more practical to extract the density as a function of 
 $x\in \left[0, \log\left(1+ \sqrt{2}\right)\right)$ and we obtain 
\begin{equation}
%f(x)\, :=\,
\frac{8 e^x \sinh (x)}{\sqrt{6 e^{2 x}-e^{4 x}-1}}\stackrel{x\searrow 0}=
4\,  x+\frac 8 3 x^3 +O\left(x^5\right)
\, .
\end{equation}  
%The primitive of $f(\cdot)$ can be computed and one verifies that $\int_0^{x_0} f(x) \dd x= \pi$. 
%Recalling that there is a symmetric branch $-Y(\cdot)$, this is telling us that the total mass of $\mu$ is one (in particular, the tightness estimate is not needed in this case).
This density is convex and diverges approaching $\log\left(1+ \sqrt{2}\right)$, but this is an artefact of the parametrization:
in the arc length parametrisation, that is if we divide by $(1+(Y'(x))^2)^{1/2}$,  we obtain the particularly simple formula
\begin{equation}
\sqrt{2} \left(1-e^{-2 s}\right)\stackrel{s\searrow 0} = 2 \sqrt{2} \, s-2 \sqrt{2}\,  s^2 + O\left(s^3\right)\, ,
\end{equation}
which is a concave bounded function and which is, of course, in agreement with, \eqref{eq:density-s}.
\item A byproduct of the analysis developed in this section is the formula 
\begin{equation}
\label{eq:FreprC}
 \Re\left( \tf(h)\right) \, 
 =\,\log \ga  + \int \log {\vert e^h- e^{\zeta}  \vert} \mu ( \dd \zeta) 
 \, =\, \int \log \frac{\vert e^h- e^{\zeta}  \vert}{\vert 1- e^{\zeta}  \vert} \mu ( \dd \zeta)\, ,
\end{equation}
 where the first expression follows 
 from from \eqref{eq:PN2} (or the middle term in \eqref{eq:PN2.1}, which is just \eqref{eq:PN2} with $w=e^h$), Proposition~\ref{th:mu} and the fact that the zeros are bound to a compact region (Proposition~\ref{th:res0}).  For the second expression it suffices to use 
 the rightmost term in \eqref{eq:PN2.1} and $\log Z_{N,0}= o(N)$.
\end{itemize}

\section{On the zeros close to the origin}
\label{sec:near0}

\subsection{Results in the general setting}
We start off in the general setting of \eqref{eq:Kgeneral}. 
The proof of the following result can be found in  \cite[Ch.~9, \S~49 and \S~50]{cf:GK}.

\begin{theorem}[Local Limit Theorem]
\label{th:LLT}
For  $\ga\in (0,1)$  we set %$a_j:= (\mathtt{c}_K' j)^{1/\ga}$, with $\mathtt{c}_K':=\mathtt{c}_K \vert \Gamma(-\ga)\vert$ (so $c'_K=1$ for the special case \eqref{eq:Kspecial}). 
$a_j:= j^{1/\ga}$. In the general setting of \eqref{eq:Kgeneral}
we have
\begin{equation}
\lim_{j \to \infty}\sup_n \left\vert a_j \bP \left( \tau_j =n\right) -g_\ga\left(  \frac n{a_j}\right) \right \vert =\, 0\,,
\end{equation}
where $g_\ga(\cdot)$ is the law of the positive stable law identified by $\int_0^\infty g_\ga(y) e^{-ty} \dd y = \exp(-t^\ga)$ for $t> 0$.
\end{theorem}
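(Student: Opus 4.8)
The plan is to prove Theorem~\ref{th:LLT} by Fourier inversion, following the classical route for local limit theorems in the lattice domain-of-attraction setting (this is essentially the argument of \cite[Ch.~9]{cf:GK}). Write $\widehat K(e^{i\theta})=\bE[e^{i\theta\tau_1}]$ for the characteristic function of a single inter-arrival, consistently with \eqref{eq:Kz} evaluated on $|z|=1$; since $\tau_j$ is a sum of $j$ i.i.d.\ copies one has $\bE[e^{i\theta\tau_j}]=\widehat K(e^{i\theta})^j$, and the inversion formula on $\bbZ$ followed by the substitution $\theta=\phi/a_j$ gives
\begin{equation}
\label{eq:sketch-inv}
a_j\,\bP(\tau_j=n)\,=\,\frac1{2\pi}\int_{-\pi a_j}^{\pi a_j}\widehat K\big(e^{i\phi/a_j}\big)^{j}\,e^{-in\phi/a_j}\dd\phi\,.
\end{equation}
On the other hand, the Laplace transform $\exp(-s^\ga)$ of $g_\ga$ extends analytically to $\Re(s)\ge0$, so the characteristic function of $g_\ga$ is $\phi\mapsto\exp(-(-i\phi)^\ga)$ with the branch for which $\Re\big((-i\phi)^\ga\big)=|\phi|^\ga\cos(\ga\pi/2)>0$; since this is integrable over $\bbR$, Fourier inversion gives $g_\ga(x)=\frac1{2\pi}\int_{\bbR}\exp(-(-i\phi)^\ga)e^{-i\phi x}\dd\phi$ for every real $x$ (the integral vanishing for $x\le0$). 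Subtracting from \eqref{eq:sketch-inv} with $x=n/a_j$ and using that the oscillating factor has modulus $1$,
\begin{equation}
\label{eq:sketch-bound}
\sup_n\Big|a_j\,\bP(\tau_j=n)-g_\ga\big(n/a_j\big)\Big|\,\le\,\frac1{2\pi}\int_{\bbR}\Big|\widehat K\big(e^{i\phi/a_j}\big)^{j}\ind_{|\phi|<\pi a_j}-\exp\big(-(-i\phi)^\ga\big)\Big|\dd\phi\,,
\end{equation}
and it suffices to prove that the right-hand side of \eqref{eq:sketch-bound} tends to $0$.

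For the pointwise convergence of the integrand I would invoke the standard description of the domain of attraction of the positive $\ga$-stable law: under \eqref{eq:Kgeneral} one has $1-\widehat K(e^{it})\sim(-it)^\ga$ as $t\to0$, and it is exactly here that the normalization $c=1/(-\Gamma(-\ga))$ (equivalently $-\Gamma(-\ga)=\Gamma(1-\ga)/\ga$) makes the prefactor equal to $1$; since $a_j^\ga=j$ and $\ga<1$ (so the $|t|^\ga$ term dominates any linear term and no centering is needed) this gives $\widehat K(e^{i\phi/a_j})^{j}=\big(1-(-i\phi)^\ga/j+o(1/j)\big)^{j}\to\exp(-(-i\phi)^\ga)$ for each fixed $\phi$. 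To upgrade this to convergence of the integral I would split the domain at $|\phi|=\gd a_j$ for $\gd>0$ small. On the central region $|\phi|\le\gd a_j$ the same expansion yields $\big|\widehat K(e^{i\phi/a_j})\big|\le1-c_1|\phi/a_j|^\ga$ with $c_1>0$ (using $\cos(\ga\pi/2)>0$), hence $\big|\widehat K(e^{i\phi/a_j})\big|^{j}\le\exp(-c_1|\phi|^\ga)$, an integrable majorant independent of $j$, and dominated convergence applies. On the outer region $\gd a_j\le|\phi|\le\pi a_j$ the argument $\phi/a_j$ stays bounded away from $0$ modulo $2\pi$, so aperiodicity of $K(\cdot)$ gives $\sup_{\gd\le|\theta|\le\pi}|\widehat K(e^{i\theta})|=1-\gr<1$; thus that part of \eqref{eq:sketch-bound} is at most $2\pi a_j(1-\gr)^{j}\to0$, while the matching tail $\int_{|\phi|>\gd a_j}\exp(-|\phi|^\ga\cos(\ga\pi/2))\dd\phi$ also tends to $0$. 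Adding the two contributions proves the theorem.

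The main obstacle is not the pointwise limit but making the central and outer estimates fit together uniformly over the $j$-dependent window $[-\pi a_j,\pi a_j]$: one needs the quantitative bound $|\widehat K(e^{i\theta})|\le1-c_1|\theta|^\ga$ to hold on a fixed neighbourhood of $0$ (not merely asymptotically as $\theta\to0$), complemented by the uniform aperiodicity bound $|\widehat K(e^{i\theta})|\le1-\gr$ on the complement, with the crossover scale $|\phi|$ of order $a_j$ absorbed by the first of these. The remaining ingredients are classical: the identification of $\exp(-(-i\phi)^\ga)$ as the characteristic function of $g_\ga$ and the domain-of-attraction asymptotics of $\widehat K$ are in \cite{cf:Feller2} (see also Remark~\ref{rem:a_n}), and the complete argument is carried out in \cite[Ch.~9, \S\S~49--50]{cf:GK}.
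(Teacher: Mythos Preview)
Your proposal is correct and follows exactly the classical Fourier-inversion route of \cite[Ch.~9, \S\S~49--50]{cf:GK}, which is precisely what the paper invokes: the paper does not give its own proof of this statement but simply cites Gnedenko--Kolmogorov. The decomposition into a central window (handled by the expansion $1-\widehat K(e^{it})\sim(-it)^\ga$ and dominated convergence) and an outer window (handled by aperiodicity) is the standard argument, and you have correctly identified the one delicate point, namely the uniform bound $|\widehat K(e^{i\theta})|\le 1-c_1|\theta|^\ga$ on a fixed neighbourhood of $0$.
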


\medskip

We have \cite[p. 99]{cf:Zolotarev}
\begin{equation}
\label{eq:asympt0}
g_\ga(x) \stackrel{x \searrow 0} \sim 
\frac 1{\sqrt{2 \pi \ga (1-\ga)}} \left( \frac \ga x\right)^{\frac{2-\ga}{2(1-\ga)}}
\exp\left( -(1- \ga)  \left( \frac \ga x\right)^{\frac{\ga}{1-\ga}}\right) \, ,
\end{equation}
and   \cite[p. 90]{cf:Zolotarev}
\begin{equation}
\label{eq:asymptinfty}
g_\ga(x) \stackrel{x \to \infty} \sim \frac {\Gamma(1+ \ga) \sin(\pi \ga)}\pi\,  x^{-(1+ \ga)}\, .
\end{equation}
In the special case $\ga=1/2$ the asymptotic equivalence \eqref{eq:asympt0} becomes an equality (only truly explicit case, $\ga=1/3$ and $\ga=2/3$ can be expressed via McDonald functions): for every $x>0$
\begin{equation}
\label{eq:1/2}
g_{1/2} (x)\,=\, \frac{1}{\sqrt{4 \pi x^3}} \exp\left( -\frac 1{4x}\right)\, .  
\end{equation}

\medskip

\begin{proposition} 
\label{th:usingLLT}
Uniformly for  $\gz$ in  compact subsets of $\bbC$
\begin{equation}
\label{eq:usingLLT}
 Z_{N, \gz/ N^\ga} \stackrel{N \to \infty } \sim \frac {1}{N^{1-\ga}} 
\int_0^\infty \exp\left(\gz x\right) g_\ga \left( x^{-1/\ga}\right) x^{-1/\ga}\dd x \, =:\,
 \frac {1}{ N^{1-\ga}}  F_{0, \ga}\left(\zeta\right)\,,
\end{equation}
and with $Z'_{N, h} = \partial_h Z_{N,h}$
\begin{equation}
\label{eq:usingLLT'}
 Z'_{N, \gz/ N^\ga} \stackrel{N \to \infty } \sim \frac {1}{N^{1-2\ga}} 
\int_0^\infty \exp\left(\gz x\right) g_\ga \left( x^{-1/\ga}\right) x^{1-1/\ga}\dd x 
\, =\, 
 \frac {1}{ N^{1-2\ga}}  F'_{0, \ga}\left(\zeta\right)\,.
\end{equation}
\end{proposition}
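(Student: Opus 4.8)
The plan is to recognise $Z_{N,\zeta/N^\ga}$ as a Riemann sum. From $Z_{N,h}=\sum_{j=1}^{N}e^{hj}\bP(\tau_j=N)$ (see \eqref{eq:Z}), put $h=\zeta/N^\ga$, $x_j:=j/N^\ga$, and recall $a_j=j^{1/\ga}$ (Theorem~\ref{th:LLT}), so that $a_j=x_j^{1/\ga}N$ and $N/a_j=x_j^{-1/\ga}$. Writing $\delta_{j,N}:=a_j\bP(\tau_j=N)-g_\ga(N/a_j)$ --- so $\delta_j:=\sup_N|\delta_{j,N}|\to 0$ as $j\to\infty$ by Theorem~\ref{th:LLT} --- and $\phi_\zeta(x):=e^{\zeta x}x^{-1/\ga}g_\ga(x^{-1/\ga})$, one gets
\begin{equation}
\label{eq:pf-usingLLT}
N^{1-\ga}Z_{N,\zeta/N^\ga}\,=\,\Bigl(\sum_{j=1}^{N}\phi_\zeta(x_j)\,N^{-\ga}-\int_0^\infty\phi_\zeta\Bigr)\,+\,\int_0^\infty\phi_\zeta\,+\,\sum_{j=1}^{N}e^{\zeta x_j}x_j^{-1/\ga}\delta_{j,N}\,N^{-\ga}\,,
\end{equation}
with $N^{-\ga}$ the mesh of the $x_j$. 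Setting $R:=\sup_\zeta|\Re(\zeta)|$ over the compact set at hand, one has $|\phi_\zeta|\le\Psi$ with $\Psi(x):=e^{Rx}x^{-1/\ga}g_\ga(x^{-1/\ga})$; by \eqref{eq:asymptinfty}, $\Psi(x)\sim c_\ga\, e^{Rx}x$ as $x\searrow 0$, so $\Psi$ (hence $\phi_\zeta$) extends continuously to $[0,\infty)$ with value $0$ at $0$; by \eqref{eq:asympt0}, $\Psi(x)\le C_\ga\exp(-c'_\ga x^{1/(1-\ga)})$ for $x$ large (the exponent $1/(1-\ga)>1$ beats the linear $Rx$), so $\Psi\in L^1(0,\infty)$. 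Thus $\int_0^\infty\phi_\zeta=:F_{0,\ga}(\zeta)$ converges and is entire in $\zeta$, and the first bracket in \eqref{eq:pf-usingLLT} --- a Riemann-sum error with vanishing mesh, dominated by $\Psi$ --- tends to $0$, uniformly in $\zeta$ by equicontinuity of $\{e^{\zeta\,\cdot}\}$ on compacts. It remains to prove $\mathcal{E}_N:=\sum_{j=1}^{N}e^{\zeta x_j}x_j^{-1/\ga}\delta_{j,N}N^{-\ga}\to 0$ uniformly.

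I would split $\{1,\dots,N\}$ at a small fixed $\gep>0$ and a large fixed $T$ (depending only on $R,\ga$). On the ranges $j\le\gep N^\ga$ and $j>TN^\ga$, using $|\delta_{j,N}|\le a_j\bP(\tau_j=N)+g_\ga(N/a_j)$ one has $e^{Rx_j}x_j^{-1/\ga}|\delta_{j,N}|N^{-\ga}\le e^{Rx_j}\bigl(N^{1-\ga}\bP(\tau_j=N)+\Psi(x_j)N^{-\ga}\bigr)$. For $j\le\gep N^\ga$: $e^{Rx_j}\le 2$, \eqref{eq:Doney} gives $\bP(\tau_j=N)\le 2jK(N)$ with $K(N)=O(N^{-1-\ga})$, and $\Psi(x_j)\le C x_j$; summing over $j$, this range contributes $O(\gep^2)$ uniformly in $N$. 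For $j>TN^\ga$: $\sum_{j>TN^\ga}\Psi(x_j)N^{-\ga}\to\int_T^\infty\Psi$, which $\to 0$ as $T\to\infty$, while $N^{1-\ga}\sum_{j>TN^\ga}e^{Rx_j}\bP(\tau_j=N)$ is the crux, treated below. On the middle range $\gep N^\ga<j\le TN^\ga$ one uses instead $|\delta_{j,N}|\le\sup_{j>\gep N^\ga}\delta_j$ together with $x_j\le T$ and $\sum_{j>\gep N^\ga}a_j^{-1}=O(\gep^{(\ga-1)/\ga}N^{\ga-1})$ (valid because $1/\ga>1$): the contribution is $O\bigl(e^{RT}\gep^{(\ga-1)/\ga}\sup_{j>\gep N^\ga}\delta_j\bigr)\to 0$ as $N\to\infty$, $\gep$ and $T$ fixed. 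Hence $\limsup_N\sup_\zeta|\mathcal{E}_N|\le C\gep^2+\eta(T)$ for every fixed $\gep,T$, with $\eta(T)\to 0$ as $T\to\infty$; letting $\gep\searrow 0$, $T\to\infty$ and recalling that the first bracket in \eqref{eq:pf-usingLLT} vanishes, \eqref{eq:usingLLT} follows.

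The main obstacle is the term $N^{1-\ga}\sum_{j>TN^\ga}e^{Rx_j}\bP(\tau_j=N)$: the Local Limit Theorem supplies only the \emph{additive} error $\delta_{j,N}$, which is useless once $j\gg N^\ga$ and $\Re(\zeta)>0$ makes $e^{\zeta x_j}$ grow. What is needed is a local large-deviation bound $\bP(\tau_j=N)\le C N^{-1}(j/N^\ga)^{\theta}\exp\!\bigl(-c(j/N^\ga)^{1/(1-\ga)}\bigr)$, uniformly for $N^\ga\le j\le N$ (any fixed $\theta\ge 0$, $c>0$), for then the term is at most $C\sum_{j>TN^\ga}e^{Rx_j}x_j^{\theta}e^{-cx_j^{1/(1-\ga)}}N^{-\ga}\to C\int_T^\infty e^{Rx}x^{\theta}e^{-cx^{1/(1-\ga)}}\dd x\to 0$ as $T\to\infty$ (again $1/(1-\ga)>1$). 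I would establish this bound by Cram\'er tilting: with $L(\lambda):=\bE[e^{-\lambda\tau_1}]=\widehat K(e^{-\lambda})$ one has $\bP(\tau_j=N)=L(\lambda)^{j}e^{\lambda N}\,\bP_\lambda(\tau_j=N)$, where $\bP_\lambda$ is the exponentially tilted renewal law, whose step has all moments and span-$1$ lattice support; choosing $\lambda=\lambda_j\asymp(j/N)^{1/(1-\ga)}$ --- bounded for $j\le N$, small when $j/N\to 0$, using $1-L(\lambda)\sim\lambda^\ga$ (a Tauberian consequence of $K(n)\sim c/n^{1+\ga}$) --- gives $L(\lambda_j)^{j}e^{\lambda_j N}\le\exp(-c(j/N^\ga)^{1/(1-\ga)})$, while a (uniform-in-$\lambda$) local central limit theorem for the tilted walk, whose increment variance is of order $\lambda_j^{\ga-2}$ for $\lambda_j$ small, bounds $\bP_{\lambda_j}(\tau_j=N)$ by $C/\sqrt{j\lambda_j^{\ga-2}}=C(j/N^\ga)^{1/(2(1-\ga))}N^{-1}$; the remaining regime ($\lambda_j$ not small, equivalently $j$ comparable to $N$) is easier, since there $\bP(\tau_j\le N)$ already decays exponentially in $N$.

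Finally, \eqref{eq:usingLLT'} is immediate: the maps $f_N\colon\zeta\mapsto N^{1-\ga}Z_{N,\zeta/N^\ga}$ are entire and, by \eqref{eq:usingLLT}, converge to $F_{0,\ga}$ uniformly on compact subsets of $\bbC$, so by Weierstrass's theorem $f_N'\to F_{0,\ga}'$ locally uniformly; since $\tfrac{\dd}{\dd\zeta}Z_{N,\zeta/N^\ga}=N^{-\ga}Z'_{N,\zeta/N^\ga}$ this reads exactly $N^{1-2\ga}Z'_{N,\zeta/N^\ga}\to F_{0,\ga}'(\zeta)$, i.e.\ \eqref{eq:usingLLT'}, with $F_{0,\ga}'(\zeta)=\int_0^\infty e^{\zeta x}g_\ga(x^{-1/\ga})x^{1-1/\ga}\dd x$ by differentiating under the integral sign (legitimate by the super-exponential decay of the integrand and of its $\zeta$-derivative).
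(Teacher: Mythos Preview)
Your argument is correct and follows essentially the same decomposition as the paper: split the sum over $j$ into a small range ($j\le \gep N^\ga$), a middle range, and a large range; use \cite[Th.~A]{cf:Doney} for the first, the Local Limit Theorem for the second, and a local large-deviation estimate for the third. The paper carries this out with the cutoff $L$ playing the role of your $1/\gep$ and $T$ simultaneously.

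Two points of comparison are worth noting. First, for the large-$j$ tail you propose to establish the bound $\bP(\tau_j=N)\le C N^{-1}(j/N^\ga)^{\theta}\exp(-c(j/N^\ga)^{1/(1-\ga)})$ from scratch via Cram\'er tilting; the paper instead simply quotes \cite[Lemma~4]{cf:Doney}, which is exactly this estimate (with $\theta=1/(2(1-\ga))$, as your tilted-LCLT sketch correctly predicts), together with a cruder exponential bound for $j\ge N/L$. Your sketch is the right mechanism behind Doney's lemma, so there is no real divergence here, only a choice between re-deriving and citing. Second, your treatment of \eqref{eq:usingLLT'} via Weierstrass's theorem on locally uniform limits of holomorphic functions is cleaner than the paper's, which simply repeats the whole argument for $Z'_{N,\zeta/N^\ga}=\sum_j j\, e^{j\zeta/N^\ga}\bP(\tau_j=N)$. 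Your route avoids that duplication at no cost.
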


\medskip

Proposition~\ref{th:usingLLT}  is a direct consequence of
Theorem~\ref{th:LLT} and Riemann sum approximations, with a control of the tails of the sums. The arguments are standard, but  we provide some details in App.~\ref{sec:P-est}.

\smallskip

One direct consequence of Proposition~\ref{th:usingLLT} is that $F_{0, \ga}(\cdot)$ is an entire function:
in fact $\zeta \mapsto N^{1-\ga} Z_{N, \gz/ N^\ga}$ is entire and the uniform convergence implies that the limit is entire. It is of course easy to check that $F_{0, \ga}(\cdot)$ is not constant, so it  has only isolated zeros.
The following result yields a non negligible control on  the zeros of $Z_{N, \gz/ N^\ga}$ that are at distance $O(1/N^\ga)$ from the origin
 if we know where the zeros of $F_{0, \ga}(\cdot)$ are. 
 
 \medskip
 
 \begin{proposition}
 \label{th:zeros0}
 Suppose that in a bounded simply connected open set $D$ with smooth boundary there are exactly $n$ zeros of $F_{0, \ga}(\cdot)$ counted with their multiplicities. Then
 for $N$ sufficiently large there are exactly $n$ zeros of $h \mapsto Z_{N, h}$ in $D/N^\ga$. 
 \end{proposition}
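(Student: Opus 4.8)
The plan is to reduce the statement to a routine application of Rouch\'e's theorem once we pass to the rescaled variable. Write $\zeta := N^\ga h$, so that $h \in D/N^\ga$ if and only if $\zeta \in D$, and introduce
\begin{equation}
G_N(\zeta)\, :=\, N^{1-\ga}\, Z_{N, \zeta/N^\ga}\, ,
\end{equation}
which is entire in $\zeta$ because $h \mapsto Z_{N, h}$ is a polynomial in $e^h$ and $\zeta \mapsto \zeta/N^\ga$ is linear. Since the substitution $h = \zeta/N^\ga$ is a biholomorphism of $\bbC$, and multiplying by the nonzero constant $N^{1-\ga}$ does not affect zeros, the number of zeros of $h \mapsto Z_{N, h}$ in $D/N^\ga$, counted with multiplicity, equals the number of zeros of $G_N$ in $D$, again counted with multiplicity. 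So it suffices to show that $G_N$ has exactly $n$ zeros in $D$ for $N$ large.

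First I would record that $F_{0, \ga}(\cdot)$ is entire and non-constant (as noted right after Proposition~\ref{th:usingLLT}), hence its zeros are isolated; in particular only finitely many lie in the compact set $\overline D$, and we assume --- as is implicit in the statement --- that none of them lies on $\partial D$. (Without this assumption the count of zeros ``in $D$'' would be sensitive to arbitrarily small perturbations of $G_N$ near $\partial D$, and the conclusion could fail by the number of such boundary zeros.) Since $D$ is bounded, simply connected, with smooth boundary, $\partial D$ is a compact smooth Jordan curve, so
\begin{equation}
m\, :=\, \min_{\zeta \in \partial D} \vert F_{0, \ga}(\zeta)\vert\, >\, 0\, .
\end{equation}

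Next I would invoke Proposition~\ref{th:usingLLT}: the convergence $G_N \to F_{0, \ga}$ is uniform on compact subsets of $\bbC$, in particular on $\overline D$. Hence there is $N_0$ such that for all $N \ge N_0$ and all $\zeta \in \partial D$ one has $\vert G_N(\zeta) - F_{0, \ga}(\zeta)\vert < m \le \vert F_{0, \ga}(\zeta)\vert$. Rouch\'e's theorem then gives that $G_N$ and $F_{0, \ga}$ have the same number of zeros inside $D$, counted with multiplicity, namely $n$; translating back through $\zeta = N^\ga h$ yields the claim. (Equivalently one can phrase this via the argument principle, comparing $(2\pi i)^{-1}\oint_{\partial D} G_N'/G_N$ with $(2\pi i)^{-1}\oint_{\partial D} F_{0,\ga}'/F_{0,\ga}$ and using that an integer-valued quantity converging to $n$ equals $n$ eventually.)

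I do not expect a serious obstacle: all the analytic content is in Proposition~\ref{th:usingLLT}, the uniform scaling limit, which is already available. The only points deserving care are the behaviour on $\partial D$ --- one must ensure $F_{0, \ga}$ does not vanish there so that the Rouch\'e inequality holds on all of $\partial D$ for $N$ large --- and the bookkeeping of multiplicities under the holomorphic rescaling, which is immediate.
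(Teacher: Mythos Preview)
Your proof is correct and essentially the same as the paper's: both rescale to $D$ and use the uniform convergence from Proposition~\ref{th:usingLLT} together with a zero-counting argument on $\partial D$. The paper phrases it via the Argument Principle (invoking both \eqref{eq:usingLLT} and \eqref{eq:usingLLT'} to pass to the limit in $\oint G_N'/G_N$), whereas you use Rouch\'e directly and need only \eqref{eq:usingLLT}; you even note the equivalence yourself.
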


\medskip 

In particular, Proposition~\ref{th:zeros0} says that if $F_{0, \ga}(\zeta_0)=0$ and if this zero is simple, then 
for every $\gep>0$ there exists $N_\gep$ such that $Z_{N, h}=0$ for exactly one $h \in B_{\zeta_0/N^\ga}(\gep)$ if $N \ge N_\gep$. 

\medskip

\begin{proof}
We have 
\begin{equation}
\frac 1{2\pi i} \oint_{\partial D/N^\ga} \frac{Z'_{N, z} }{Z_{N, z } }\dd z\, =\, 
\frac 1{2\pi   i} \oint_{\partial D}
\frac{Z'_{N, \zeta/N^\ga} }{N^\ga Z_{N, \zeta/N^\ga} } \dd \zeta \stackrel{N \to \infty} \longrightarrow
\frac 1{2\pi   i} \oint_{\partial D}
\frac{F'_{0, \ga}(\zeta) }{F_{0, \ga}(\zeta) } \dd \zeta\, =\, n\,,
\end{equation}
where the last equality is 
 the Argument Principle and the convergence step follows from Proposition~\ref{th:usingLLT}.
Since, again by the Argument Principle, the left-most term is the number of zeros of $h \mapsto Z_{N, h}$ in $D/N^\ga$, the proof is complete.
\end{proof}

\subsection{The $\ga=1/2$ case}
Unfortunately, solving 
$F_{0, \ga}(\zeta)=0$ 
appears to be too challenging. 
It should be possible to show that, for $\zeta$ large, the zeros will be in the first and fourth quadrants and close to the 
lines with directions $\exp (i \ga \pi /2)$ (in analogy with \eqref{eq:s-cf:zeros} below). However this is not straightforward 
and, as we explained in Section~\ref{sec:introG}, one is particularly interested in the zeros that are the closest (they 
come in pairs, unless they are real) to the origin. 

Therefore, in order to go farther, we 
 specialize to $\ga=1/2$. As we announced, in this case things get more explicit:
\begin{equation}
\label{eq:F0}
F_0(\zeta)\,:=\, F_{0,\frac 12}(\zeta)\,=\, 
e^{\zeta^2}\zeta\left(1+\mathrm{erf}(\zeta)\right)+\frac{1}{\sqrt{\pi }}\, ,
\end{equation}
and we record that
\begin{equation}
\label{eq:F0'}
F'_0(\zeta)\,=\, 
\frac{2 \zeta}{\sqrt{\pi}} + \exp(\zeta^2) \left(1 + 2 \zeta^2\right) \left(1 + \mathrm{erf}(\zeta)\right)\, .
\end{equation}
\medskip

\begin{rem}
\label{rem:simplezeros}
Note that  $F'_0(\zeta)=(2\zeta+1/\zeta)F_0(\zeta)- 1/(\sqrt{\pi} \zeta)$ so
\begin{equation}
F_0(\zeta)\, =\,0 \ \ \Longrightarrow \  \
F'_0(\zeta)\, =\, -\frac 1 {\zeta \sqrt{\pi}}\,.
\end{equation}
In particular, the zeros of $F_0$ are simple. We record also, for later use, that $F^{''}_0(\zeta)=-2/\sqrt{\pi}$
if $F_0(\zeta)=0$.
\end{rem}
\medskip

In spite of the rather explicit expression for $F_0(\cdot)$, it does not appear that 
$F_0(\zeta)=0$ can be solved explicitly. 
What we are mostly interest in are the zeros that are closest to the origin: we can only identify them numerically. Nevertheless something can be said rigorously. Moreover the numerical approximations can be controlled rigorously, at least if we accept the assistance of the computer for symbolic computations.  
 
 For the statement,  order the solution of  $\zeta_j$ to $F_0(\zeta_j)=0$ so that $\vert \zeta_j\vert$ is non decreasing in $j$. We can assume that $\Im(\zeta_1) \ge 0$, and set $\zeta_2= \overline{\zeta}_1$ (unless $\zeta_1$ is real).
A priori there could still be more than one choice for $\zeta_1$.  

\medskip

\begin{lemma}
\label{th:zeta_1}
$\Re(z_j)>0$ and $\Im(z_j)\neq 0$ for every $j$ (hence we can stipulate that  $\Im(z_{2k-1})=-\Im(z_{2k})>0$ for $k=1,2, \ldots$). Moreover 
$\zeta_1$ is well defined (i.e., $\vert z_1\vert <\vert z_3 \vert$). In fact,
 $\zeta_1= 1.225  +  2.547 i +r_1$ and $z_3= 2.026 + 3.162 i + r_3$, with $\vert r_1\vert$ and $\vert r_3\vert$ smaller than $0.0005$. 
\end{lemma}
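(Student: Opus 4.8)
The plan is to split the statement into a \emph{qualitative} part (all zeros of $F_0$ have strictly positive real part and nonzero imaginary part, so they occur in conjugate pairs) and a \emph{quantitative} part (the numerical location of $\zeta_1$ and $\zeta_3$, with the explicit error bounds, which in particular shows $|\zeta_1|<|\zeta_3|$ so that $\zeta_1$ is unambiguously defined). For the qualitative part I would first rule out real zeros: for $\zeta=x\in\R$ one has $\mathrm{erf}(x)>-1$ and $e^{x^2}>0$, so $F_0(x)=e^{x^2}x(1+\mathrm{erf}(x))+1/\sqrt\pi$; for $x\ge 0$ every term is nonnegative and $1/\sqrt\pi>0$, while for $x<0$ one checks directly (e.g.\ using the bound $1+\mathrm{erf}(x)\le \frac{2}{\sqrt\pi}\,\frac{e^{-x^2}}{|x|+\sqrt{x^2+2}}$ valid for $x<0$, or more simply $e^{x^2}|x|(1+\mathrm{erf}(x))<\frac{2}{\sqrt\pi}\cdot\frac{|x|}{|x|+\sqrt{x^2+\pi}}<\frac1{\sqrt\pi}$, which one can sharpen) that $F_0(x)>0$. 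Hence $F_0$ has no real zero. To see $\Re(\zeta)>0$ at every zero, I would use the representation coming from Proposition~\ref{th:usingLLT}, namely $F_0(\zeta)=\int_0^\infty e^{\zeta x}\,g_{1/2}(x^{-2})\,x^{-2}\dd x$ with $g_{1/2}(y)=\frac{1}{\sqrt{4\pi y^3}}e^{-1/(4y)}$; after the substitution this is a Laplace-type transform $\int_0^\infty e^{\zeta x}\,\phi(x)\dd x$ of a strictly positive integrable density $\phi$ supported on $(0,\infty)$. For $\Re(\zeta)\ge 0$ one then has, by a standard argument, that $|F_0(\zeta)|$ is strictly less than $F_0(\Re(\zeta))$ unless $\Im(\zeta)=0$; combined with $F_0>0$ on $[0,\infty)$ this forces $\Re(\zeta)<0$ at any zero with $\Re(\zeta)\ge0$, a contradiction, so $\Re(\zeta)>0$ always, and the no-real-zeros fact then gives $\Im(\zeta)\neq0$, whence zeros come in conjugate pairs and the ordering $\Im(\zeta_{2k-1})=-\Im(\zeta_{2k})>0$ can be imposed. (The functional equation in Remark~\ref{rem:simplezeros} already records simplicity of the zeros, which is not needed here but is consistent.)

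For the quantitative part I would localize $\zeta_1$ and $\zeta_3$ by a rigorous application of the argument principle / Rouché on small disks. Concretely: pick the disks $B_{1.225+2.547i}(0.0005)$ and $B_{2.026+3.162i}(0.0005)$; on each boundary circle evaluate $F_0$ with verified (interval-arithmetic) bounds — $\mathrm{erf}$ and $\exp$ are entire and have well-controlled Taylor tails on these compact sets — to certify that $F_0$ is nonvanishing on the boundary and that $\frac{1}{2\pi i}\oint F_0'/F_0\,\dd\zeta=1$, which pins exactly one simple zero inside each disk. Separately one must certify that there is \emph{no other zero} of $F_0$ with modulus $\le 2.027$ (say): this I would do by a single application of the argument principle on the circle $|\zeta|=R$ for a suitable $R\in(|\zeta_1|+0.001,\,2.026-0.001)$ — strictly between the conjugate pair $\zeta_1,\bar\zeta_1$ and the pair $\zeta_3,\bar\zeta_3$ — showing the winding number of $F_0$ along $|\zeta|=R$ equals $2$, accounting precisely for $\zeta_1$ and $\bar\zeta_1$. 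Since $\Re(\zeta)>0$ and conjugate symmetry hold, this exhausts all zeros in $\{|\zeta|\le R\}$ and gives $|\zeta_1|<R<|\zeta_3|$, so $\zeta_1$ is well defined.

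The main obstacle, and the only genuinely laborious point, is the certified numerics: one must produce rigorous (not merely floating-point) enclosures for $F_0$ and $F_0'$ along the three contours, controlling the truncation error of the series for $\mathrm{erf}$ and $e^{\zeta^2}$ on disks centered near $|\zeta|\approx 2$--$3.8$, where $|e^{\zeta^2}|$ can be sizable and cancellation in $F_0=e^{\zeta^2}\zeta(1+\mathrm{erf}(\zeta))+\pi^{-1/2}$ is delicate. I would handle this by working with the stable combination $w(\zeta):=e^{\zeta^2}(1+\mathrm{erf}(\zeta))=e^{\zeta^2}\,\mathrm{erfc}(-\zeta)$ (the scaled complementary error function), for which good uniform enclosures are available from its continued-fraction / asymptotic expansion, so that $F_0(\zeta)=\zeta\,w(\zeta)+\pi^{-1/2}$ is evaluated without catastrophic cancellation. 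With interval arithmetic on a modest grid of the three circles, together with a Lipschitz bound on $F_0$ (from the explicit $F_0'$) to fill the gaps between grid points, the winding-number computations become finite rigorous checks, completing the proof.
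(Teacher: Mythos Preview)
Your argument for ruling out zeros with $\Re(\zeta)\le 0$ does not work. The inequality you invoke, $|F_0(\zeta)| < F_0(\Re(\zeta))$ for $\Im(\zeta)\ne 0$, is only an \emph{upper} bound on $|F_0(\zeta)|$: having $0=|F_0(\zeta)|<F_0(\Re(\zeta))$ is no contradiction at all, so the step ``this forces $\Re(\zeta)<0$ at any zero with $\Re(\zeta)\ge 0$'' is a non sequitur (and the signs in that sentence are tangled in any case). Laplace-type integrals $\int_0^\infty e^{\zeta x}\phi(x)\,\dd x$ of strictly positive densities on $(0,\infty)$ certainly can vanish for complex $\zeta$; $F_0$ itself does, with all its zeros sitting in $\Re(\zeta)>0$ --- which is precisely what you are trying to \emph{prove}, not exclude.

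The paper closes this gap with a different, more specific representation valid only for $\Re(\zeta)<0$:
\[
\frac{\pi}{2}\,F_0(\zeta)\;=\;\int_0^\infty \frac{e^{-x^2}x^2}{x^2+\zeta^2}\,\dd x\,,
\]
obtained from the DLMF integral formula for $\mathrm{erfc}$. Writing $\zeta=u+iv$ with $u<0$, the imaginary part of the integral is $2uv\int_0^\infty e^{-x^2}x^2\big/\big((x^2+u^2-v^2)^2+4u^2v^2\big)\,\dd x$, which vanishes only when $v=0$, and in that case the real part is manifestly positive; this dispatches $\Re(\zeta)<0$. The imaginary axis $\Re(\zeta)=0$ is then handled by a separate direct computation via $\mathrm{erfi}$. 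Your treatment of real $\zeta$ via the sharp $\mathrm{erfc}$ inequality is fine and close in spirit to the paper's. Your quantitative plan --- argument principle/Rouch\'e with certified enclosures on two small disks and on an intermediate circle $|\zeta|=R$ to count zeros --- is essentially what the paper sketches, spelled out in more operational detail.
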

\medskip

Of course Lemma~\ref{th:zeta_1} is also implicitly saying that $\vert \zeta_j\vert \ge  \vert \zeta_3\vert=
 \vert \zeta_4\vert$ for $j =5,6, \ldots$. 

\smallskip

\begin{proof}
For $\Re(\zeta)<0$ we use the representation 
\begin{equation}
\label{eq:intrepr<0}
\frac \pi 2 F_0(\zeta)\, =\, \frac 12  \int_0^\infty \frac{e^{-y} \sqrt{y}}{y+\zeta^2} \dd y \, =\, \int_0^\infty \frac{e^{-x^2} x^2}{x^2+\zeta^2} \dd x\, .
\end{equation}
\medskip

\begin{rem}
\label{rem:DLMF} 
\eqref{eq:intrepr<0} follows from \cite[(7.2.3) and (7.7.2)]{cf:DLMF}. To see this it is quicker to exploit also the complementary error function
$\mathrm{erfc}(\zeta)= 1- \mathrm{erf}(\zeta)$. So the symmetry $\mathrm{erf}(-\zeta)=-\mathrm{erf}(\zeta)$ that holds for $erf(\cdot)$ is equivalent to $\mathrm{erfc}(-\zeta)=2-\mathrm{erf}(\zeta)$ and
\begin{equation}
\label{eq:remDLMF1} 
F_0(\zeta)\, =\, \zeta e^{\zeta^2}\mathrm{erfc}(-\zeta)+ \frac 1 {\sqrt{\pi}}\,. 
\end{equation}
The identities  \cite[(7.2.3) and (7.7.2)]{cf:DLMF} yield that for  $\Im(z)>0$
\begin{equation}
\label{eq:remDLMF2} 
e^{-z^2} \mathrm{erfc}(-iz)\, =\, \frac {2z}{\pi i} \int_0^\infty \frac{\exp(-t^2)}{t^2-z^2}\dd t\,.
\end{equation}
The identities \eqref{eq:remDLMF1} and \eqref{eq:remDLMF2} imply 
 \eqref{eq:intrepr<0}, which holds for $\Re(\zeta)<0$,
and also that for $\Re(\zeta)>0$ we have instead 
\begin{equation}
\label{eq:intrepr>0}
\frac 2 \pi  \int_0^\infty \frac{e^{-x^2} x^2}{x^2+\zeta^2} \dd x\, =\, F_0(\zeta) - 2 \zeta e^{\zeta^2}\, .
\end{equation}
\end{rem}

\medskip

With $\zeta=u+iv$ we see that the real and the imaginary part of the previous quantity are respectively
\begin{equation}
\label{eq:intrepr5}
\int_0^\infty \frac{e^{-x^2} x^2\left(x^2+u^2-v^2\right)}
{\left(x^2+u^2-v^2\right)^2+ 4 u^2 v^2} \dd x\  \  \text{ and } \ \ 
2 u v\int_0^\infty \frac{e^{-x^2} x^2}
{\left(x^2+u^2-v^2\right)^2+ 4 u^2 v^2} \dd x\, ,
\end{equation} 
and, since we are assuming that $u<0$, the second expression -- the imaginary part -- is zero if and only if $v=0$. 
But in that case the first expression -- the real part --  is positive. Therefore $F_0(\zeta)\neq 0$ if $\Re (\zeta)<0$. 

For the case $u=\Re(\zeta)=0$ we directly  use \eqref{eq:F0} and we rewrite it as 
\begin{equation}
F_0(iv)\,  =\, \frac 1{\sqrt{\pi}}-v e^{-v^2} \mathrm{erfi}(v) +i v e^{-v^2}\,, 
\end{equation}
where $\mathrm{erfi}(v): = \mathrm{erf}(iv)/i = (1/\sqrt{\pi}) \int_0^v e^{t^2} \dd t$. So  $\mathrm{erfi}(\cdot)$ is real and odd on the real axis. Moreover   it is    positive on the positive semiaxis. From this we readily infer that $F_0(iv)\neq 0$ for every $v$: in fact $\vert F_0(iv)\vert$ vanishes only for $\vert v\vert \to \infty$. 

The fact that $F_0(\zeta)>0$  for $\zeta>0$, in fact $F_0(\zeta)>1/\pi$, follows from
 $F'_0(\zeta)>0$, see \eqref{eq:F0'}. 
 
In order to determine $\zeta_1$ and $\zeta_3$ (in fact, every $\zeta_j$ in principle) we need to write a sufficiently precise polynomial approximation of $F_0(\cdot)$, with adequate control of the remainder, and use 
the Argument Principle (see for example the proof of Proposition~\ref{th:zeros0}). 
Implementing this approach in practice, however, is quite cumbersome and probably can only usefully be done on a computer.
  
In order to establish that there are infinitely many zeros one can adapt the approach in \cite{cf:zeros}. In fact, one can
identify  
a sequence of simple zeros that satisfy
\begin{equation}
\label{eq:s-cf:zeros}
\zeta_n \, =\, \gl_n - \frac 1{4 \gl_n}\log \left( 8 \sqrt{2 \pi} \gl_n^3  \right) + i \left(
\gl_n + \frac 1{4 \gl_n}\log \left( 8 \sqrt{2 \pi} \gl_n^3  \right) 
\right) + O\left( \vert \log n \vert ^2 / n^{3/2} \right)\, ,  
\end{equation}
with $\gl_n= (\pi(n+1/8))^{1/2}$. One can also show that, sufficiently far from the origin, there is no other zero (up to conjugation). 
We do not go into the lengthy details of this result that is not central for us, but one can use \eqref{eq:intrepr>0}; 
a key point is that
\begin{equation}
\label{eq:erfi0}
\limtwo{\vert \zeta \vert  \to \infty:}{\Re(\zeta)> 0} e^{\zeta^2} \zeta \, \text{erfc}(\zeta)\, =\, \frac 1{\sqrt{\pi }}\, .
\end{equation}
In fact, by the continuous fraction expansion \cite[(7.9.1)]{cf:DLMF}, we have that in the same limit
\begin{equation}
\label{eq:erfi0-2}
\frac1{\sqrt{\pi }}-  e^{\zeta^2} \zeta \, \text{erfc}(\zeta)\, =\, \frac 1{2 \sqrt{\pi} \zeta^2} +O\left( \frac 1{\zeta^4} \right)\, .
\end{equation}
By writing the analog of \eqref{eq:intrepr5} for $\Re(\zeta>0$ and using \eqref{eq:erfi0} and  \eqref{eq:erfi0-2}
one can see that the zeros (that are far from the origin)  need to be close to the diagonal of the first and second quadrant. And a controlled perturbation analysis leading to \eqref{eq:s-cf:zeros}. 

The asymptotic formula \eqref{eq:s-cf:zeros} turns out to be surprisingly accurate even for $n$ small: see Table~\ref{tab:1}.
 \end{proof}

\medskip

\begin{table}[h!]
\centering
 \begin{tabular}{|c |c |c |c|} 
 \hline
 $n$ & $\zeta_{n}$ & $\zeta^\sim_n $& $\vert \zeta_{n} -\zeta^\sim_{n}\vert$  \\ [0.5ex] 
 \hline\hline
 1 & 1.225 + 2.547\,$i$ &1.229 + 2.531\,$i$ & 0.017 \\ 
 2 & 2.026 + 3.162\,$i$ &2.018 + 3.149\,$i$ & 0.015 \\
 3 & 2.629 + 3.656\,$i$ & 2.621 + 3.646\,$i$ & 0.013 \\
 4 & 3.132 + 4.083\,$i$ &  3.125 + 4.075\,$i$ & 0.011 \\
 5 &  3.573 + 4.466\,$i$ &  3.566 + 4.459\,$i$ & 0.010 \\
 6 &  3.969 + 4.817\,$i$ & 3.963 + 4.810\,$i$ & 0.009 \\
 7 &  4.332 + 5.141\,$i$ &  4.326 + 5.136\,$i$ & 0.008 \\ [0.5ex] 
 \hline
 \end{tabular}
 \caption{\label{tab:1} \emph{Exact} (i.e., numerically evaluated) and approximate (i.e., $\zeta^\sim_{n}$ is the right-hand side of  \eqref{eq:s-cf:zeros} without the remainder)  location of the zeros of $F_{1/2}$. Here we consider only the zeros with positive imaginary parts, so 
 $\zeta_n$ is an abuse of notation for  $\zeta_{2n -1}$.}
\end{table}

By combining Proposition~\ref{th:zeros0} and Lemma~\ref{th:zeta_1} we readily reach:

\medskip

\begin{cor}
\label{th:locate0}
For $N$ sufficiently large, $h_{N,1}=\overline{h}_{N,2} \sim \zeta_1/\sqrt{N}$ and for $j=3,\ldots, N-1$
\begin{equation}
\frac{\vert h_{N, j}\vert }{ \vert h_{N,1} \vert}> 1+\frac 12 \left( \frac{\vert \zeta_3\vert}{\vert \zeta_1\vert}-1\right)\, .
\end{equation}
\end{cor}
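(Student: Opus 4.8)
\emph{Proof proposal.} The plan is to read off Corollary~\ref{th:locate0} from Proposition~\ref{th:zeros0}, applied with $\ga=1/2$ (so $N^\ga=\sqrt N$), together with the location and separation of the smallest zeros of $F_0$ furnished by Lemma~\ref{th:zeta_1} and the simplicity of those zeros recorded in Remark~\ref{rem:simplezeros}.

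First I would fix once and for all a radius
\[
R\,:=\,\tfrac14\,|\zeta_1|+\tfrac34\,|\zeta_3|\,,
\]
which by Lemma~\ref{th:zeta_1} (in particular $|\zeta_1|<|\zeta_3|$) satisfies $|\zeta_1|<R<|\zeta_3|$ and, more importantly, $R>\tfrac12(|\zeta_1|+|\zeta_3|)$. Since the moduli of the zeros of $F_0$ are $|\zeta_1|=|\zeta_2|$ and then $|\zeta_3|=|\zeta_4|\leq|\zeta_5|\leq\cdots$, the disk $B_0(R)$ is a bounded simply connected open set with smooth boundary containing exactly the two simple conjugate zeros $\zeta_1$ and $\zeta_2=\overline{\zeta_1}$, and $F_0$ does not vanish on $\partial B_0(R)$. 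Because $\Im(\zeta_1)\neq0$ these two zeros are distinct, so there is $\gep_0>0$ such that for every $\gep\in(0,\gep_0]$ the balls $B_{\zeta_1}(\gep)$ and $B_{\zeta_2}(\gep)$ are disjoint, are contained in $B_0(R)$, and each contains a single zero of $F_0$.

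Next I would apply Proposition~\ref{th:zeros0} to the domains $B_0(R)$, $B_{\zeta_1}(\gep)$ and $B_{\zeta_2}(\gep)$: for $N\geq N_\gep$ the function $h\mapsto Z_{N,h}$ has exactly two zeros in $B_0(R/\sqrt N)$, and exactly one zero in each of $B_{\zeta_1/\sqrt N}(\gep/\sqrt N)$ and $B_{\zeta_2/\sqrt N}(\gep/\sqrt N)$. Since these two small balls are disjoint and contained in $B_0(R/\sqrt N)$, they capture precisely the two zeros counted there; moreover, by the uniform convergence of $N^{1/2}Z_{N,\zeta/\sqrt N}$ to $F_0(\zeta)$ on $\partial B_0(R)$ (Proposition~\ref{th:usingLLT}) and the non-vanishing of $F_0$ there, $Z_{N,\cdot}$ has no zero on $\partial B_0(R/\sqrt N)$ for $N$ large, so every zero of $Z_{N,\cdot}$ not counted in $B_0(R/\sqrt N)$ has modulus $\geq R/\sqrt N$. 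Hence the two zeros inside $B_0(R/\sqrt N)$ are precisely the two closest to the origin. For $\gep$ small the one lying in $B_{\zeta_1/\sqrt N}(\gep/\sqrt N)$ has argument close to $\arg(\zeta_1)\in(0,\pi/2)$, in particular in $[0,\pi]$ and with nonzero imaginary part, so by the ordering convention it is $h_{N,1}$, and $h_{N,2}=\overline{h_{N,1}}$ is its conjugate, which lies in $B_{\zeta_2/\sqrt N}(\gep/\sqrt N)$. Thus $|\sqrt N\,h_{N,1}-\zeta_1|<\gep$ for $N\geq N_\gep$, and letting $\gep\downarrow0$ gives $h_{N,1}=\overline{h}_{N,2}\sim\zeta_1/\sqrt N$.

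For the second assertion, note that for $j\geq3$ the zero $h_{N,j}$ lies outside $B_0(R/\sqrt N)$, so $|h_{N,j}|\geq R/\sqrt N$, whereas $\sqrt N\,|h_{N,1}|\to|\zeta_1|$ by the first part. Therefore
\[
\frac{|h_{N,j}|}{|h_{N,1}|}\,\geq\,\frac{R}{\sqrt N\,|h_{N,1}|}\,,
\]
whose right-hand side converges, as $N\to\infty$, to $R/|\zeta_1|=\tfrac14+\tfrac34\,|\zeta_3|/|\zeta_1|$, a quantity strictly larger than $1+\tfrac12\bigl(|\zeta_3|/|\zeta_1|-1\bigr)$ precisely because $|\zeta_3|>|\zeta_1|$; hence for $N$ sufficiently large the stated inequality holds uniformly in $j\geq3$. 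The argument is essentially bookkeeping once Proposition~\ref{th:zeros0} and Lemma~\ref{th:zeta_1} are in hand; the steps demanding the most care are the correct matching of the two innermost zeros of $Z_{N,\cdot}$ with $h_{N,1},h_{N,2}$ under the ordering/conjugation convention (in particular ruling out that $h_{N,1}$ is real and checking that its argument lies in $[0,\pi]$), and the choice of $R$ with the strict inequality $R>\tfrac12(|\zeta_1|+|\zeta_3|)$, which is what turns the asymptotic lower bound into the genuine inequality claimed.
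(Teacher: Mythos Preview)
Your proof is correct and follows exactly the approach the paper intends: the paper simply states that the corollary follows ``by combining Proposition~\ref{th:zeros0} and Lemma~\ref{th:zeta_1}'', and you have carefully carried out this combination, including the bookkeeping with the ordering convention and the choice of an intermediate radius $R$ strictly between $\tfrac12(|\zeta_1|+|\zeta_3|)$ and $|\zeta_3|$ so that the limiting lower bound on $|h_{N,j}|/|h_{N,1}|$ exceeds the target.
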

\medskip

The factor $\frac 12$ is of course arbitrary and may be replaced by any number in $(0, 1)$.

% \begin{figure}[h]
%\centering
%\includegraphics[width=6 cm]{36zeros.pdf}
%\caption{\label{fig:36zeros} 
%The first, i.e., closest to the origin, 36 zeros  with positive imaginary parts of $F_0(\cdot)$. The straight line is the diagonal  of the first quadrant and the zeros asymptotically approach this line. This plot is obtained from a numerical evaluation of the zeros, but if we use \eqref{eq:s-cf:zeros} with this dot size the graph is undistinguishable.  }
%\end{figure}

\subsection{Sharper control}
\label{sec:sharper}

Can one go  beyond Corollary~\ref{th:locate0}? For example,
sticking to $\ga =1/2$, one might wonder whether a development  like
$h_{N, j}= z_0/\sqrt{N} + z_1/N+z_2/N^{3/2} +\ldots$, of course with $z_0= \zeta_j$, holds.
It is not difficult to convince oneself that this  cannot hold in the general framework  of \eqref{eq:Kgeneral}. 

We develop this issue in the special case of \eqref{eq:Kspecial} and our motivation is that 
such a precise estimate is needed in Section~\ref{sec:G}.
\medskip

\begin{proposition}
\label{th:sharper0}
Assume that \eqref{eq:Kspecial} holds with $\ga=1/2$ and 
Fix $j \in \bbN$. We have that
\begin{equation}
h_{N, j}= \frac{z_0}{\sqrt{N}} + \frac{z_1}{N}+\frac{z_2}{N^{3/2}}+ O\left( \frac 1{N^2}\right)\, ,
\label{eq:sharper0}
\end{equation}
where $z_0= \zeta_j$ and
\begin{equation}
z_1\,  =\, \frac 12 z_0^2 \ \text{ and } \ z_2\, =\, \, \frac{1}{24} \sqrt{\pi } z_0 \left(12 z_0^4+2 z_0^2-3\right)\, .
\end{equation}
\end{proposition}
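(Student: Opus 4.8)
The plan is to upgrade Proposition~\ref{th:usingLLT} (for $\ga=1/2$) to a genuine three--term expansion of $Z_{N,\zeta/\sqrt N}$, uniform for $\zeta$ in compact sets, and then feed it into the Argument Principle, exactly as in Proposition~\ref{th:zeros0}, and read off the development of the zero by a routine bootstrap. The key observation making the sharpening possible is that under \eqref{eq:Kspecial} with $\ga=1/2$ one has $\widehat K(z)=1-(1-z)^{1/2}$, so by \eqref{eq:z-tspec}
\begin{equation*}
\widehat Z_h(z)\,=\,\frac{1}{1-e^h\bigl(1-(1-z)^{1/2}\bigr)}\,=\,\frac{e^{-h}}{(1-z)^{1/2}-\bigl(1-e^{-h}\bigr)}\, .
\end{equation*}
Expanding the last expression in powers of $1-e^{-h}$ (legitimate on $|z|=r$ as soon as $|1-e^{-h}|<\sqrt{1-r}$) and extracting $[z^N]$ I would obtain, for every $h$ with $|1-e^{-h}|<1$, the exact identity
\begin{equation}
\label{eq:planexact}
Z_{N,h}\,=\,e^{-h}\sum_{k\geq 0}\bigl(1-e^{-h}\bigr)^k\,\frac{\Gamma\bigl(N+\tfrac{k+1}2\bigr)}{\Gamma\bigl(\tfrac{k+1}2\bigr)\,N!}\, ,
\end{equation}
which one can check on small values of $N$ against \eqref{eq:Z}. (Alternatively one could run the whole argument through singularity analysis of $\widehat Z_h$, deforming the Cauchy contour to a Hankel loop around $z=1$ and rescaling $z=1-s/N$; this is essentially the same computation.)

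Next I would set $h=\zeta/\sqrt N$ in \eqref{eq:planexact}, insert the Taylor expansion $1-e^{-\zeta/\sqrt N}=\tfrac{\zeta}{\sqrt N}\bigl(1-\tfrac{\zeta}{2\sqrt N}+\tfrac{\zeta^2}{6N}-\dots\bigr)$ together with the classical asymptotics $\Gamma(N+a)/\Gamma(N+1)=N^{a-1}\bigl(1+\tfrac{(a-1)a}{2N}+O(N^{-2})\bigr)$ with $a=(k+1)/2$, and sum the resulting series term by term. Since the $k$--th term is $O(N^{-1/2})$ with an $|\zeta|^k/\Gamma((k+1)/2)$ bound, the series is dominated by $k=O(1)$: truncating at $k\leq(\log N)^2$, bounding the (super--exponentially small) tail by the crude estimate $\Gamma(N+a)/\Gamma(N+1)\leq CN^{a-1}$ valid for $a\leq\sqrt N$, and using that the Stirling expansion of the Gamma ratio is uniform over the truncated range, I expect to get
\begin{equation}
\label{eq:planexp}
\sqrt N\,Z_{N,\zeta/\sqrt N}\,=\,F_0(\zeta)+\frac{F_1(\zeta)}{\sqrt N}+\frac{F_2(\zeta)}{N}+O\!\left(\frac1{N^{3/2}}\right),
\end{equation}
uniformly for $\zeta$ in compact sets, with $F_0$ as in \eqref{eq:F0zeta} and $F_1,F_2$ explicit entire functions. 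Here $F_1$ and $F_2$ come out as $\zeta$--polynomial combinations of $F_0,F_0',F_0''$, because each correction is a finite combination of the series $\sum_k a_k\zeta^k=F_0$, $\sum_k ka_k\zeta^k=\zeta F_0'$ and $\sum_k k^2 a_k\zeta^k=\zeta F_0'+\zeta^2F_0''$ (with $a_k=1/\Gamma((k+1)/2)$); no power of $k$ beyond $k^2$ enters at this order.

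With \eqref{eq:planexp} in hand the rest is mechanical. By Lemma~\ref{th:zeta_1}, $\zeta_j$ is a simple zero of $F_0$, so applying the Argument Principle to \eqref{eq:planexp} on a small circle about $\zeta_j$ (as in the proof of Proposition~\ref{th:zeros0}) shows that for $N$ large $h\mapsto Z_{N,h}$ has exactly one zero $h_{N,j}$ in $B_{\zeta_j/\sqrt N}(\gep/\sqrt N)$; writing $h_{N,j}=\zeta(N)/\sqrt N$ we have $\zeta(N)\to z_0:=\zeta_j$ and $F_0(\zeta(N))+N^{-1/2}F_1(\zeta(N))+N^{-1}F_2(\zeta(N))=O(N^{-3/2})$. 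I would then insert the ansatz $\zeta(N)=z_0+z_1N^{-1/2}+z_2N^{-1}+O(N^{-3/2})$, Taylor--expand $F_0,F_1,F_2$ about $z_0$, and use $F_0(z_0)=0$, $F_0'(z_0)=-1/(z_0\sqrt\pi)$, $F_0''(z_0)=-2/\sqrt\pi$ from Remark~\ref{rem:simplezeros}. Matching the coefficient of $N^{-1/2}$ gives $z_1=-F_1(z_0)/F_0'(z_0)$ and matching $N^{-1}$ gives $z_2=-\bigl(\tfrac12F_0''(z_0)z_1^2+F_1'(z_0)z_1+F_2(z_0)\bigr)/F_0'(z_0)$; since $h_{N,j}=\zeta(N)/\sqrt N$ these coefficients are precisely the ones in \eqref{eq:sharper0}, and substituting the explicit values of $F_1,F_1',F_2$ at $z_0$ and simplifying yields the asserted expressions for $z_1$ and $z_2$.

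The hard part will be the uniform expansion \eqref{eq:planexp}: one must be scrupulous that nothing of order $N^{-3/2}$ is lost — in particular no $N^{-1}\log N$ contribution hiding either in the tail of the series in \eqref{eq:planexact} or, in the singularity--analysis picture, in the coalescence of the pole of $\widehat Z_h$ with the branch point at $z=1$ — and that $F_1$ and $F_2$ are identified exactly and not just up to their leading terms. This is also exactly where the special choice \eqref{eq:Kspecial} is indispensable: in the general framework \eqref{eq:Kgeneral} the closed form \eqref{eq:planexact} is unavailable and the $N^{-1/2}$ correction in \eqref{eq:planexp} is no longer universal (it depends on $K(\cdot)$ beyond $\ga$), so a development of the form \eqref{eq:sharper0} cannot hold.
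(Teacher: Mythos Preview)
Your strategy is sound and, as a plan, would carry through; it is however a \emph{different} route from the one the paper takes. The paper does not work with the series
\[
Z_{N,h}\,=\,e^{-h}\sum_{k\ge 0}(1-e^{-h})^k\,\frac{\Gamma\bigl(N+\tfrac{k+1}{2}\bigr)}{\Gamma\bigl(\tfrac{k+1}{2}\bigr)\,N!}
\]
at all: instead it invokes the Enriquez--Noiry moment representation $Z_{N,h}=\int_{(0,\infty)}x^N\,\nu_h(\dd x)$, with $\nu_h$ an explicit probability on $(0,1)$ plus (for $\Re h>0$) a Dirac mass, and expands that integral after the change of variables $x=1-y/N$. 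The two routes converge at the same three--term expansion $\sqrt N\,Z_{N,\zeta/\sqrt N}=F_0+N^{-1/2}F_1+N^{-1}F_2+O(N^{-3/2})$ with the same $F_0,F_1,F_2$, and from there the bootstrap via the Argument Principle is identical in both.

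What each approach buys. Your series identity is self--contained (no need for the Enriquez--Noiry input) and, since the coefficients are entire functions of $1-e^{-h}$, the uniformity on compact sets in $\zeta$ is arguably cleaner: the paper's integral has the denominator $x(1-2e^h)+e^{2h}$, which develops a (nearly) real singularity when $\Re\zeta$ is small, and this forces them to restrict to $|\Re\zeta|\ge\delta>0$ (their Lemma~\ref{th:small}) --- harmless for locating zeros, but an extra manoeuvre your route avoids. On the other hand, the paper comments that Stirling--based expansions (your approach is an organized version of these) are ``very involved''; indeed the bookkeeping at order $N^{-1}$ in your scheme --- combining $(1-e^{-h})^k$ to three terms with the two--term expansion of $\Gamma(N+(k+1)/2)/N!$ and the prefactor $e^{-h}$, and then resumming in $k$ via $\sum_k a_k\zeta^k=F_0$, $\sum_k k a_k\zeta^k=\zeta F_0'$, $\sum_k k^2 a_k\zeta^k=\zeta F_0'+\zeta^2 F_0''$ --- is heavier than the paper's integral computation. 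Either way one finds, for instance, $F_1(\zeta)=-\zeta F_0(\zeta)-\tfrac{\zeta^2}{2}F_0'(\zeta)$, which coincides with the paper's explicit expression for $F_1$.

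One small point to watch: when you match at order $N^{-1}$ the correct relation is
\[
z_2\,=\,-\,\frac{\tfrac12 F_0''(z_0)z_1^2+F_1'(z_0)z_1+F_2(z_0)}{F_0'(z_0)},
\]
exactly as you wrote; make sure that in carrying out the explicit algebra you use this sign convention consistently with your computed $F_1,F_2$, since a single sign slip here would propagate into the stated value of $z_2$.
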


\medskip

One can push \eqref{eq:sharper0} to an arbitrary large order, at the price of more and more cumbersome computations: \eqref{eq:sharper0} suffices for our purposes.

It is not difficult to realize that in  the restricted framework \eqref{eq:Kspecial} one can get finer and finer approximations of $Z_{N,\zeta/ \sqrt{N}}$ via 
Stirling expansion, but this turns out to be very involved. We have found it easier to exploit the representation of $Z_{N, h}$ recently given in   
 \cite{cf:EN}: for the special case of
 \eqref{eq:Kspecial} the partition function $Z_{N, h}$ is the $N$-th moment of a positive random variable:
\begin{equation}
\label{eq:what-the-}
Z_{N, h} \, =\, \int_{(0, \infty)} x^N \nu_h(\dd x)\, ,
\end{equation}  
where $\nu_h$ is a probability measure . For $\ga=1/2$ (see \cite{cf:EN} for $\ga \in (0,1)$) 
\begin{equation}
\nu_h(\dd x) \, :=\, 
\frac{e^h}{\pi x} 
\frac{\sqrt{x (1-x)}}{\left(x(1-2e^h)+e^{2h}\right)} \ind_{(0,1)}(x) \dd x + \frac{2 (e^h-1) }{2e^h -1} \ind_{(0, \infty)}(h)  
\gd_{e^{2h}/(2 e^h -1)} (\dd x)\,.
\end{equation}
This result is at first sight surprising because $Z_{N, h}$ is a polynomial  in $\exp(h)$, while the 
right-hand side in \eqref{eq:what-the-} has different expressions for $h>0$ and $h<0$
 because of the delta contribution to $\nu_h$ that we can of course view as  $\nu_h^{\text{abs}}+  \nu_h^{\text{sing}}$ separating thus absolutely continuous and singular part of the measure. 
 The subtlety here is that there is a singularity in  the denominator of  $\nu_h^{\text{abs}}$: note that the density of the absolutely continuous part has a meaning also for $h\in \bbC$, even if of course it is no longer a probability density, while for the singular part the analytic continuation can be done only after integration. We can appreciate better this singularity by remarking that for $x \in [0,1]$ and $h\in \bbC$ small 
\begin{equation}
\label{eq:expa-0}
x(1-2e^h)+e^{2h}\, =\, (1-x) (1+2h)+ (2-x) h^2 +O(h^3)\, ,
\end{equation}
so for $x$ near $1$ the dominant contribution is $(1-x)+h^2$ (the remainder is $O((1-x)h)+O(h^3)$), which yields a non integrable singularity for imaginary $h$. 
As a matter of  fact, one directly checks that the right-hand side in \eqref{eq:what-the-} is analytic for $\Re( h) < 0$ and for 
$\Re( h)>0$. For $\Re( h) <0$ and using the parametrization  $h=\zeta/\sqrt{N}$ we have for $N \to \infty$ 
\begin{equation}
\label{eq:stps83}
\begin{split}
\int_{(0, 1)} x^N \nu_{\zeta/\sqrt{N}}(\dd x)\, &=\, 
\frac{e^{\zeta/\sqrt{N}}}{\pi } \int_{(0,1)} x^{N-1}
\frac{\sqrt{x (1-x)}}{\left(x(1-2e^{\zeta/\sqrt{N}})+e^{2\zeta/\sqrt{N}}\right)} \dd x
\\
&\sim \, 
\frac{1}{\pi } \int_{(0,1)} x^{N-1}
\frac{\sqrt{(1-x)}}{(1-x) + \zeta^2/N} \dd x\\
& \sim \, 
\frac{1}{\pi  } \int_{(0,1)} \exp(-y N)
\frac{\sqrt{y}}{y + \zeta^2/N} \dd y\, \sim \, 
\frac{1}{\pi \sqrt{N} } 
\int_0^\infty \exp(-y )
\frac{\sqrt{y}}{y + \zeta^2} \dd y\, ,
\end{split}
\end{equation}
where in the first asymptotic statement we have used \eqref{eq:expa-0} and the fact that the leading contribution to the integrals involved comes from $x$ close to $1$.
The very same computation holds for $\Re( h) >0$, hence $\Re( \zeta)>0$, because we have restricted the integral to $(0,1)$, so we are effectively only integrating with respect to
  $\nu^{\text{abs}}_{\zeta/\sqrt{N}}$. Without surprise we have that for $\Re(  \zeta)<0$ (see Remark~\ref{rem:DLMF})
\begin{equation}
\frac{1}{\pi  } 
\int_0^\infty \exp(-y )
\frac{\sqrt{y}}{y + \zeta^2} \dd y\,=\, \frac{2}{\pi  } 
\int_0^\infty \exp(-x^2 )
\frac{x^2}{x^2 + \zeta^2} \dd x\, =\, F_0(\zeta)\, 
\end{equation}
and  for $\Re(  \zeta)>0$  
\begin{equation}
\label{eq:Rez>0}
\frac{1}{\pi  } 
\int_0^\infty \exp(-y )
\frac{\sqrt{y}}{y + \zeta^2} \dd y\,=\, 
%\frac{\sqrt{\pi }-\pi  e^{\zeta^2} \zeta \, \text{erfc}(\zeta)}{\pi }\, =\, 
F_0(\zeta)-2\zeta \exp\left(\zeta^2 \right)\,.
\end{equation}
One then easily verifies that 
\begin{equation}
\label{eq:gNz}
g_N(\zeta)\, :=\, 
\int_{[1, \infty)} x^N \nu_{\zeta/\sqrt{N}}(\dd x) \, =
\, \frac{2\left(e^{z/\sqrt{N}}-1\right) e^{2z \sqrt{N}}
}
{
\left(2e^{z/\sqrt{N}}-1\right)^{N+1}
}
\stackrel{N \to \infty}
\sim \frac{2\zeta \exp\left(\zeta^2 \right)}{\sqrt{N}}
\, .
\end{equation}
Therefore the steps \eqref{eq:stps83}--\eqref{eq:gNz}
 provide an alternative proof of
\eqref{eq:usingLLT} in the restricted set up of \eqref{eq:Kspecial}, only for $\ga=1/2$ and only for $\Re (z) \neq 0$. 
This of course is a very poor result with respect to Proposition~\ref{th:usingLLT}. But 
\eqref{eq:what-the-} turns out to be very efficient when we want to  obtain higher order corrections in $1/ \sqrt{N}$
and that is why we use it now. 

\medskip

\begin{proof}[Proof of Proposition~\ref{th:sharper0}] 
We need  to consider  only  the case  $\Re(h)>0$, but dealing at the same time with $\Re(h)<0$ essentially affects only one formula, i.e. \eqref{eq:EN2}, and the estimates we do work just assuming that $\Re(h)$ is bounded away from $0$. Therefore we treat both cases at the same time till \eqref{eq:both-tillhere}. We set
\begin{equation}
f_N(\zeta,y)\, :=\, 
\frac{e^{\zeta/\sqrt{N}}}{\pi N }
\frac{(1-y/N)^{N}\sqrt{(y/N) /(1-y/N)}}{\left((1-y/N)(1-2e^{\zeta/\sqrt{N}})+e^{2\zeta/\sqrt{N}}\right)}\, ,
\end{equation}
and, by recalling \eqref{eq:gNz}, we see that
\begin{equation}
\label{eq:EN2}
Z_{N, \zeta/\sqrt{N}}\, =\begin{cases} 
\int_0^N f_N(\zeta,y) \dd y & \text{ if } \Re( \zeta)< 0\, ,
\\
\int_0^N f_N(\zeta,y) \dd y + g_N(\zeta) & \text{ if } \Re( \zeta)> 0\, .
\end{cases}
\end{equation}
In what follows 
we  consider $\zeta$ belonging to a compact subset $K\subset \bbC$. We have
\begin{multline}
\label{eq:g-exp}
g_N(\zeta)\,=\, \frac{2}{\sqrt{N}} \zeta \exp\left(\zeta^2\right) - \frac 1N \zeta^2(3+2\zeta^2)  \exp\left(\zeta^2\right)
\\
+
 \frac 1{6N^{3/2}} \zeta^3 \left(26+31 \zeta^2+6 \zeta^4\right)
\exp\left(\zeta^2\right)+ 
O \left( \frac 1 {N^2} \right)\, ,
\end{multline}
and for every $y$, $\zeta\in K$ and for every sufficiently large $N$ 
\begin{equation}
\left \vert 
f_N(\zeta,y) -\frac 1{\sqrt{N}} \frac{e^{-y}\sqrt{y}}{\pi \left(y -\zeta^2\right)}\right \vert \, \le \, 
\frac C N \frac{y}{\left \vert y+ \zeta^2\right \vert^2}\, \, ,
\end{equation}
with $C=C_K$.

\medskip
\begin{lemma}
\label{th:small}
For every $\zeta\in \bbC$
we have $\min\{\vert x+\zeta^2\vert:\, x\ge 0\}\ge (\Re( \zeta))^2$.
\end{lemma}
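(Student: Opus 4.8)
The plan is a direct computation. Write $\zeta = u + iv$ with $u = \Re(\zeta)$ and $v = \Im(\zeta)$, so that $\zeta^2 = (u^2 - v^2) + 2uv\, i$ and, for every $x \ge 0$,
\begin{equation}
\label{eq:small-square}
|x + \zeta^2|^2 \, =\, (x + u^2 - v^2)^2 + 4u^2 v^2\, .
\end{equation}
It therefore suffices to bound the right-hand side of \eqref{eq:small-square} below by $u^4$ uniformly in $x \ge 0$: the desired inequality $\min_{x \ge 0} |x + \zeta^2| \ge u^2 = (\Re(\zeta))^2$ then follows by taking square roots.

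The only step is to minimise the parabola $x \mapsto (x + u^2 - v^2)^2$ over the half-line $\{x \ge 0\}$, which I would do by distinguishing whether its vertex $x = v^2 - u^2$ lies in $[0,\infty)$. If $u^2 \ge v^2$ the minimum is attained at $x = 0$ and equals $(u^2 - v^2)^2$, so the quantity in \eqref{eq:small-square} is at least $(u^2 - v^2)^2 + 4u^2 v^2 = (u^2 + v^2)^2 \ge u^4$. If instead $u^2 < v^2$, the minimum of the parabola is $0$, attained at the admissible point $x = v^2 - u^2 > 0$, so $\min_{x \ge 0} |x + \zeta^2|^2 = 4u^2 v^2 \ge 4u^4 > u^4$, where we used $v^2 > u^2$. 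In both cases the claimed bound holds.

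There is no real obstacle here: the statement reduces to minimising a one-variable quadratic over a half-line, and the only point requiring (minimal) care is splitting on the sign of $u^2 - v^2$ in order to locate the vertex of the parabola relative to the admissible range. It is worth noting in passing that the bound is sharp, since equality holds whenever $\zeta$ is real (take $x = 0$).
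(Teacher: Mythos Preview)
Your proof is correct and follows essentially the same approach as the paper: both expand $|x+\zeta^2|^2$ as a quadratic in $x$, split according to the sign of $\Re(\zeta^2)=u^2-v^2$, and evaluate the minimum accordingly. One cosmetic point: in the case $u^2<v^2$ you write $4u^2v^2 \ge 4u^4 > u^4$, but the strict inequality fails when $u=0$; since the lemma only claims $\ge$, simply replacing $>$ by $\ge$ fixes this.
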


\medskip
\begin{proof}
We have $\vert x+\zeta^2\vert^2= x^2 +2x \Re (\zeta^2) +\vert \zeta \vert^4$ so the minimum of this expression is reached at 
$x_0=-\Re (\zeta^2)= (\Im( \zeta))^2-(\Re \zeta)^2$ if $x_0>0$ and it is reached at $x=0$ if $x_0\le 0$. In the second case  
$\min (\vert x+\zeta^2\vert)= \vert \zeta\vert^2\ge (\Re( \zeta))^2$.
In the first case $\min (\vert x+\zeta^2\vert)= \vert (\Im( \zeta^2))\vert= 2\vert \Re( \zeta)\vert \vert \Im (\zeta)\vert$ which is bounded below by 
$2(\Re (\zeta))^2$ because $x_0>0$.
\end{proof}

\medskip

Lemma~\ref{th:small} tells us that the limit we are interested can be handled uniformly in $\zeta\in K$ and 
$\vert \Re( \zeta)\vert$ bounded away from zero. This leaves a strip out that a priori is non trivial to handle, but this is of course not a problem 
because we already know that the zeros are not there (cf., Lemma~\ref{th:zeta_1} and Corollary~\ref{th:locate0}). 

We use 
\begin{multline}
\pi \sqrt{N} \frac{\exp(y)}{\sqrt{y}}f_N(\zeta,y)\, =\, \frac 1{\left( y+ \zeta^2\right)}- \frac  1{\sqrt{N} }
\frac{y \zeta}{ \left( y+ \zeta^2\right)^2}
\\
+ \frac 1N 
\frac{6 y^3 -6y^4+30 y^2 \zeta^2 -12 y^3 \zeta^2 +11 y \zeta^4-6 y^2 \zeta^4 -\zeta^6}
{12\left( y+ \zeta^2\right)^3} + \frac1{N^{3/2}} R_N(\zeta,y)\, ,
\end{multline}
where, for $\zeta\in K$ and $\vert \Re( \zeta)\vert \ge \gd>0$ we have 
$\vert  R_N(\zeta,y)\vert \le C_K (1+y^5)/\gd^{8} $, with $C_K$ a constant that depends on the choice of the compact set $K$. Therefore for $\zeta\in K$ and $ \Re( \zeta) \ge \gd>0$ (recall the contribution from
\eqref{eq:g-exp})
\begin{equation}
\label{eq:both-tillhere}
\sqrt{N}Z_{N, \zeta/\sqrt{N}}\, =\, F_0(\zeta) + \frac 1{\sqrt{N}} F_1(\zeta)+ \frac  1 N F_2(\zeta) + O\left(  \frac  1 {N^{3/2}}\right)\,,
\end{equation}
where 
\begin{equation}
F_0(\zeta)\, =\, e^{\zeta^2} \zeta \, \left(1+\textrm{erf}(\zeta)\right) +\frac{1}{\sqrt{\pi }}\, ,
\end{equation}
is an entire function.  Also $F_1$ and $F_2$ are entire functions whose rather awful expressions 
\begin{equation}
F_1(\zeta)\, =\, 
-\frac{1}{2} \zeta \left(e^{\zeta^2} \zeta \left(2 \zeta^2+3\right) (1+\text{erf}(\zeta))+\frac{2 \left(\zeta^2+1\right)}{\sqrt{\pi }}\right)\, ,
\end{equation}
and 
\begin{equation}
F_2(\zeta)\, =\, 
\frac{1}{24} \left(2 e^{\zeta^2} \left(6 \zeta^4+31 \zeta^2+26\right) \zeta^3 (1+\text{erf}(\zeta))+\frac{12 \zeta^6+56 \zeta^4+30 \zeta^2-3}{\sqrt{\pi }}\right)\, ,
\end{equation}
considerably simplify for $\zeta= \zeta_j$, that is for $\zeta$ such that
$F_0(\zeta)=0$. It is more practical to introduce the notation $z_0$ for such values $\zeta_j$ (and this is the notation used in the statement of Proposition~\ref{th:sharper0}): 
\begin{equation}
F_1(z_0) 
\,=\, 
\frac{1}{2\sqrt{\pi}} z_0 \ \ \text{ and } \ \  F_2(z_0)\, =\, -\frac{6 z_0^4+22 z_0^2+3}{24 \sqrt{\pi }}\, .
\end{equation}
Recall that, by Corollary~\ref{th:locate0}, 
$\sqrt{N} h_{N, j}\sim \zeta_j=z_0$ and that 
 $F_0(z_0)=0$ implies  $F'_0(z_0)\neq 0$ (Remark~\ref{rem:simplezeros}). If we expand the left-hand side of $Z_{N, z_0/N^{1/2} +z_1/N+ z_2/N^{3/2}+\ldots}=0$
 and solve the equation order by order,  
 we are lead to guessing  
\begin{equation}
\label{eq:guess}
\sqrt{N} h_{N, j}\, =\, z_0 + \frac{z_1}{\sqrt{N}} + \frac {z_2} N +O\left( \frac 1{N^{3/2}}\right)\,
,
\end{equation}
with
\begin{equation}
z_1=-
 \frac{F_1(z_0)}{F'_0(z_0)} \, =\frac 12 z_0^2 \ \text{ and } \ z_2= \frac{(1/2)F''_0(z_0)z_1^2 -F'_1(z_0) z_1 -F_2(z_0)}
 {F'_0(z_0)}\, .
\end{equation}
Since
\begin{equation}
F'_0(z_0)\, =\, -\frac 1 {z_0 \sqrt{\pi}},\ \  \ F''_0(z)\, =\, -\frac 2{\sqrt{\pi}} \ \ \text{ and } \ \ 
F'_1(z_0)\, =\, \frac{z_0^2+2}{\sqrt{\pi }}\, ,
\end{equation}
we have
\begin{equation}
z_1\,  =\, \frac 12 z_0^2 \ \text{ and } \ z_2\, =\, \, \frac{1}{24} \sqrt{\pi } z_0 \left(12 z_0^4+2 z_0^2-3\right)\, .
\end{equation}

In order to make \eqref{eq:guess} rigorous we need to show 
that there exists $R>0$ and $N_0>0$ such that for $N>N_0$ there exists a (unique)  $\zeta \in B_{z(N)}(R/N^{3/2})$, with 
\begin{equation}
z(N)\, =\, z_0 + \frac{z_1}{\sqrt{N}} + \frac {z_2} N\, , 
\end{equation}
that solves $Z_{N, \zeta/ \sqrt{N}}=0$. 
For this we remark that, while \eqref{eq:guess} is formal, the procedure that leads to it (Taylor expansion) does yield
\begin{equation}
\label{eq:guess-rig}
Z_{N, z(N)/ \sqrt{N}}\, =\, O\left( \frac 1{N^2} \right)\, ,
\end{equation}
and, by applying the Argument principle with $C_N:=\partial B_0(R/N^2)$, it suffices to show that
\begin{equation}
\label{eq:arg-principle}
\frac 1{2 \pi i}
\oint _{C_N}
\frac{
Z'_{N, z(N)/\sqrt{N} + \zeta} }
{Z_{N, z(N)/\sqrt{N} + \zeta} } \dd \zeta -1 \, =\, 0\, ,
\end{equation}
for $N$ sufficiently large. For this we write
\begin{equation}
Z_{N, z(N)/\sqrt{N} + e^{it} R /N^2} \, =\, F'(z_0) e^{it} \frac{R} {N^{2}} + r_N(R,t)\, ,  
\end{equation}
and, by applying the first order Taylor expansion and using \eqref{eq:guess-rig}, there exists $c>0$ such that for every $R$ we have  $\vert r_N(R,t)\vert \le c/N^2$ for every $t$ and $N$ sufficiently large (how large may depend on $R$).
Moreover
\begin{equation}
Z'_{N, z(N)/\sqrt{N} + e^{it} R /N^2} \, \sim\, F'(z_0) \, ,
\end{equation}
uniformly in $t$. 
This means that for every  $R>2c$ 
\begin{equation}
\limsup_N \sup_t \left \vert  \frac{Z'_{N, z(N)/\sqrt{N} + e^{it} R /N^2}}{N^2 Z_{N, z(N)/\sqrt{N} + e^{it} R /N^2}}- \frac{\exp(-it)}{R} \right \vert \, \le \, \frac {3c}{R^2}\, .
\end{equation}
This means that, if we choose $R$ properly large, we can make the absolute value of the left-hand side in \eqref{eq:arg-principle} smaller than $1$ for $N$ sufficiently large.
Hence, for such values of $N$, \eqref{eq:arg-principle}  holds and the proof of Proposition~\ref{th:sharper0} is complete.
\end{proof}

\section{A reduced model for Griffiths singularities: the proof}
\label{sec:G}

This section just deals with the special framework \eqref{eq:Kspecial} and with $\ga=1/2$.
However, several equations are more readable if we write $\ga$ for $1/2$, therefore we will do so.
%  $\log$ in this section means $\Log$ (princiâl branch of the complex logarithm).

\medskip

\noindent
\emph{Proof of Theorem~\ref{th:griffiths}.}
Recall that $(h_{N, j})$ are the $N-1$ zeros of $Z_{N, h}$, ordered with non decreasing modulus and $\Im(h_{N, j})>0$ (respectively,
$\Im(h_{N, j})<0$) for $j$ odd (respectively, even). In analogy with \eqref{eq:PN2} we have
\begin{equation} 
\label{eq:PN2.1}
Z_{N,h}\, =\, K(1)^N \exp(h) \prod_{j=1}^{N-1} 
\left(e^h-e^{h_{N,j}} \right)
\, =\, Z_{N,0}e^h
 \prod_{j=1}^{N-1} \left(
 \frac{e^h-e^{h_{N,j}}}{1-e^{h_{N,j}}}
 \right)
\,,
\end{equation}
 where $Z_{N,0}= \bP(N \in \tau)$. 
Since
\begin{equation}
	\left( 1+\frac{e^h-1}{1-e^\eta} \right) \big/ \left( 1- \frac{h}{\eta} \right)
	=
	\frac{e^h - e^\eta}{h-\eta}
	\frac{\eta}{1-e^\eta}\, ,
\end{equation}
has no zeros and only removable singularities for $\Im (h) \in (-\pi, \pi)$ and  $\Im (\eta) \in (-\pi,\pi]$, for such $\eta$'s  we can  extend
\begin{equation}
	h \mapsto 
	\log 
	\left( 1+\frac{e^h-1}{1-e^\eta} \right) 
	-
	\log 
	\left( 1- \frac{h}{\eta} \right)\, ,
\end{equation}
to an analytic function on the strip $\Im (h) \in (-\pi, \pi)$. Note that this function is also bounded for $h$ and $\eta$ in compact subsets of $\bbC$.  
As a result, we can study the regularity of
\begin{equation}
\label{eq:ftilde}
	 f
	:
	h \mapsto 
	\sum_{n=n_0}^\infty 
	p^n
	\sum_{j=1}^{n-1}
	\log\left( 1 - \frac{h}{h_{n,j}} \right)\, ,
\end{equation}
with $n_0$ fixed, but arbitrary:  we can neglect the contribution for $n< n_0$  because, it is straightforward to see 
that there exists no solution to $Z_{n, h}=0$ for  $h\in  \bbR$: in fact,   $Z_{n, h}>0$ is even increasing in $h$. Therefore, 
for $n$ fixed the (complex) solutions to $Z_{n, h}=0$, say for $h$ bounded away from $-\infty$ and $\infty$, are bounded away from the real axis and the neglected terms yield a real analytic contribution. 

The fact that $f(\cdot)$ is real analytic away from the origin can be seen as consequence of 
Proposition~\ref{th:res0} that guarantees that for every $\gep>0$ there exists $n_0$ such that 
\begin{equation}
\inf_{n\ge n_0} \inftwo{j=1, \ldots, n-1:}{\vert \Re (h_{n,j} )\vert> \gep } \vert \Im  (h_{n,j} )\vert\, >\, 0\,. 
\end{equation}

The lack of real analyticity in the origin is more subtle, but let us first show that $f$ is $C^\infty$ also at the origin. 
The 
 $k$-th derivative of $f$ is
 \begin{equation}
 {f}^{(k)}(h)\, =\, -(k-1)! \sum_{n=n_0}^\infty 
	p^n
	\sum_{j=1}^{n-1} (h_{n,j}-h)^{-k}\, ,
 \end{equation}
 a priori at least for $h\neq 0$. To see  that $f$ is $C^k$, for every $k$, also in $0$ 
 it suffices to find an appropriate bound on the internal sum, the one over $j$, for $h$ is a neighborhood of the origin. For this we remark that, uniformly for $h$ such that  $\vert \Im(h)\vert \le 1/n^2$, we have
 \begin{equation}
 Z_{n,h} \, =\, \sum_{j=1}^n e^{j\Re(h)} \bP(\tau_j=n)e^{j\Re(h)}e^{j \Im(h)}\, =\,  Z_{n,\Re(h)} \left(1+O(1/n) \right)\, ,
 \end{equation}
 which implies that $Z_{n,h}$ has no zero in the strip $\vert \Im(h)\vert \le 1/n^2$ for $n\ge n_0$ and $n_0$ appropriately chosen. Therefore, for $h\in \bbR$, we have that   $\vert \sum_{j=1}^{n-1} (h_{n,j}-h)^{-k} \vert \le n^{2k+1}$ which  suffices to show that $f\in C^k$ for every $k$. 
 
 We are now to the lack of real analyticity for which we identify the sharp $k \to \infty$ behavior of
\begin{equation}
\label{eq:ftilde_taylor}
	\frac{1}{k!}
	{f}^{(k)}(0)
	\,=\,-
	\frac{1}{k}
	\sum_{n=n_0}^\infty 
	p^n
	\sum_{j=1}^{n-1}
	(h_{n,j})^{-k}\, .
\end{equation}
For this we start by setting
\begin{equation}
	\eta_{n,j}:=
	n^{\alpha} h_{n,j}
	\, ,
\end{equation}
so we can write 
\begin{equation}
\label{eq:ftilde_taylor2}
	\frac{1}{k!}
	{f}^{(k)}(0)
	\,=\, -
	\frac{1}{k}
	\sum_{n=n_0}^\infty 
	p^n
	n^{\alpha k}
	\sum_{j=1}^{n-1}
	( \eta_{n,j})^{-k}\, .
\end{equation}
By Proposition~\ref{th:sharper0} we know that 
\begin{equation}
\label{eq:fpropr1}
 \eta_{n,1}\, =\, z_0+ \frac{z_1}{\sqrt{n} }+  \frac{z_2}n + O\left( \frac 1{n^{3/2}} \right)
 \, =\, z_0 \exp \left( \frac{z_1}{z_0 \sqrt{n}}+ \left( \frac {z_2}{z_0} - \frac {z_1^2}{2 z_0^2}\right) \frac 1n +
 O\left( \frac 1 {n^{3/2}}\right) \right)
 \,,
\end{equation}
and that there  exist $n_0\in \bbN$ and  $q\in (0,1)$ such that for $n \ge n_0$
\begin{equation}
\label{eq:fpropr2}
\sup_{j= 3, \ldots, n-1}\frac
{
\left \vert \eta_{n,1}\right \vert
}
{
\left \vert \eta_{n,j}\right \vert}
\le  q\,.
\end{equation} 
From \eqref{eq:fpropr1} we directly have also that 
$1-\gep\le \vert \eta_{n,1}\vert/ \vert z_0\vert \le 1+ \gep$ for every $n\ge n_0$: choosing $\gep$ close to zero  amounts to choosing $n_0$ larger. 
It is therefore natural, in the limit $k \to \infty$, to single out the contribution due to $\eta_{N,1}$ and
 $\eta_{N,2}= \overline{\eta_{N,1}}$ so we write
\begin{equation}
\label{eq:ftilde_taylor3.1}
\begin{split}
\frac{1}{k!}{f}^{(k)}(0)
	\,&=\, -
	\frac{2}{k}\sum_{n=n_0}^\infty p^n n^{\alpha k}
	\vert \eta_{n,1}\vert ^{-k} \cos \left( k \arg (\eta_{n,1}) \right)
	- \frac{1}{k}
	\sum_{n=n_0}^\infty p^n n^{\alpha k}
	\sum_{j=3}^{n-1}
	( \eta_{n,j})^{-k}
	\\
&=\, T_k +E_k
	\, \, .
\end{split}	
\end{equation}
We bound $E_k$ by  using \eqref{eq:fpropr2},  %and recalling that the zeros are conjugate (unless they have imaginary part equal to $\pi$)
 so the 
terms in the sum over $j$ can be bounded by  $q/\vert z_0 (1-\gep) \vert  $ to the power $k$  which does not depend on $j$. Therefore
\begin{equation}
\left \vert E_k \right \vert \,\le\,  \frac{1 }{k} \left(\frac q{\vert z_0 \vert (1-\gep) } \right)^k
	\sum_{n=n_0}^\infty p^n n^{\alpha k+1}
	  \, .
\end{equation}
At this point it is useful to introduce the polylogarithm of parameter $s\in \bbR$:
\begin{equation}
\mathrm{Li}_s(z)\, =\, \sum_{n=1}^\infty \frac {z^n}{n^s}\, ,
\end{equation}
for $z$ in the open unit ball. For $x\in (0,1)$ we have (see App.~\ref{sec:polylog} for references and more details) %otherwise (13.1) of D.~Wood
\begin{equation}
\label{eq:Li-asympt}
\mathrm{Li}_s(x) \stackrel{s \to -\infty} \sim \Gamma(1-s) (-\log x)^{s-1}\, .
\end{equation}
This tells us that for $k \to \infty$
\begin{equation}
\begin{split}
E_k\, &=\, O \left( \frac 1 k\vert \log p \vert ^{-2-\ga k} \Gamma(\ga k +2) \left(\frac q{\vert z_0 \vert (1-\gep)} \right)^k
\right)
\\ 
&=\, O \left( \vert \log p \vert ^{-1-\ga k} \Gamma(\ga k +1) \left(\frac q{\vert z_0 \vert (1-\gep)} \right)^k
\right)\, .
\end{split}
\end{equation}
This is relevant because if we neglect the cosine modulation in $T_k$ we have
\begin{equation}
T^+_k\, :=\,
\frac{1}{k}
	\sum_{n=n_0}^\infty 
	p^n
	n^{\alpha k} \vert \eta_{n,1}\vert ^{-k}\, \ge \, - c^k+ \frac 1k \left( (1+ \gep) \vert z_0 \vert\right)^{-k} \mathrm{Li}_{-\ga k} (p)
\, ,
\end{equation}
where  the term $c^k$ takes care of the first $n_0$ terms in the sum of the polylogarithm. Since 
$\mathrm{Li}_{-\ga k} (p)
\sim   \Gamma (\ga k +1) \vert \log p \vert ^{-1-\ga k}$ for $k \to \infty$, by choosing $\gep$ adequately small we readily see  that $T^+_k$
 is much larger, in fact exponentially larger, than $E_k$: 
\begin{equation}
E_k\, =\, O \left( k q^k  (1+ 3\gep)^k T^+_k \right)\,.
\end{equation} 
\medskip 

Going on to estimating $T_k$ turns out to be somewhat technical, so we move some of the estimates to 
App.~\ref{sec:polylog}. We are going to see, as a  byproduct of App.~\ref{sec:polylog} that,  if we introduce  $\ell_k=\sqrt{k}\log k$ and 
$n_{p,k}:= \lfloor \ga k/ \vert \log p \vert\rfloor $, $T_k$ is asymptotically equivalent to  the truncated sum
\begin{equation}
\label{eq:forfnal}
-\frac{2}{k}\sum_{n:\, \vert n -n_{p,k} \vert \le \ell_k  } p^n n^{\alpha k}
	\vert \eta_{n,1}\vert ^{-k} \cos \left( k \arg (\eta_{n,1}) \right)\, .
\end{equation}	
This motivates the following lemma.

\medskip

\begin{lemma}
\label{th:2estimates} 
We write $n=n_{p, k} +j$. 
There exists real constants $a, b, c, d, A, B$ and $C$, whose explicit expressions are given in the proof, such that
for   with $ \vert j \vert \le \ell_k=\sqrt{k}\log k$ and for $k \to \infty$
\begin{equation}
\label{eq:2estimates-1} 
\cos \left( k \, \mathrm{arg} \left(  \eta_{n,1}
\right)\right)\, =\, 
\cos\left(  a k  +   \sqrt{k} + d  \frac j {\sqrt{k} }  +  c \right) 
+O\left( \frac {(\log k)^2} {\sqrt{k}}\right)
\, , 
\end{equation}
and 
\begin{equation}
\label{eq:2estimates-2} 
\vert \eta_{n,1} \vert ^{-k} \, =\, \vert z_0 \vert ^{-k} 
\exp \left ( A \sqrt{k} +C \frac{j}{\sqrt{k}} +B\right) \left( 1 +O\left( \frac{ \log k)^2}{\sqrt{k} }\right)\right) \, .
\end{equation}
\end{lemma}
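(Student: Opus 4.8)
The plan is to derive \eqref{eq:2estimates-1} and \eqref{eq:2estimates-2} by Taylor expanding $\log\eta_{n,1}$ at $n=\infty$, using \eqref{eq:fpropr1} (i.e.\ Proposition~\ref{th:sharper0} with the index $j$ replaced by $1$) as the only analytic input, then substituting $n=n_{p,k}+j$ and keeping track of which monomials in $\sqrt k$ and $j/\sqrt k$ survive, everything else being absorbed into $O((\log k)^2/\sqrt k)$.

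First I would take the principal logarithm in \eqref{eq:fpropr1} — legitimate since $\eta_{n,1}\to z_0$ stays in the open first quadrant — to get
\begin{equation}
\log\eta_{n,1}\,=\,\log z_0+\frac{z_1}{z_0}\,\frac1{\sqrt n}+\Big(\frac{z_2}{z_0}-\frac{z_1^2}{2z_0^2}\Big)\frac1n+O\Big(\frac1{n^{3/2}}\Big)\,,
\end{equation}
and separate real and imaginary parts, so that, with $c_2:=z_2/z_0-z_1^2/(2z_0^2)$ and recalling $z_1/z_0=z_0/2$,
\begin{equation}
\arg\eta_{n,1}=\arg z_0+\Im\Big(\frac{z_0}{2}\Big)\frac1{\sqrt n}+\Im(c_2)\frac1n+O(n^{-3/2})\,,\qquad \log|\eta_{n,1}|=\log|z_0|+\Re\Big(\frac{z_0}{2}\Big)\frac1{\sqrt n}+\Re(c_2)\frac1n+O(n^{-3/2})\,.
\end{equation}

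Next I would insert $n=n_{p,k}+j$ with $|j|\le\ell_k=\sqrt k\log k$. Writing $n_{p,k}=\beta k-\theta_k$ with $\beta:=\ga/|\log p|$ and $\theta_k=\{\beta k\}\in[0,1)$, and using $|j|=o(n_{p,k})$, the relevant expansions are
\begin{equation}
\frac k{\sqrt n}=\frac{\sqrt k}{\sqrt\beta}-\frac1{2\beta^{3/2}}\frac j{\sqrt k}+O\Big(\frac{(\log k)^2}{\sqrt k}\Big)\,,\qquad \frac kn=\frac1\beta+O\Big(\frac{\log k}{\sqrt k}\Big)\,,\qquad \frac k{n^{3/2}}=O\Big(\frac1{\sqrt k}\Big)\,,
\end{equation}
where in the first identity the error collects the quadratic-in-$j$ term $kj^2n_{p,k}^{-5/2}$ — which is $O((\log k)^2/\sqrt k)$ exactly because $|j|\le\sqrt k\log k$, and this is precisely why $\ell_k$ is chosen this way — together with the $O(k^{-3/2})$ coming from the floor $\theta_k$; note also that the $O(n^{-3/2})$ remainder inherited from \eqref{eq:fpropr1} contributes only $k\cdot O(n^{-3/2})=O(k^{-1/2})$, hence is negligible. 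Combining, $k\arg\eta_{n,1}=(\arg z_0)k+\Im(z_0/2)\big(\beta^{-1/2}\sqrt k-\tfrac12\beta^{-3/2}j/\sqrt k\big)+\Im(c_2)\beta^{-1}+O((\log k)^2/\sqrt k)$, which gives \eqref{eq:2estimates-1} after $\cos(x+\varepsilon)=\cos x+O(|\varepsilon|)$, with $a=\arg z_0$, $b=\tfrac12\Im(z_0)\beta^{-1/2}$, $d=-\tfrac14\Im(z_0)\beta^{-3/2}$, $c=\Im(c_2)\beta^{-1}$. Identically, $-k\log|\eta_{n,1}|=-(\log|z_0|)k-\Re(z_0/2)\big(\beta^{-1/2}\sqrt k-\tfrac12\beta^{-3/2}j/\sqrt k\big)-\Re(c_2)\beta^{-1}+O((\log k)^2/\sqrt k)$, so exponentiating and using $e^{O((\log k)^2/\sqrt k)}=1+O((\log k)^2/\sqrt k)$ yields \eqref{eq:2estimates-2} with $A=-\tfrac12\Re(z_0)\beta^{-1/2}$, $C=\tfrac14\Re(z_0)\beta^{-3/2}$, $B=-\Re(c_2)\beta^{-1}$.

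I do not expect a genuine obstacle here beyond careful bookkeeping; the three points that need attention are (i) that $\ell_k=\sqrt k\log k$ is tuned so that the quadratic-in-$j$ remainder lands exactly at order $(\log k)^2/\sqrt k$ and not larger, (ii) that after multiplication by $k$ the error $O(n^{-3/2})$ of \eqref{eq:fpropr1} and the contributions of the floor in $n_{p,k}$ are all $o((\log k)^2/\sqrt k)$, and (iii) that the implied constant in \eqref{eq:fpropr1} is absolute, so that the substitution $n=n_{p,k}+j$ is legitimate uniformly over $|j|\le\ell_k$ as $k\to\infty$.
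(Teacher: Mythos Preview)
Your proof is correct and follows essentially the same route as the paper: expand $\eta_{n,1}$ to order $1/n$ via Proposition~\ref{th:sharper0}, substitute $n=n_{p,k}+j$, expand $k/\sqrt n$ and $k/n$ keeping track of the $(\log k)^2/\sqrt k$ remainder. The only cosmetic difference is that you take the complex logarithm of \eqref{eq:fpropr1} and read off real and imaginary parts at once, whereas the paper expands $\arg(\eta_{n,1})$ and $\log|\eta_{n,1}|$ separately (producing the somewhat heavier formula \eqref{eq:prev52} for the $1/n$ coefficient); your constants agree with the paper's once one writes $\beta^{-1/2}=c_p$ and $b_1=\Im(z_1/z_0)$. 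One tiny slip: the floor $\theta_k$ contributes at order $O(k^{-1/2})$ to $k/\sqrt n$, not $O(k^{-3/2})$, but this is still well within the claimed $O((\log k)^2/\sqrt k)$ error, so nothing is affected.
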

\medskip 

\begin{proof}
By Taylor expansion we obtain for $n\to \infty$
\begin{multline}
\label{eq:prev52}
 \text{arg} \left(  \eta_{n,1}\right)\, =\, 
 \text{arg} \left( z_0\right) 
 + \frac 1{\sqrt{n}} \frac{\Im (z_1) \Re(z_0)-\Im(z_0) \Re(z_1)}{\vert z_0\vert^2}
 \\
+ \frac 1n \bigg(\frac{
-\Re(z_2) \Im(z_0)^3 +\Im(z_0)^2 \Im(z_1) \Re(z_1)+\Im(z_0)^2 \Im(z_2) \Re(z_0)-\Im(z_0) \Im(z_1)^2 \Re(z_0)}
{\vert z_0\vert^4} \ \ \ \
\\
 \ \ \ \ \ \ \ \ \ \ \ \ +\frac{
-\Im(z_0) \Re(z_0)^2 \Re(z_2)+\Im(z_0) \Re(z_0) \Re(z_1)^2-\Im(z_1) \Re(z_0)^2 \Re(z_1)+\Im(z_2) \Re(z_0)^3
}
{\vert z_0\vert^4}\bigg)
\\+ O \left( \frac 1 {n^{3/2}} \right)\, =:  \text{arg} \left( z_0\right) 
 + \frac 1{\sqrt{n}} b_1+ \frac 1n b_2 + O \left( \frac 1 {n^{3/2}} \right)\,.
\end{multline} 
Therefore with $n=n_{p, k} +j$, $\vert j \vert \le \ell_k$ and  for $k \to \infty$ we have
\begin{equation}
 \cos \left( k \, \text{arg} \left(  \eta_{n,1}\right)\right) \, =\, 
 \cos\left (k \,  \text{arg} \left( z_0\right) + b_1 \frac{k}{\sqrt{n}} +b_2 \frac k n +O\left(\frac 1{\sqrt{k}}\right) \right)\, , 
\end{equation}
where this equation defines $b_1$ and $b_2$ by comparison with \eqref{eq:prev52}.
If we set $c^2_p:= \vert \log p \vert/ \ga$ ($c_p>0$) we have
\begin{equation}
\frac k{\sqrt{n}}\, =\, c_p \sqrt{k} -  \frac {c_p^3}{2} \frac{j}{ \sqrt{k}}
+ O\left(\frac{(\log k )^2}{ \sqrt{k}}\right)\ \text{ and } \ \frac k n\, =\, c_p^2  + O \left( \frac {\log k}{\sqrt{k}} \right)\, ,
\end{equation}
 so
\begin{equation}
\begin{split}
\cos \left( k \, \textrm{arg} \left(  \eta_{n,1}
\right)\right)\, &=\, 
\cos\left(  \mathrm{arg}(z_0)k + b_1c_p  \sqrt{k} - \frac 12 b_1 c_p^3   \frac j {\sqrt{k} }  + b_2 c_p^2 \right) 
+O \left( \frac {(\log k)^2}{\sqrt{k}} \right)
\\
&=:\, 
\cos\left(  a k  + b  \sqrt{k} + d  \frac j {\sqrt{k} }  + c \right) 
+O \left( \frac {(\log k)^2}{\sqrt{k}} \right)
\, , 
\end{split}
\end{equation}
and the last line is the definition of the constants $a,b,c$ and $d$. This completes the verification of \eqref{eq:2estimates-1}. 

\smallskip

A similar Taylor expansion computation yields \eqref{eq:2estimates-2}. Here we give just the constants:
\begin{equation}
\label{eq:ABC}
A:=
-\Re \left( \frac {z_1} {z_0} \right)\sqrt{ \frac{\vert \log p \vert}{ \ga}}\, , \ 
 \ B:=- \left( \frac{z_2}{z_0} - \frac {z_1^2}{2 z_0^2} \right) \frac{\vert \log p \vert}{\ga}\, ,
\ 
C :=\,  \frac 12 \Re \left( \frac {z_1} {z_0} \right) \left( \frac{\vert \log p \vert}{\ga} \right)^{3/2}
\,.
\end{equation} 
\end{proof}

\medskip

It is now a matter of applying the results of Appendix~\ref{sec:polylog}, notably \eqref{eq:leading}
 and \eqref{eq:leading3},  to see that ($c_0$ is a positive constant)
\begin{multline}
-\frac{2}{k}\sum_{n\ge n_0} p^n n^{\alpha k}
	\vert \eta_{n,1}\vert ^{-k} \cos \left( k \arg (\eta_{n,1}) \right)\, =
	\\
-\frac{2}{k}\sum_{n:\, \vert n -n_{p,k} \vert \le \ell_k  } p^n n^{\alpha k}
	\vert \eta_{n,1}\vert ^{-k} \cos \left( k \arg (\eta_{n,1}) \right)
	\left (1+ O \left( e^{-c_0 (\log k)^2 }\right) \right)
	\\
	=
	\,  
-\frac{2e^{ A \sqrt{k}+B} }{k \vert z_0\vert^k} \times 
\\
\sum_{n:\, \vert n -n_{p,k} \vert \le \ell_k  } p^n n^{\alpha k}
e^{C \frac{n-n_{p,k}}{\sqrt{k}}} 
\left( \cos\left(  a k  + b  \sqrt{k} + d  \frac {(n -n_{p,k})} {\sqrt{k} }  + c \right) 
+O \left( \frac {(\log k)^2}{\sqrt{k}} \right)\right)
	\, .
\end{multline}	
We  now use
\begin{multline}
 \cos\left(   a k  +   b  \sqrt{k} -   \frac {   dj } {\sqrt{k} }  +   c \right) \, =\\ 
  \cos\left(   a k  +   b  \sqrt{k} +   c \right) \cos\left(   \frac {   dj } {\sqrt{k} }  \right)+
   \sin\left(   a k  +   b  \sqrt{k} +   c \right) \sin\left(   \frac {   dj } {\sqrt{k} }  \right)
   \, ,
\end{multline}
and we are  therefore left with estimating 
\begin{equation}
\label{eq:hes67}
-\frac{2e^{ A \sqrt{k}+B} }{k \vert z_0\vert^k} \cos\left(  a k  + b  \sqrt{k}  + c \right) 
\sum_{n:\, \vert n -n_{p,k} \vert \le \ell_k  } p^n n^{\alpha k}
e^{C \frac{n-n_{p,k}}{\sqrt{k}}} 
 \cos\left(  d  \frac {(n -n_{p,k})} {\sqrt{k} }  \right) \,,
\end{equation}
plus the analogous expression with $\cos(\cdot)$ replaced by $\sin(\cdot)$.
For this we apply  \eqref{eq:leading}, \eqref{eq:leading3} and \eqref{eq:forleading3}, with $\gb=  \ga k$ (we remark also the very mild effect  due to using  $\ell_k$ instead go  $\ell_{\ga k}$). The net result is that
the expression in \eqref{eq:hes67} is equal, up to a multiplicative error of $ 1 + O( \log k /\sqrt{k} )$, to
\begin{equation}
\label{eq:hes67-1}
-\frac{2e^{ A \sqrt{k}+B} }{k \vert z_0\vert^k} \cos\left(  a k  + b  \sqrt{k}  + c \right) 
\frac{\Gamma(1+ k \ga)} {\vert \log p\vert ^{1+ \ga k}} 
\exp \left( \frac{ \left(C^2-d^2\right) }{2 ( \log p )^2} \right) \cos\left( \frac{ Cd}{( \log p )^2} \right)
  \,,
\end{equation}
and for the sine case we obtain exactly the same expression, with the $\cos(\cdot)$ replaced by $\sin(\cdot)$ in the two occurrences. We therefore conclude that
\begin{multline}
\label{eq:almthr}
-\frac{2}{k}\sum_{n\ge n_0} p^n n^{\alpha k}
	\vert \eta_{n,1}\vert ^{-k} \cos \left( k \arg (\eta_{n,1}) \right)\, =
	\\
-\frac{2e^{ A \sqrt{k}+B} }{k \vert z_0\vert^k} \cos\left(  a k  + b  \sqrt{k}  + c+ \frac{ Cd}{( \log p )^2}  \right) 
\frac{\Gamma(1+ k \ga)} {\vert \log p\vert ^{1+ \ga k}} 
\exp \left( \frac{ \left(C^2-d^2\right) }{2 ( \log p )^2} \right) \left(1 + r_k \right)
  \,,	
\end{multline} 
where $r_k= O( \log k /\sqrt{k} )$.

\medskip

\medskip

\begin{lemma}
\label{th:uniform} 
For every $a, c \in \bbR$
 and for  $b\neq 0$ we have that 
\begin{equation}
\lim_{n\to \infty} \frac 1n \sum_{k=1}^n \gd_{(a k + b \sqrt{k} +c) \mathrm{mod}(2\pi)} \, =\, 
\gl_{2\pi}\, ,
\end{equation}
where $\gl_{2\pi}$ is the uniform probability on the circle $ \bbR/ (2 \pi \bbZ)$ and the convergence is the usual convergence in distribution. 
\end{lemma}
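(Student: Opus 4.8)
The plan is to invoke Weyl's equidistribution criterion: the sequence $\big(a k + b\sqrt k + c\big)_{k\ge 1}$ equidistributes modulo $2\pi$ (that is, its empirical measures converge to $\gl_{2\pi}$ in distribution) if and only if, for every integer $m\neq 0$,
\[
\frac 1n \sum_{k=1}^n e^{i m (a k + b\sqrt k + c)} \xrightarrow[n\to\infty]{} 0\, .
\]
Since $e^{imc}$ is a unimodular constant it can be pulled out, so the whole statement reduces to proving that, for each fixed integer $m\neq 0$,
\[
S_n \, :=\, \sum_{k=1}^n e^{i m a k}\, e^{i m b \sqrt k}\, =\, o(n)\, ;
\]
in fact I expect to obtain the stronger bound $S_n = O(\sqrt n)$. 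For the fixed $m$ under consideration I would split into two cases according to whether $e^{ima}=1$ or not.

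If $e^{ima}\neq 1$, the partial sums $A_K:=\sum_{k=1}^K e^{imak}$ of the geometric series are bounded by $M:=2/\vert 1-e^{ima}\vert$ uniformly in $K$. An Abel summation then gives
\[
S_n\, =\, A_n\, e^{imb\sqrt n} - \sum_{k=1}^{n-1} A_k\big(e^{imb\sqrt{k+1}} - e^{imb\sqrt k}\big)\, ,
\]
and, using $\big\vert e^{imb\sqrt{k+1}} - e^{imb\sqrt k}\big\vert \le \vert mb\vert\,(\sqrt{k+1}-\sqrt k) = O(1/\sqrt k)$, one gets $\vert S_n\vert \le M\big(1 + \sum_{k=1}^{n-1} O(1/\sqrt k)\big) = O(\sqrt n)$. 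Note that the hypothesis $b\neq 0$ has not been used here; it is essential only in the complementary case.

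If instead $e^{ima}=1$ (this happens exactly when $a$ is a rational multiple of $2\pi$, including $a=0$, and for instance occurs for the even $m$'s when $a=\pi$), then $S_n = \sum_{k=1}^n e^{imb\sqrt k}$, and here $b\neq 0$ cannot be dropped. I would compare this sum with an integral: setting $\phi(t):=e^{imb\sqrt t}$, the elementary bound $\big\vert S_n - \int_0^n \phi(t)\dd t\big\vert \le \int_0^n \vert\phi'(t)\vert\dd t$ — obtained by writing $\phi(k)-\phi(t)=\int_t^k\phi'$ and summing over $k$ — gives, since $\phi'(t)=\tfrac{imb}{2\sqrt t}\phi(t)$ is integrable at $0$, an error of size $\vert mb\vert\sqrt n$; and the integral itself is evaluated by the substitution $u=\sqrt t$ followed by one integration by parts, $\int_0^n e^{imb\sqrt t}\dd t = \int_0^{\sqrt n} 2u\, e^{imbu}\dd u = O(\sqrt n)$. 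Hence $S_n = O(\sqrt n)$ in this case too, and collecting both cases $\tfrac 1n S_n\to 0$, which by Weyl's criterion is precisely the asserted weak convergence. I do not anticipate a genuine obstacle: this is the classical fact that $a k + b\sqrt k$ is equidistributed modulo $1$ whenever $b\neq 0$ (it could alternatively be deduced from van der Corput's difference theorem). The only points needing a little care are the sum-versus-integral comparison in the degenerate case and the convergence at $t=0$ of the integrals of $\vert\phi'\vert$, both of which are routine.
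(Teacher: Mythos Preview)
Your proof is correct. It follows the classical Weyl route and yields the quantitative bound $S_n=O(\sqrt n)$ in each case; the sum--integral comparison and the Abel summation are both carried out cleanly.

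The paper's own proof is different in spirit: it does not redo the exponential-sum estimate but simply invokes two theorems from Kuipers--Niederreiter, after splitting on whether $a/(2\pi)$ is rational (reduce to $a=0$ by passing to arithmetic progressions and cite one result) or irrational (cite another). Your split is ``local'' in $m$---for each fixed $m$ you ask whether $e^{ima}=1$---whereas the paper's split is ``global'' in $a$. Your argument is more self-contained and elementary, and in particular gives an explicit rate, while the paper's is shorter but black-boxed. The van der Corput difference theorem you mention would be yet a third route (difference the sequence to kill the linear part, then use that $b(\sqrt{k+h}-\sqrt k)$ equidistributes). All three land in the same place.
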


\smallskip

\begin{proof}
This is fully based on \cite[Chapter~2]{cf:KN}.
By rotation invariance,  
we can and do assume  that $c=0$. 
The case $a=0$  follows directly from \cite[Theorem~2.5]{cf:KN} while the case of $a/(2\pi)=p/q$ rational can be reduced to the case $a=0$ by separating the sums into $q$ terms. The case $a/(2\pi)$ irrational instead requires a different approach: this is treated by \cite[Theorem~3.3]{cf:KN}. We remark also that if $a/(2\pi)$ irrational the result holds also if $b=0$.  
\end{proof} 

\medskip

Lemma~\ref{th:uniform} can now be used in conjunction with 
\eqref{eq:ftilde_taylor3.1}, \eqref{eq:forfnal} and 
\eqref{eq:almthr}: it guarantees that the absolute value of the  oscillating term  $\cos\left(  a k  + b  \sqrt{k}  + c+ \frac{ Cd}{( \log p )^2}  \right)$ is bounded away from $0$ if $k$ stays out of a set of a density that can be made arbitrarily small, and in this case \eqref{eq:almthr} really gives the leading asymptotic behavior. One can then argue by contradiction to ensure that it suffices that $k$ stays out of a suitably chosen zero density set, and 
the leading asymptotic behavior is still given by \eqref{eq:almthr}. This completes the proof of Theorem~\ref{th:griffiths}.
\qed

\smallskip

\begin{rem}
\label{rem:constants}
Here is a guide to reconstruct the constants in Theorem~\ref{th:griffiths}. First of all $z_0= 1.22516\ldots + i\,  2.54713+ \ldots$ so
\begin{equation}
\mathtt{a}\,=\, \arg\left( z_0 \right)\,=\, 1.12247\ldots  
\end{equation}
and (recall that $z_1=z_0^2/2)$
\begin{equation}
\label{eq:mathttb}
\mathtt{b} \, =\, \sqrt{2 \vert \log p \vert }\,  b_1\ \text{ with } \ b_1=\frac{\Im (z_1) \Re(z_0)-\Im(z_0) \Re(z_1)}{\vert z_0\vert^2}= 1.27356\ldots,
\end{equation}
with the important fact that $\mathtt{b}\neq 0$ (cf. Lemma~\ref{th:uniform}).
The precise value of the other constants is less crucial: we have 
\begin{equation}
{\mathtt c}\, =\, 2 \vert \log p \vert b_2 - \  C b_1 \sqrt{\frac{2}{\vert \log p\vert}}\, ,
\end{equation} 
with $C$ in \eqref{eq:ABC} and $b_2$ in \eqref{eq:prev52} (like $b_1$, which however is also in \eqref{eq:mathttb}).
Moreover
\begin{equation}
C_1\, =\, - \frac {2} {\vert \log p \vert } \exp\left( B + \frac{\left(C^2-2 b_1^2 \vert \log p \vert^3\right)}{2 (\log (p))^2}
\right)\ \ \text{ and } \ \ C_2\,=\, \frac1{ \vert z_0 \vert \sqrt{ \vert \log p \vert }}\, , 
\end{equation}
and $B$ is also in \eqref{eq:ABC}, as well as $A$. 
\end{rem}

\appendix

\section{Probability estimates}
\label{sec:P-est}

\begin{proof}[Proof of Proposition~\ref{th:usingLLT}]
 We start with the leading asymptotic behavior of
 \begin{equation}
 Z_{n,\zeta/n^\ga}\, =\, \sum_{j=1}^n \exp\left(  j \zeta  /n^\ga\right) \bP\left( \tau_j=n \right)\, .
 \end{equation}
 and we are looking for a result that holds uniformly for  $\zeta$ is chosen in a compact set: we will just say ``uniformly in $\zeta$".
 For a positive (large) constant $L$
 we split the sum according to whether 
 $j\le n^\ga /L$, $j \in (n^\ga /L, L n^\ga )$ and $j \ge L n^\ga$. The intermediate segment, 
 $j \in (n^\ga /L, L n^\ga )$, can be treated by applying  Theorem~\ref{th:LLT}
 obtaining the asymptotic behavior claimed in \eqref{eq:usingLLT} with the integral spanning from $1/L$ to $L$, instead of from $0$ to $\infty$. It is therefore sufficient to show that the remaining two terms 
 are $\gep_L O(n^{1-\ga})$, with $\gep_L$ a positive constant that vanishes as $L\to \infty$.
 
 For the case $j\le n^\ga /L$ we are going to use that  for $n \to \infty$ and uniformly in $j$ such that $j/n^\ga \longrightarrow 0$ we have that $\bP(\tau_j=n) \sim j \bP(\tau_1=n)$ 
 \cite[Th.~A]{cf:Doney} so that for $n$ sufficiently large and uniformly in $\zeta$
 \begin{equation}
 \left \vert 
 \sum_{j\le n^\ga/L} \exp\left(  j \zeta  /n^\ga\right) \bP\left( \tau_j=n \right) \right\vert\, \le \, 2 
 \bP(\tau_1=n) \sum_{j\le n^\ga/L} j \, \le\, \frac{2 \mathtt{c}_K n^{1-\ga}}{L^2}\,, 
 \end{equation}

 For  $j\ge  L n^\ga $ we use instead  \cite[Lemma 4]{cf:Doney} that directly yields that for an appropriate choice of $C>0$, not depending on $L$, 
 we have that for $j\in [Ln ^\ga, n/L)$ 
 and $n$ sufficiently large
 \begin{equation}
 \bP(\tau_j=n) \, \le \, C \left( j/n^\ga \right)^{1/(2(1-\ga))} \exp \left( -\frac 1 C \left( j/n^\ga \right)^{1/(1-\ga)}\right)\, ,
 \end{equation}
 and that there exists $C_L>0$ such that 
 \begin{equation}
 \bP(\tau_j=n) \, \le \, \exp\left( - j/C_L\right)\, ,
 \end{equation}
 for $j \ge n/L$.
Therefore, with $b$ an upper bound for $\vert \Re( z) \vert$, and using again Riemann sum approximation we have that for $n$ sufficiently large 
\begin{multline}
 \left \vert 
 \sum_{j\ge  L n^\ga} \exp\left(  j \zeta  /n^\ga\right) \bP\left( \tau_j=n \right) \right\vert\, \le 
 \\
 \frac{2C}{ n^{1-\ga}} \int_L^\infty 
 y^{1/(2(1-\ga))} \exp \left( by - \frac 1 Cy^{1/(1-\ga)}\right) \dd y
 + \sum_{j\ge  n/L}  \exp\left(b\frac{j}{ n^\ga} - \frac j{C_L}\right)\, ,
  \end{multline} 
and we see that the first term in the right-hand side is $O(1/n^{1-\ga})$ times a term that can be made arbitrarily small by choosing $L$ large. The second term instead is $O(\exp(-n /(2C_L))$ and it is therefore much smaller. This completes the proof of \eqref{eq:usingLLT}.

For the  proof of proof of \eqref{eq:usingLLT'} we have to apply the very same arguments to
\begin{equation}
 Z'_{n,\zeta/n^\ga}\, =\, \sum_{j=1}^n \exp\left(  j \zeta  /n^\ga\right) j\, \bP\left( \tau_j=n \right)\, .
 \end{equation}
We skip the straightforward details.
\end{proof}

\section{Asymptotic behavior of modified polylogarithms}
\label{sec:polylog}
For  the standard polylogarithm we have 
\begin{equation}
\label{eq:leading}
\sum_n p^n n^\gb \stackrel{\gb \to \infty} \sim  T_\gb \, :=\, \frac{ \Gamma(1+\gb)}{ \vert\log p\vert^{1+\gb}}\,=\,
\frac{\exp(\gb \log \gb - \gb)}{ \vert\log p\vert^{1+\gb}} \left(\sqrt{2 \pi \gb} + O(1/ \gb) \right)
\, ,
\end{equation}
where $p \in (0,1)$. 
The first step in \eqref{eq:leading}   follows from \cite[(25.12.12)]{cf:DLMF} and the last step is Stirling formula with first order reminder. 

We now aim at recovering \eqref{eq:leading} by a direct saddle point analysis: the result will then be easily generalized to the case that interests us. For this
we start by introducing $\ell_\gb:= \sqrt{\gb } \log \gb$ and $n_\gb= \gb / \vert \log p \vert$.
We start by observing that the ratio 
\begin{equation}
\label{eq:ratioC1}
\sum_{n\notin [n_\gb- \ell_\gb,n_\gb+ \ell_\gb]}  p^n n^\gb \Big/ \left(\int_0^{n_\gb- \ell_\gb} p^x x^\gb \dd x
+ \int_{n_\gb+ \ell_\gb}^\infty 
p^x x^\gb \dd x
\right)\, ,
\end{equation}
is bounded away from $0$ and $\infty$ (in fact, it tends to one as $\gb \to \infty$, but \eqref{eq:ratioC1} is only used for tail bounds, which do not need to be sharp). 
With $x=n_\gb+y$ we have
\begin{equation}
\label{eq:expand1}
p^x x^\gb \,=\, \frac{\exp( \gb \log \gb - \gb)}{ \vert \log p \vert^{\gb}} \exp\left( -\frac {y^2 \vert \log p \vert ^2}{2 \gb} \right)
\left( 1 + O \left( \frac {y^3}{\gb^2} \right)\right)\,.
\end{equation}
The first application of this estimate is to show that the denominator, hence also the numerator, of \eqref{eq:ratioC1} is much smaller than $T_\gb$: more precisely,  for every $c>0$ it is  $O(T_\gb / \gb^{-c})$.
For the first integral in the denominator of \eqref{eq:ratioC1}  we use that the integrand is increasing in the interval of integration and \eqref{eq:expand1} with $y=-\ell_\gb$:
\begin{multline}
\label{eq:gid84}
\int_0^{n_\gb- \ell_\gb} p^x x^\gb \dd x \, \le \, 
p^{ n_\gb- \ell_\gb}  (n_\gb- \ell_\gb)^\gb 
\int_0^{n_\gb- \ell_\gb} \dd x \,
 \le 
 \\
 \frac{2 {\gb}   \exp( \gb \log \gb - \gb)}{ \vert \log p \vert^{1+\gb}} \exp\left( -\frac {(\log \gb )^2 \vert \log p \vert^2 }{2} \right)\, \le \, T_\gb \exp\left( -\frac {(\log \gb )^2 \vert \log p \vert^2 }{4} \right) \, .
\end{multline}
For the second integral
we use that the integrand is this time decreasing: since the interval of integration is unbounded we consider 
separately the integral from $n_\gb+ \ell_\gb$ to $\gb^2$ and from $\gb^2$ to $\infty$. 
We have 
\begin{equation}
 \int_{n_\gb+ \ell_\gb}^{\gb^2}
p^x x^\gb \dd x\, 
\le \, \gb ^2 p^{n_\gb+ \ell_\gb} (n_\gb+ \ell_\gb)^\gb 
\, \le\, 
\frac{\gb ^2 \exp( \gb \log \gb - \gb)}{ \vert \log p \vert^{\gb}} \exp\left( -\frac {(\log \gb )^2 \vert \log p \vert ^2}{3} \right)\,,
\end{equation} 
and precisely the final bound in \eqref{eq:gid84} is recovered. It is then straightforward to see that the integral 
from $\gb^2$ to $\infty$ vanishes as $\gb\to \infty$, yielding thus  a negligible contribution. 

We can then focus on 
\begin{equation}
\sum_{n\in [n_\gb- \ell_\gb,n_\gb+ \ell_\gb]}  p^n n^\gb \, =\, \int_{n_\gb- \ell_\gb}^{n_\gb- \ell_\gb} p^x x^\gb \dd x 
+E_\gb\, ,
\end{equation}
where $E_\gb$ can be bounded (first order Euler-Maclaurin formula) in terms of the value of the integrand at the two boundary points, this gives a contribution $O(T_\gb \exp(-c (\log \gb)^2))$ for some $c>0$ like in the previous estimates, plus the integral of the (absolute value) of the first derivative of the integrand.
Since $ \vert \partial_x  (p^x x^\gb) \vert = p^x x^\gb O(\ell_\gb / \gb)$ we readily find that $\vert E_\gb\vert=
O( T_\gb \log \gb / \sqrt{\gb})$. We can then work with the integral
and, by \eqref{eq:expand1}, we see that 
\begin{equation}
\begin{split}
 \int_{n_\gb- \ell_\gb}^{n_\gb+ \ell_\gb} p^x x^\gb \dd x \, &=\, 
 \frac{\exp( \gb \log \gb - \gb)}{ \vert \log p \vert^{\gb}}  \int_{- \ell_\gb}^{ \ell_\gb}\exp\left( -\frac {y^2 \vert \log p \vert ^2}{2 \gb} \right) \dd y
 \\ &= \frac{\exp( \gb \log \gb - \gb)}{ \vert \log p \vert^{\gb}}  \left(\int_{- \infty}^{ \infty}\exp\left( -\frac {y^2 \vert \log p \vert ^2}{2 \gb} \right) \dd y + o(1) \right)
 \\&=\,  \frac{\exp( \gb \log \gb - \gb)}{ \vert \log p \vert^{1+\gb}} \left( \sqrt{2 \pi \gb} +o(1) \right) \, ,
 \end{split}
 \end{equation}
 where $o(1)$ is actually $O\left(\exp\left( - {(\log \gb)^2 \vert \log p \vert ^2}/{3}\right)\right)$.
 Therefore we have recovered  and strengthened \eqref{eq:leading}: for $\gb \to \infty$
 \begin{equation}
\label{eq:leading2}
\sum_n p^n n^\gb \, =\, \sum_{n: \, \vert n -n_\gb\vert \le \ell_\gb}
 p^n n^\gb \left( 1+ O \left(\exp\left( -c (\log \gb)^2\right)\right)\right) \, =\,    T_\gb \left( 1+ O \left( \frac{\log \gb}{\sqrt{\gb}}\right)\right)\,,
\end{equation}   
with $c= (\log p)^2/4$. 
We have developed  in detail this procedure because of the control on the truncation error and  
because the steps generalize in a straightforward way to the following result: for 
$H(x):= \exp( Cx) h(x)$, 
$ C\in \bbR$ and $h$ a  bounded function with bounded first derivative, we have 
\begin{equation}
\label{eq:leading3}
\begin{split}
\sum_n p^n n^\gb H\left(  \frac{n-n_\gb}{\sqrt{\gb}}\right)  \, &=\,
\sum_{n: \, \vert n -n_\gb\vert \le \ell_\gb} p^n n^\gb H\left( \frac {n-n_\gb}{ \sqrt{\gb}}\right) \left( 1+ O \left(e^{ -c (\log \gb)^2}\right)\right) 
\\
 &=\, T_\gb  \left(\bE\left[ 
H \left( \frac Z{ \vert \log p \vert}\right)\right]
+ O\left(
    \frac{\log \gb}{\sqrt{\gb}}\right) \right)\,,
    \end{split}
\end{equation}   
where $Z$ is a standard Gaussian random variable.
The steps in the proof of \eqref{eq:leading3} are identical to those we performed for \eqref{eq:leading2} because the modulating function we have introduced  changes  the bounds only by constants (depending on $\vert  C\vert$, $\Vert h\Vert_\infty$ and $\Vert h'\Vert_\infty$). 

We need this result to $h(\cdot)=\cos(d\,\cdot)$ and to $\sin(d\, \cdot)$, with $d\in \bbR$: with these two special choices of $h(\cdot)$  we have
\begin{equation}
\label{eq:forleading3}
\bE\left[ 
\exp\left( C\frac Z{\vert \log p \vert}\right) h \left( \frac Z{ \vert \log p \vert}\right)\right]\, =\, 
\exp \left( \frac{ C^2-d^2}{2 ( \log p )^2} \right) h\left( \frac{ Cd}{( \log p )^2} \right)\,.
\end{equation}

\section{Monotonicity of the critical curve}
\label{sec:monotone}

Recall $f_1(\cdot)$ from \eqref{eq:f1} and $f_2(\cdot)$ from \eqref{eq:f2}. Recall moreover that $a=1-\ga \in (0,1)$.
One  directly checks that $f_1(\pi+\theta)= f_1(\pi-\theta)$ and 
$\tan(f_2(\pi+\theta))= -\tan(f_2(\pi-\theta))$ for $\theta \in [0, \pi]$. 
This allows to focus on $\theta\in [0, \pi]$ and, by continuity, it suffices to 
show that both $f'_1(\theta)$ and  $f'_2(\theta)$ are positive  for $\theta\in (0, \pi)$. 

\medskip

\begin{lemma}
\label{th:appC}
Both $f'_1(\theta)>0$ and $f'_2(\theta)>0$ hold for every $\theta\in (0, \pi)$ and every $\ga \in (0,1)$.
\end{lemma}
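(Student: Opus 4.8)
The plan is to sidestep the unwieldy explicit derivatives of \eqref{eq:f1}--\eqref{eq:f2} by repackaging them: for $\theta\in(0,\pi)$, combining \eqref{eq:f1} and \eqref{eq:f2} one has $f_1(\theta)+if_2(\theta)=-\Log\eta(\theta)$ with $\eta(\theta):=1-(1-e^{-i\theta})^{a}$ (here $a=1-\ga\in(0,1)$ as defined before \eqref{eq:f1}), i.e.\ $f_1=-\log|\eta|$ and $f_2$ is the continuous determination of $-\arg\eta$ on $(0,\pi)$. Write $1-e^{-i\theta}=re^{i\phi}$ with $r:=2\sin(\theta/2)=2\cos\phi>0$ and $\phi:=\tfrac{\pi-\theta}2\in(0,\tfrac\pi2)$; then $(1-e^{-i\theta})^{a}=r^{a}e^{ia\phi}$ with $a\phi\in(0,\tfrac\pi2)$, so $\Im\eta=-r^{a}\sin(a\phi)<0$. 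In particular $\eta$ avoids $(-\infty,0]$, hence $\Log\eta$, and therefore $f_1$ and $f_2$, are real-analytic on $(0,\pi)$.

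First I would differentiate. Since $\frac{d}{d\theta}(1-e^{-i\theta})^{a}=a(1-e^{-i\theta})^{a-1}ie^{-i\theta}$ (valid because $\Re(1-e^{-i\theta})>0$ for $\theta\in(0,2\pi)$), $w':=f_1'+if_2'=-\eta'/\eta=\dfrac{a\,ie^{-i\theta}(1-e^{-i\theta})^{a-1}}{1-(1-e^{-i\theta})^{a}}$. Using $e^{-i\theta}=-e^{2i\phi}$ the numerator becomes $a\,r^{a-1}e^{i((a+1)\phi-\pi/2)}$, and multiplying numerator and denominator by $\overline{1-r^{a}e^{ia\phi}}$ gives
\[
w'(\theta)=\frac{a\,r^{a-1}}{|1-r^{a}e^{ia\phi}|^{2}}\Big(\big[\sin((a+1)\phi)-r^{a}\sin\phi\big]-i\big[\cos((a+1)\phi)-r^{a}\cos\phi\big]\Big).
\]
The prefactor is strictly positive (the denominator cannot vanish: $r^{a}e^{ia\phi}=1$ would force $\phi=0$, hence $r=2\neq1$), so the lemma reduces to proving, for all $\phi\in(0,\tfrac\pi2)$ and $a\in(0,1)$,
\[
\text{(i)}\quad\sin((a+1)\phi)>(2\cos\phi)^{a}\sin\phi,\qquad\qquad
\text{(ii)}\quad(2\cos\phi)^{a}\cos\phi>\cos((a+1)\phi).
\]

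Both inequalities I would get from a concavity argument in the exponent. Fix $\phi$, put $\gb:=a+1\in(1,2)$, and set $\psi(\gb):=\log\sin(\gb\phi)-\log\sin\phi-(\gb-1)\log(2\cos\phi)$ and $\chi(\gb):=\log\cos(\gb\phi)-\log\cos\phi-(\gb-1)\log(2\cos\phi)$; one computes $\psi''(\gb)=-\phi^{2}/\sin^{2}(\gb\phi)<0$ and $\chi''(\gb)=-\phi^{2}/\cos^{2}(\gb\phi)<0$, so both are strictly concave wherever defined. For (i): on $[1,2]$ (where $\gb\phi\in(0,2\phi)\subset(0,\pi)$, so everything is well defined) $\psi(1)=0$ and $\psi(2)=\log\frac{\sin2\phi}{2\sin\phi\cos\phi}=0$; a strictly concave function vanishing at both endpoints of $[1,2]$ is $>0$ on $(1,2)$, hence $\psi(a+1)>0$, which is (i). For (ii): if $(a+1)\phi\geq\pi/2$ then $\cos((a+1)\phi)\leq0<(2\cos\phi)^{a}\cos\phi$ and we are done; otherwise $\chi$ is well defined on $[1,a+1]$ (since $\gb\phi\leq(a+1)\phi<\pi/2$), $\chi(1)=0$, and $\chi'(1)=-\phi\tan\phi-\log(2\cos\phi)$. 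I would show $\chi'(1)<0$ by noting that $G(\phi):=\phi\tan\phi+\log(2\cos\phi)$ satisfies $G(0)=\log2>0$ and $G'(\phi)=\phi\sec^{2}\phi>0$, so $G>0$ on $[0,\tfrac\pi2)$; then strict concavity gives $\chi(\gb)\leq\chi'(1)(\gb-1)<0$ for $\gb\in(1,a+1]$, and in particular $\chi(a+1)<0$, which is (ii).

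The routine parts are the trigonometric simplification producing the displayed formula for $w'$ and the domain checks for $\psi$ and $\chi$. The one genuinely delicate point is (ii): unlike $\psi$, the auxiliary function $\chi$ vanishes only at the left endpoint $\gb=1$, so concavity alone does not suffice — one additionally needs the sign input $\chi'(1)<0$, together with a separate (trivial) treatment of the regime $(a+1)\phi\geq\pi/2$, where $\chi$ is not even defined on all of $[1,a+1]$ and the inequality holds merely because its right-hand side is $\leq0$.
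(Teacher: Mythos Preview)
Your proof is correct. The route is essentially the paper's: both rewrite the curve as $-\Log\eta$ with $\eta=1-(1-e^{-i\theta})^{a}$, differentiate, and reduce $f_1'>0$ and $f_2'>0$ to the same pair of trigonometric inequalities in the variable $\phi=(\pi-\theta)/2$; and for (ii)/$f_2'$ both arguments ultimately hinge on the same one-variable check $\phi\tan\phi+\log(2\cos\phi)>0$ on $(0,\tfrac\pi2)$. The one substantive difference is in (i)/$f_1'$: the paper proceeds by a chain of substitutions (setting $b=1-a$, then $\psi=(2-b)\gp$) ending with the elementary bound $\sin\psi>c\sin(\psi/c)$ for $c\in(0,1)$, whereas you obtain (i) in one stroke from the strict concavity of $\beta\mapsto\log\sin(\beta\phi)-(\beta-1)\log(2\cos\phi)$ on $[1,2]$ together with its vanishing at both endpoints. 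Your complex-logarithm packaging via $w'=-\eta'/\eta$ also bypasses the explicit differentiation of \eqref{eq:f1}--\eqref{eq:f2} that occupies most of the paper's proof.
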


\medskip

\begin{proof}
We start by analyzing $f_2(\cdot)$.
If we differentiate the argument of the arctangent with respect to $\theta$ we find
\begin{equation}
\frac{2^{a-1} a \sin ^{a-1}\left(\frac{\theta}{2}\right) \left(2^a \sin ^{a+1}\left(\frac{\theta}{2}\right)+\cos \left(\frac{1}{2} (a \theta-\pi  a+\theta+\pi )\right)\right)}{\left(1-2^a \sin ^a\left(\frac{\theta}{2}\right) \cos \left(\frac{1}{2} a (\pi -\theta)\right)\right)^2}\, ,
\end{equation}
so the sign of this term is positive for $\theta \in (0, \pi)$ if and only if 
\begin{equation}
\label{eq:C.2}
2^a \sin ^{a+1}\left(\frac{\theta}{2}\right)+\cos \left(\frac{1}{2} (a \theta-\pi  a+\theta+\pi )\right) \, >0\, ,
\end{equation}
which is equivalent to
\begin{equation}
(2 \cos(\gp)) ^{1+a} \,>\, 2\cos \left( \gp (1+a) \right)\, , 
\end{equation}
for $\gp \in (0, \pi/2)$. Note now that it suffices to show this inequality for $\gp \in (0, \pi/(2(1+a))]$, because otherwise the right-hand side is negative. So it suffices to show, with $b =1+a \in (1,2)$ and $h(\cdot):= 2 \cos(\cdot)$, that for $\gp \in (0, \pi]$ 
\begin{equation}
(h(\gp/b)) ^{b} \,>\, h\left( \gp  \right)\ \ \Longleftrightarrow \ \ 
\frac{  \log h(\gp/b)}{\gp /b}\, >\, \frac{  \log h(\gp)}{\gp }\, ,
\end{equation}
and the inequality on the right holds because $\partial _\gp   \log h(\gp)= -(\gp  \tan (\gp)+\log (2 \cos (\gp))/{t^2}
$ is negative ($\gp  \tan (\gp)+\log (2 \cos (\gp)$ is equal to $\log 2 >0$ for $\gp=0$ and its derivative is $\gp/ \cos(\gp))^2>0$).
This completes the proof that $f_2^{'}(\theta)>0$  for $\theta \in (0, \pi)$ and $a \in (0,1)$ because $\arctan(\cdot)$ is increasing. 

\smallskip

\begin{rem}
The $\theta$-derivative of the square root of the denominator 
is 
\begin{equation}
-2^{a-1} a \sin \left(\frac{1}{2} (a+1) (\pi -\theta)\right) \sin ^{a-1}\left(\frac{\theta}{2}\right)\,, 
\end{equation}
which is negative  for $\theta\in [0, \pi)$ and it is zero at $\theta=\pi$. So $\theta=\pi$ is the minimum 
of the square denominator which takes value  $1$ in $\theta=0$ and value $1-2^a<0$ in $\theta=\pi$.
Hence the denominator hits zero only in one point $\theta_a \in (0, \pi)$. At this point the expression 
for $f_2(\theta)$ would have a jump of $-\pi$ had we chosen $\arctan(\cdot)$ instead of $\arctan_0(\cdot)$.  
\end{rem}

\smallskip

We are then left with showing that 
\begin{equation}
\frac{f_1'(\theta)}{f_2'(\theta)}\, =\, 
%\frac{2^a \sin ^a\left(\frac{\theta}{2}\right) \cos \left(\frac{1}{2} a (\pi -\theta)+\theta\right)-\cos (\theta)+1}
%{2^a \sin ^a\left(\frac{\theta}{2}\right) \sin \left(\frac{1}{2} a (\pi -\theta)+\theta\right)-\sin (\theta)}\, >0
%\, ,
\frac{\sin \left(\frac{1}{2} (a \theta-\pi  a+\theta+\pi )\right)-2^a \cos \left(\frac{\theta}{2}\right) \sin ^a\left(\frac{\theta}{2}\right)}{2^a \sin ^{a+1}\left(\frac{\theta}{2}\right)+\cos \left(\frac{1}{2} (a \theta-\pi  a+\theta+\pi )\right)}\, ,
\end{equation}
for $\theta \in (0, \pi)$ and $a \in (0,1)$. By \eqref{eq:C.2} it suffices to show positivity of  the numerator and this is equivalent to showing
\begin{equation}
\sin ((1+a) \gp ) \, >\, 2^a \sin(\gp) \cos(\gp)
\ \ \Longleftrightarrow \ \ 2^b \sin ((2-b) \gp) \, > \, \sin (2 \gp)
\, ,
\end{equation}
for every $\gp \in (0, \pi/2)$ and $b= 1-a \in (0,1)$. If $(2-b) \gp\ge \pi/2$ the inequality holds because it holds even without the $2^b$ 
factor. So it suffices to focus on $\gp \in (0, \pi/(2(2-b)))$. We then make the change of variable $\psi=(2-b) \gp \in (0, \pi/2)$ and 
we boil down to the inequality
\begin{equation}
\sin(\psi) \, > \, 2^{-b} \sin \left( \frac{\psi} {1-(b/2)} \right)\,.
\end{equation} 
But $2^{-b} < 1-(b/2)$ for $b\in (0,1)$ so we are done if we can show the previous inequality with $2^{-b}$ replaced by $1-(b/2)$.
This amounts to showing that $\sin (\psi) > c \sin(\psi/c)$ for $c\in (1/2, 1)$ and $\psi \in (0, \pi/2)$: this last inequality holds even for $c\in (0,1)$ as one directly verifies. The proof of Lemma~\ref{th:appC} is therefore complete.
\end{proof}

\section{About numerics}

As pointed out in \cite{cf:KM}, in the restricted framework of  \eqref{eq:Kspecial} with
$\ga=1/2$ there is the explicit formula for 
$\bP(\tau_j=N)= (j/(2n -j)) 2^{-2n+j} C^n_{2n-j}$, with $C^k_n=n!/(k! (n-k)!)$. In general, one can
 obtain the coefficients
\begin{equation}
\cP_N\,:=\,(\bP(\tau_j=N))_{j=1,2, \ldots, N}\, ,
\end{equation} 
by an iterative procedure that
consists in building $\cP_{n+1}$ from $(\cP_{k})_{k=1, \ldots, n}$ via
\begin{equation}
\bP(\tau_{j+1}=n+1) \stackrel{j=1, \ldots, n}= \sum_{m=j}^n\bP(\tau_j=m)K(n+1-m)\,,
\end{equation}
and $\bP(\tau_{1}=n+1)=K(n+1)$,
starting from $\cP_1=( K(1))$. This way we can deal with arbitrary inter-arrival laws $K(\cdot)$ with $N$ up to $500$ 
with standard computers and moderate amount of time. This of course allows a large spectrum of numerical investigations even if the reachable $N$ are still rather small to really guess what the $N =\infty$ behavior could be (see for example Figure~\ref{fig:lac}).

%It is possibly more practical to work with 
%\begin{equation}
%\cP_N\,:=\,(\bP(\tau_j=N))_{j=0,1, \ldots, N}\, ,
%\end{equation} 
%and $\bP(\tau_0=N)= \gd_{0, N}$.

\begin{figure}[h]
\centering
\includegraphics[width=13 cm]{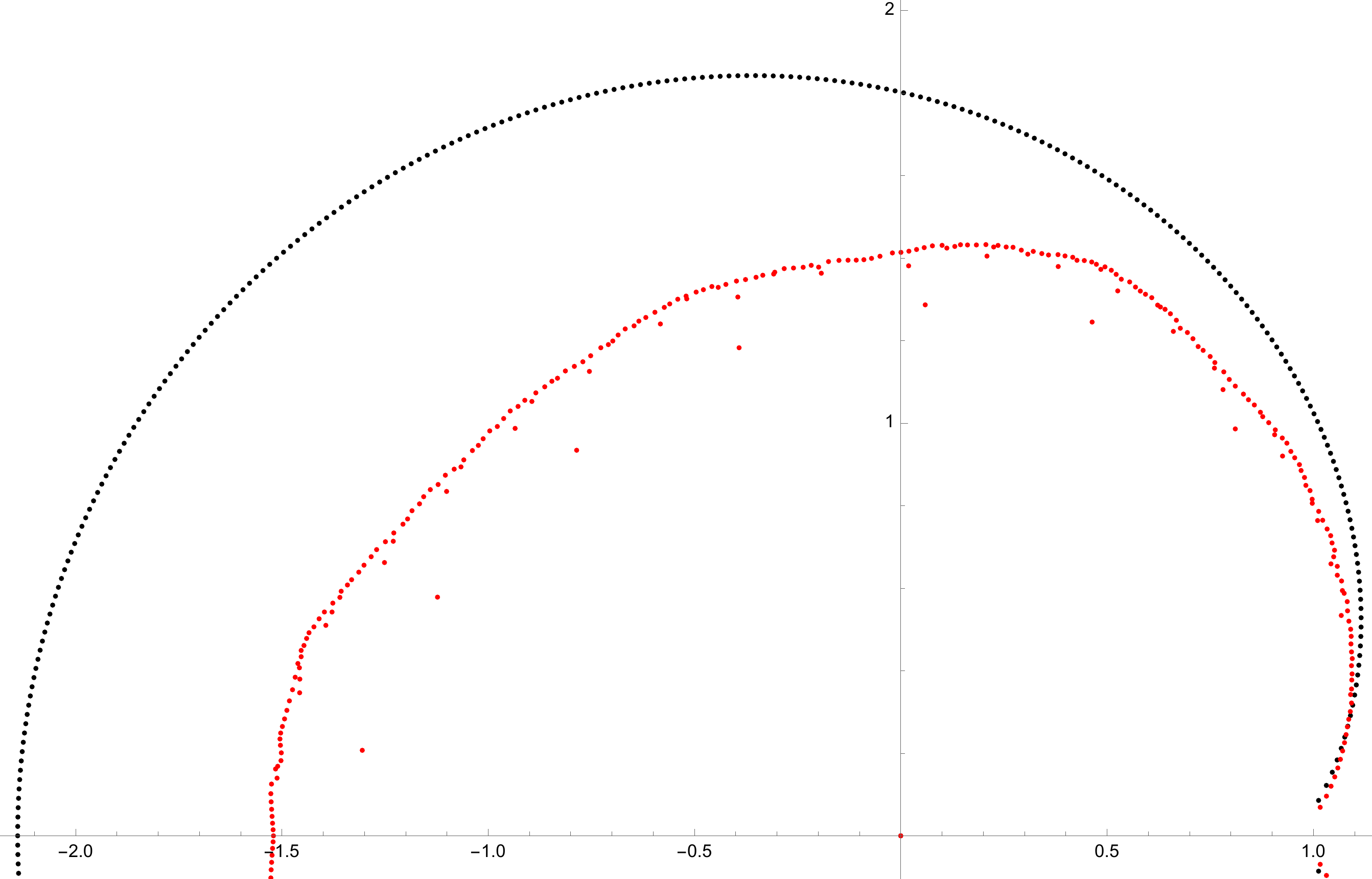}
\caption{\label{fig:lac} 
The black dots are the zeros of $P_{500}(w)= Z_{500, \Log w}$ with 
$K(n)=(K_{1}(n)+K_2(n))/2$ where $K_1(\cdot)$ is the inter-arrival law in \eqref{eq:Kspecial} with $\ga=1/2$ and 
$K_2(n)=1/(n^2 \zeta(2))$, where the normalization $\zeta(\cdot)$ is the Riemann $\zeta$ function. The red dots  instead are the zeros 
in the case in which $K_2(n)= 1/(n^2 \zeta(4))$ if $\sqrt{n}\in \bbN$ and $K_2(n)=0$ otherwise. We are therefore in the framework evacuated in Remark~\ref{rem:fabry}: the major effect of using a \emph{lacunary} distribution (albeit subleading!) is apparent, even if much larger values of $N$ would be needed to draw predictions from such a numerical observation (only $22$ entries of $K_2(\cdot)$ are non zero).
}
\end{figure}

\section*{Acknowledgements}

\begin{sloppypar}
	G.G. is very grateful to Bernard Derrida for several exchanges on the content of this work. We thank Romain Dujardin for pointing out \cite{cf:KN}, crucial for Lemma~\ref{th:uniform}.
	G.G.\ is partially supported by ANR–19–CE40–0023 (PERISTOCH).
	The work of R.L.G.\ was supported by the European Research Council (ERC) under the European Union's Horizon 2020 research and innovation programme (ERC CoG UniCoSM, grant agreement No.\ 724939 and ERC StG MaMBoQ, grant agreement No.\ 802901). 
\end{sloppypar}

\section*{Competing Interests}

The Authors declare they have no competing interests.


\begin{thebibliography}{99}

\bibitem{cf:A} L. V. Ahlfors, \emph{Complex analysis. An introduction to the theory of analytic functions of one complex variable}, Third edition, International Series in Pure and Applied Mathematics,  McGraw-Hill Book Co., 1978. 


\bibitem{cf:ACCMM}
K. S.~Alexander,  F.~Cesi, L.~Chayes, C.~Maes and
            F.~Martinelli, \emph{Convergence to equilibrium of random {I}sing models in the
              {G}riffiths phase},
   J. Statist. Phys.
 {\bf 92}
      (1998),
    337-351.


\bibitem{cf:BL}
Q.~Berger and H.~Lacoin, \emph{Pinning on a defect line: characterization of marginal disorder relevance and sharp asymptotics for the critical point shift}, J. Inst. Math. Jussieu {\bf 17} (2018), 305-346.


\bibitem{cf:CMT}
P. Caputo, F. Martinelli and F.L. Toninelli,
\emph{On the approach to equilibrium for a polymer with adsorption and repulsion},
Electron. J. Probab., {\bf 13} (2008), 213-258.

\bibitem{cf:CMM}
F.~Cesi, C.~Maes and F.~Martinelli, 
\emph{Relaxation of disordered magnets in the Griffiths' regime}, 
Comm. Math. Phys. {\bf 188} (1997),  135-173. 

\bibitem{cf:subdiff}
 W.~De Roeck,  F.~Huveneers and S.~Olla, \emph{Subdiffusion in one-dimensional Hamiltonian chains with sparse interactions}, J. Stat. Phys. {\bf 180} (2020),  678-698.

\bibitem{cf:D91}
B. Derrida, \emph{The zeroes of the partition function of the random energy model},  Physica A {\bf 177} (1991), 31-37. 

\bibitem{cf:DdSI} B. Derrida, L. de S\`eze and  C. Itzykson, \emph{Fractal structure of zeros in hierarchical models}, J. Stat. Phys. {\bf 33} (1983), 559-569.

\bibitem{cf:Dienes}
P.~Dienes, \emph{The Taylor series: an introduction to the theory of functions of a complex variable}, Dover Publications,  1957.

\bibitem{cf:Doney}
 R.~A.~Doney,  \emph{One-sided local large deviation and renewal theorems in the case of infinite mean}, Probab. Theory Related Fields {\bf 107} (1997), 451-465.
 
 \bibitem{cf:EN}
 N.~Enriquez and N.~Noiry,
\emph{A solvable class of renewal processes}, Elect. Comm. Probab. {\bf 25} (2020),
 paper no. 69, 14 pp..
 
 \bibitem{cf:Feller2} W.~Feller,
\emph{An introduction to probability theory and its applications},
Vol. II, Second edition, John Wiley \& Sons (1971).
 
 \bibitem{cf:zeros}
 H.~E.~Fettis, J.~C.~Caslin and K.~R.~Cramer, \emph{Complex zeros of the error function and of the complementary error function}, Math. Comp. {\bf 27} (1973), 401-407.

\bibitem{cf:Fisher}
M.~E.~Fisher, \emph{Walks, walls, wetting, and melting},
J. Statist. Phys. {\bf 34} (1984), 667-729.

\bibitem{cf:FS}
P.~Flajolet and R.~Sedgewick, \emph{Analytic combinatorics}, Cambridge University Press, Cambridge, 2009.

\bibitem{cf:FV}
S.~Friedli and Y.~Velenik, \emph{Statistical mechanics of lattice systems. A concrete mathematical introduction}, Cambridge University Press, Cambridge, 2018. 

\bibitem{cf:GB} G. Giacomin, {\sl Random polymer models}, 
Imperial College Press, World Scientific, 2007. 

\bibitem{cf:G08} G. Giacomin, \emph{Renewal convergence rates and correlation decay for homogeneous pinning models} Elect. J. Probab. {\bf 13}, 2008,  513--529.



\bibitem{cf:G} G. Giacomin, \emph{Disorder and critical phenomena through basic probability models},Lectures from the $40^{\mathrm th}$ Probability Summer School held in Saint-Flour, 2010. Lecture Notes in Mathematics {\bf 2025}, Springer, 2011.

\bibitem{cf:GTAlea}
G. Giacomin and F. L. Toninelli,
\emph{The localized phase of disordered copolymers with adsorption}, 
ALEA-Latin American Journal of Probability and Mathematical Statistics {\bf 1} (2006), 149-180.

\bibitem{cf:GK}
B.~N.~Gnedenko and A.~N~Kolmogorov, \emph{Limit distributions for sums of independent random variables}, Revised edition, Addison-Wesley Publishing Co., 1968.

\bibitem{cf:Griffiths}
R. Griffiths, \emph{Non-analytic behaviour above the critical point in a random Ising ferromagnet}, Phys. Rev. Lett. {\bf 23} (1969), 17-19.

\bibitem{cf:GZ}
A.~Guionnet and B.~Zegarlinski, \emph{Decay to equilibrium in random spin systems on a lattice},
Comm. Math. Phys. {\bf 181} (1996), 703-732.

\bibitem{cf:dH}
 F.~den Hollander, \emph{Random polymers}, Lectures from the $37^{\mathrm{th}}$ Probability Summer School held in Saint-Flour, 2007. Lecture Notes in Mathematics {\bf 1974}, Springer, 2009. 

\bibitem{cf:IPZ}
C. Itzykson, R. B.  Pearson and J.-B. Zuber, \emph{Distribution of zeros in Ising and gauge models}, Nuclear Phys. B {\bf 220} [FS8] (1983),   415-433. 

\bibitem{cf:LY} 
T.D.
Lee and C. N. Yang,
\emph{Statistical theory of equations of state and phase transitions. II. Lattice gas and Ising model},
Phys. Rev. (2) {\bf 87} (1952), 410-419. 

\bibitem{cf:KM}
Y. Kafri and D. Mukamel
\emph{Griffiths singularities in unbinding of strongly disordered polymers}, Phys. Rev. Lett. {\bf 91} (2003), 055502, 4 pp..

\bibitem{cf:KN}
L.~Kuipers and H.~Niederreiter, \emph{Uniform distribution of sequences},  Pure and Applied Mathematics,  John Wiley \& Sons, 1974.

\bibitem{cf:DLMF}
F. W. J. Olver, A. B. Olde Daalhuis, D. W. Lozier, B. I. Schneider, R. F. Boisvert, C. W. Clark, B. R. Miller, B. V. Saunders, H. S. Cohl, and M. A. McClain, (Editors)
  \emph{NIST digital library of mathematical functions}.
  http://dlmf.nist.gov/, Release 1.0.28 of 2020-09-15. 
  
  
\bibitem{cf:Ransford}  
   T.~Ransford, \emph{Potential theory in the complex plane}, London Mathematical Society Student Texts {\bf 28},  Cambridge University Press,  1995.
   
   \bibitem{cf:Ruelle}
D.~Ruelle, 
\emph{Characterization of Lee-Yang polynomials},  
Ann. of Math. (2) {\bf 171} (2010), 589-603.

\bibitem{cf:JS} J.~Sohier, \emph{Finite size scaling for homogeneous pinning models}, ALEA Lat. Am. J. Probab. Math. Stat. {\bf 6} (2009), 163-177.

%\bibitem{cf:Velenik} Y.~Velenik,  \emph{Localization and delocalization of random interfaces},  Probab. Surv. {\bf 3} (2006), 112-169. 

\bibitem{cf:Zolotarev} V.~M.~Zolotarev, \emph{One-dimensional stable distributions},  Translations of Mathematical Monographs  {\bf 65},  AMS, 1986.


\end{thebibliography}
\end{document}